\newcommand{\trho}{\widetilde{\rho}}
\newcommand{\fr}[1]{\mathfrak{#1}}
\newcommand{\qsp}[2]{\,\ensuremath{\raise.5ex\hbox{$#1$}\big\slash\raise-.5ex\hbox{$#2$}}} 
\newcommand{\orb}[1]{\mathcal{O}^{(#1)}}
\newcommand{\stab}[1]{\mathfrak{h}^{(#1)}}
\newcommand{\nor}[1]{\mathfrak{n}^{(#1)}{}}
\newcommand{\nablat}[1]{\mathfrak{N}^{(#1)}}
\theoremstyle{plain} \numberwithin{equation}{section}
\newtheorem{theorem}{Theorem}[section]
\newtheorem{corollary}[theorem]{Corollary}
\newtheorem{lemma}[theorem]{Lemma}
\newtheorem{proposition}[theorem]{Proposition}
\newtheorem{propdef}[theorem]{Proposition-Definition}
\theoremstyle{definition}
\newtheorem{definition}[theorem]{Definition}
\newtheorem{remark}[theorem]{Remark}
\newtheorem{ex}[theorem]{Example}
\title{K\"ahler fibrations in quantum information theory}
\author{Ivan Contreras}
\address{Department of Mathematics, Amherst College, Massachussets, USA.}
\email{icontreraspalacios@amherst.edu}
\author{Michele Schiavina}
\address{Institute for Theoretical Physics, ETH Z\"urich, Wolfgang Pauli strasse 27, 8092, Z\"urich, Switzerland}
\address{Department of Mathematics, ETH Z\"urich, R\"amistrasse 101, 8092, Z\"urich, Switzerland}
\email{micschia@phys.ethz.ch}
\begin{document}

\begin{abstract}
    We discuss the fibre bundle of co-adjoint orbits of compact Lie groups, and show how it admits a compatible K\"ahler structure. The case of the unitary group allows us to reformulate the geometric framework of quantum information theory. In particular, we show that the Fisher information tensor gives rise to a structure that is sufficiently close to a K\"ahler structure to generalise some classical result on co-adjoint orbits.
\end{abstract}
\maketitle

\tableofcontents

\section*{Introduction}
The geometric description of finite dimensional quantum mechanics is an application of the Lie theory of the unitary group that gives rise to a number of nontrivial insights on a fairly established area of symplectic geometry. As argued in \cite{Chat} and by the authors in \cite{CES,ES1,ES2}, the orbit theory for compact Lie groups provides a natural environment to reformulate and answer questions in quantum mechanics and information theory. 

The spaces of pure and mixed quantum states are linked to the (co-)adjoint orbits of the unitary group in the same way complex projective spaces are a model for pure states, i.e. projectors, and closed systems evolve unitarily,  so that conservative dynamics is phrased in terms of the (co-)adjoint action. The natural geometric structures that exist on such orbits have a  direct interpretation in quantum mechanics and information theory, e.g. the Berry phase is related to the symplectic structure on orbits. We argue that the reverse is also true, i.e. the symplectic and K\"ahler geometry on the orbits can be obtained by analyzing the behavior of physical systems.

In \cite{CES} we began the study of the quantum Fisher information tensor (FIT) - a multi-parameter generalisation of the Fisher information index \cite{ES1,ES2,Chat} - and we clarified its relationship with both the Konstant-Kirillov-Souriau symplectic form $\Omega^{(\eta_0)}_{KKS}$ on a coadjoint orbit $\orb{\eta_0}$, and with the quantum Fisher information metric. The definition of FIT is borrowed from physics and has no \emph{a priori} analogue in terms of abstract Lie theory and symplectic geometry. 

The motivation that lead to it was to find a way to properly define and compute the Fisher information index in full generality\footnote{Before only partial answers were known and only for very particular cases \cite{B-NG,Luati1,Luati2}.}, to then proceed to solve the Fisher information optimisation problem for quantum measurements. Since the new redefinition of the FIT is natural from the point of view of the orbit theory, we argue that it might also play a role in a more abstract geometric sense.

In this note we review the theory of symplectic fibrations and extend  this construction to a certain strict class of K\"ahler fibrations where the total space is also K\"ahler. This will allow us to describe the geometry of quantum states in terms of strong K\"ahler fibrations of coadjoint orbits, with respect to which the Fisher information tensor is compatible in a natural way. 

We prove that the FIT is a K\"ahler structure only on degenerate orbits, but it is \emph{not far} from one. This suggests the definition of a more general geometric structure than  K\"ahler - we call it a Fisher-K\"ahler structure - where the compatibility between a Riemannian metric and a symplectic form is ensured only up to diagonal rescalings; one could say in a \emph{multi-projective} fashion. On degenerate orbits, i.e. complex projective spaces, it reduces to (a multiple of) the Fubini study metric; we think of the FIT as its generalisation in this sense.

This structure is likely relevant in the context of geometric quantisation, where the integrality of symplectic structures is crucial and where one might thus be able to distinguish between structures by providing another nontrivial invariant. We argue that the alternative symplectic form coming from the Fisher tensor contruction might give rise to different pre-quantisations, the study of which - although not be presented here - might lead to new insights for the geometric quantisation of co-adjoint orbits.

The paper is organised as follows. In Section \ref{Sect:sympkahl} we define the relevant geometric background we will use throughout, and introduce the notion of Symplectic and K\"ahler fibrations. We will use a stricter version of K\"ahler fibration than done elsewhere, which will be directly adaptable to the main case addressed in this paper. In this section we also define the notion of Fisher structure as a scaling-dependent generalisation of complex structures and analyse its linear model.

In Section \ref{Section:Fibration} we turn to the theory of co-adjoint orbit of compact Lie groups, showing how they provide a natural example of strong K\"ahler fibrations. We will introduce the main constructions that will be needed in the rest of the paper.

Section \ref{Section:Orbits} is devoted to the case $G=U(n)$, which will model quantum mechanics and the spaces of states. We will show how fibrations of co-adjoint orbits can be seen as a nesting of spaces of mixed states and how the formalism allows us to understand the Fisher information tensor on such fibrations.

\subsection*{Acknowledgments}
M.S. acknowledges partial support by SNF grant No. P2ZHP2$\_$164999.  I.C.
acknowledges partial support by SNF Grant P300P2$\-$154552. We thank the anonymous referee for valuable comments that improved the quality and presentation of the paper, in particular, the distinction between the real and complexified version of the Lie algebra in Theorem \ref{TH:compatiblecomplexstructure}.

\section{Generalities of symplectic and K\"ahler fibrations}\label{Sect:sympkahl}
In this section we review the construction of fibre bundles and connections in the symplectic and K\"ahler framework. The construction of symplectic fibrations reviewed in here can be found, for instance, in \cite{Guillemin}. Some of the definitions and results regarding K\"ahler fibrations are borrowed from \cite{Antonelli}, and some others are new. 

\subsection{Symplectic and K\"ahler Fibrations}
\begin{definition}(Symplectic fibre bundle)\label{symfib}. A fibre bundle $\pi:M\to B$ is a \emph{symplectic fibre bundle} if the following conditions are satisfied:
\begin{enumerate}
\item The fibres are symplectic manifolds $(F_b,\omega_{F_b}),\ \forall b\in B$.
\item The transition maps:
$$\phi_{\alpha,\beta}: (\mathcal U_{\alpha}\cap \mathcal U_{\beta})\times F \to (\mathcal U_{\alpha}\cap \mathcal U_{\beta})\times F$$
are symplectomorphisms of the fibers.
\end{enumerate}
\end{definition}
A natural question to ask is whether there exists a two-form $\omega$ on the total space $M$ that restricts to the symplectic forms on fibres. This suggests the following
\begin{definition}\label{compconn} A $2$-form $\omega \in \Omega^2(M)$ on a symplectic fibre bundle $F\hookrightarrow M\to B$  is called \emph{fibre-compatible} if the following condition is satisfied:
$$\omega\vert_F\cong \iota_F^*(\omega)=\omega_F,$$
where $\iota_F$ denotes the inclusion of any fibre $F$ into $M$. 
\end{definition}

\begin{definition}(Strong symplectic fibre bundle).
A symplectic fibre bundle is called \emph{strong} if it admits a fibre-compatible form that is also \emph{symplectic}.
\end{definition}

Given a general section of the tensor bundle of the total space of a fibration with connection $(F,M,B,\nabla)$, we will ask that it be compatible with the horizontal distribution defined by the connection. This will turn out relevant for the main constructions in this paper.

\begin{definition}[$\nabla$-compatibility of tensors]\label{nablaCompatible}
Let $(F,M,B, \nabla)$
 be a fibre bundle with connection $\nabla$ and let  $A \in \mathbb{T}^{(k,l)}M:=\Gamma(TM^{\otimes k} \otimes T^*M^{\otimes l})$ be a $(k,l)$-tensor on $M$. Then, $A$ is said to be \emph{$\nabla$-compatible} if $A \vert_{F}$ is a tensor on $\mathbb{T}^{(k,l)}F$ and the connection $\nabla$ induces a splitting of $A$:
 \[A= A\vert_F \oplus A\vert_\nabla\]
 where $A\vert_\nabla$ is the tensor restricted to the horizontal distribution induced by $\nabla$.
 \em\end{definition}

\begin{definition}
Let $M\to B$ be a fiber bundle with typical fiber $F$. If $M$ admits a complex structure $J$, it will be called \emph{fibre-compatible} if its restriction to the fiber $J\vert_F$ is a complex structure on the fibres. 
\end{definition}

\begin{lemma}\label{compatibleComplexlemma}
Let $J\in\mathbb{T}^{(1,1)}(M)$ be a complex structure on the total space $M$ of a fiber bundle with connection $(F,M,B,\nabla)$. $J$ is fibre-compatible iff it is $\nabla$-compatible and $\mathrm{dim}(B)=2p$. Moreover, $J\vert_\nabla$ is a linear automorphism of the horizontal distribution defined by $\nabla$, such that $J\vert_\nabla^2=-\mathbb{I}$.
\end{lemma}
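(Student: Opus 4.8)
The plan is to work entirely at the level of the splitting $TM = VM \oplus HM$ induced by the connection, where $VM = \ker(d\pi) = TF$ is the vertical distribution and $HM \cong \pi^*TB$ is the horizontal one. In this language the two compatibility notions become statements about how the endomorphism $J\colon TM\to TM$ interacts with this decomposition: fibre-compatibility asks that $J(VM)\subseteq VM$ with $(J\vert_F)^2 = -\mathbb{I}$ on $VM$, while $\nabla$-compatibility (Definition \ref{nablaCompatible}) asks that $J$ be block-diagonal, i.e. that it preserve both summands so that $J = J\vert_F \oplus J\vert_\nabla$. I would first record the purely linear-algebraic fact that, because $J^2 = -\mathbb{I}$ on all of $TM$, any $J$-invariant subspace $W$ satisfies $(J\vert_W)^2 = -\mathbb{I}$ and is therefore even-dimensional; this single observation drives both the dimension count and the \emph{Moreover} clause.

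For the easy implication, I would assume $J$ is $\nabla$-compatible. Block-diagonality together with $J^2 = -\mathbb{I}$ immediately gives $(J\vert_F)^2 = -\mathbb{I}$ on $VM$ and $(J\vert_\nabla)^2 = -\mathbb{I}$ on $HM$; integrability of $J$ on $M$ restricts to the fibres, which are $J$-invariant and hence complex submanifolds, so $J\vert_F$ is a genuine complex structure and $J$ is fibre-compatible. The same computation shows that $J\vert_\nabla$ is a linear automorphism of $HM$ squaring to $-\mathbb{I}$, which proves the final assertion and, since $HM \cong \pi^*TB$ must then have even rank, forces $\dim(B) = 2p$.

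The converse is where the real work lies. Assuming fibre-compatibility I obtain $J(VM)\subseteq VM$, and since $J$ is invertible this is an equality $J(VM)=VM$. Thus $J$ descends to an endomorphism $\bar{J}$ of the quotient $TM/VM \cong \pi^*TB$ with $\bar{J}^2 = -\mathbb{I}$, which already yields $\dim(B) = 2p$. The delicate point, and the main obstacle, is to upgrade this \emph{descent to the quotient} into a genuine block-diagonal splitting, i.e. to show that the horizontal distribution $HM$ is itself $J$-invariant rather than merely mapping isomorphically onto $TM/VM$; without this one only obtains the associated-graded statement and not the direct-sum decomposition required by Definition \ref{nablaCompatible}. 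The clean way to close this gap is to use that the relevant connection is metric for a $J$-compatible (Hermitian) metric $g$, so that $HM = (VM)^{\perp_g}$: since $J$ is a $g$-isometry and $J(VM)=VM$, it preserves the orthogonal complement, giving $J(HM)=HM$. One then sets $J\vert_\nabla := J\vert_{HM}$, reads off $J\vert_\nabla^2 = -\mathbb{I}$ as before, and the splitting $J = J\vert_F \oplus J\vert_\nabla$ follows. I expect that isolating exactly which hypothesis on $\nabla$ guarantees $J$-invariance of $HM$ (metric compatibility, or equivalently that the horizontal lift is holomorphic) is the subtle step on which the statement is implicitly relying.
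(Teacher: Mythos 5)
Your treatment of the implication from $\nabla$-compatibility to fibre-compatibility is the same computation as the paper's, so no comment there. The divergence is in the other implication, and there your diagnosis is more careful than the paper's own proof. The paper writes $J$, relative to the splitting $TM\simeq \mathrm{Ver}\oplus\mathrm{Hor}_\nabla$, as
\begin{equation*}
J=\left(\begin{array}{cc} J\vert_F & K \\ -K & Z\end{array}\right)
\end{equation*}
and concludes $K=0$ from $J^2=-\mathbb{I}$ and $(J\vert_F)^2=-\mathbb{I}$. Placing $-K$ (rather than an unrelated block $L$) in the lower-left corner is exactly the hidden hypothesis you flag at the end of your proposal: it presupposes that $J$ is skew-adjoint for a metric in which $\mathrm{Ver}$ and $\mathrm{Hor}_\nabla$ are orthogonal, and nothing in Definition \ref{nablaCompatible} or in the definition of fibre-compatibility supplies such a metric for an arbitrary connection. (Even granting the skew block form, the equation only yields $KK^{T}=0$, and one needs positive-definiteness to conclude $K=0$.) Without that hypothesis the implication is false: on the trivial bundle $\mathbb{R}^{2}\times\mathbb{R}^{2}\to\mathbb{R}^{2}$ with the flat connection, the constant (hence integrable) complex structure
\begin{equation*}
J=\left(\begin{array}{cc} J_{0} & J_{0} \\ 0 & -J_{0}\end{array}\right),\qquad J_{0}=\left(\begin{array}{cc} 0 & -1 \\ 1 & 0\end{array}\right),
\end{equation*}
squares to $-\mathbb{I}$, preserves the vertical distribution and restricts to $J_{0}$ on each fibre, so it is fibre-compatible and the base is even-dimensional; yet its upper-right block is nonzero, so it is not $\nabla$-compatible.

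So your final paragraph isolates the genuine content: fibre-compatibility only makes $J$ descend to $TM/\mathrm{Ver}$, and promoting this to a block-diagonal splitting requires the horizontal space to be an orthogonal complement of $\mathrm{Ver}$ with respect to a structure that $J$ preserves. Your closing argument ($J$ a $g$-isometry, $\mathrm{Hor}=\mathrm{Ver}^{\perp_g}$, $J\,\mathrm{Ver}=\mathrm{Ver}$ implies $J\,\mathrm{Hor}=\mathrm{Hor}$) is correct, and the needed hypothesis is in fact available everywhere the paper invokes this direction of the lemma: in Theorem \ref{Kahlertheorem} the connection is the symplectic connection of Theorem \ref{SymConn}, so $\mathrm{Hor}_{\nabla_\omega}=\mathrm{Ver}^{\omega}$, and $J$ preserves $\omega$ and $\mathrm{Ver}$, whence $J(\mathrm{Ver}^{\omega})=(J\,\mathrm{Ver})^{\omega}=\mathrm{Ver}^{\omega}$ --- your argument verbatim with $\omega$ in place of $g$. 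The lemma should be restated with this extra hypothesis (or for $g$- or $\omega$-orthogonal connections); as written, only the implication from $\nabla$-compatibility to fibre-compatibility --- the one used in Theorem \ref{TH:kaehlerfibration} --- holds unconditionally, and your write-up is the one that makes this visible.
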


\begin{proof}
Assume $J$ fibre compatible. The connection $\nabla$ defines a splitting of $TM\simeq TF \oplus \mathrm{Hor}_\nabla$. Above each point $m\in M$ one can write $J$ in block form as
$$
J=\left(\begin{array}{cc}
J\vert_F & K \\
-K & Z
\end{array}\right)$$
but since $J^2=-\mathbb{I}$ and $J_F\vert_F^2 = -\mathbb{I}$ we conclude that necessarily $K=0$ and $Z^2 = - \mathbb{I}$.

Let now $J$ be $\nabla$-compatible, i.e. $J$ is block diagonal $J=\mathrm{diag}\{J\vert_F, J\vert_\nabla\}$. Then from $J^2=-\mathbb{I}$, knowing that $B$ (and therefore $F$) is even-dimensional we immediately prove the statement.
\end{proof}

Now we recall the notion of \emph{symplectic connection} \cite{Guillemin}, which encompasses a more restrictive compatibility condition between the symplectic structure on fibres and the parallel transport along the connection on the fibre bundle.

\begin{definition}(Symplectic connection). Given a connection $\nabla$ on the symplectic fibre bundle $F\hookrightarrow M\twoheadrightarrow B$, i.e. a splitting
$$TM\simeq {\mbox{Ver}}\oplus \mbox{Hor}_{\nabla},$$
it will be called \emph{symplectic} if the parallel transport along $\nabla$ of every loop $\gamma: S^1\to B$ is a symplectomorphism of the fibers. In other words: $\mbox{Hol}_{\nabla}(\gamma)_b$ is a subgroup of the symplectomorphism group $\mbox{Symp}(F_b)$.
\end{definition}

The following definition and subsequent theorems (see \cite{Guillemin} for more details) clarify this picture.

\begin{definition}\label{omegaCompatible}  Given a symplectic fibre bundle $M\to B$ with fibre-compatible two-form $\omega\in\Omega^2(M)$, a connection $\nabla$ is called $\omega$\emph{-compatible} if the following condition holds:
$$\mbox{Hor}_\nabla= \mbox{Ver}^{\omega},$$
where $\mbox{Ver}^{\omega}$ denotes the symplectic orthogonal space to $\mbox{Ver}$.
\end{definition} 

The following observation connects Definitions \ref{nablaCompatible} and \ref{omegaCompatible}.

\begin{lemma}
Let $\omega$ be a fibre-compatible two-form on the symplectic fibre bundle $M\to B$. A connection $\nabla$ is $\omega$-compatible if and only if $\omega$ is $\nabla$-compatible.
\end{lemma}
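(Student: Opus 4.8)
The plan is to extract the only substantial piece of information hidden in the hypothesis — that fibre-compatibility forces $\omega\vert_{\mbox{Ver}}=\omega_F$ to be nondegenerate on each fibre — and then reduce both implications to a single pointwise statement in linear algebra about the symplectic orthogonal. I would fix a point $m\in M$, work in $W=T_mM$ with distinguished subspace $\mbox{Ver}=\mbox{Ver}_m$, and treat $\omega$ as a (possibly degenerate) antisymmetric bilinear form on $W$ whose restriction to $\mbox{Ver}$ is nondegenerate. Observe first that for a $2$-form the decomposition $\omega=\omega\vert_F\oplus\omega\vert_\nabla$ of Definition \ref{nablaCompatible} is nothing but the vanishing of the mixed blocks, i.e. $\omega(V,H)=0$ for every $V\in\mbox{Ver}$ and $H\in\mbox{Hor}_\nabla$, with $\omega\vert_F=\iota_F^*\omega=\omega_F$ supplied by fibre-compatibility.

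The central step is to establish the pointwise splitting $W=\mbox{Ver}\oplus\mbox{Ver}^\omega$. The intersection $\mbox{Ver}\cap\mbox{Ver}^\omega$ vanishes, since any $V$ in it would satisfy $\omega(V,\cdot)=0$ on all of $\mbox{Ver}$, forcing $V=0$ by nondegeneracy of $\omega\vert_{\mbox{Ver}}$. For spanning, given $X\in W$ the functional $u\mapsto\omega(X,u)$ on $\mbox{Ver}$ is represented, through the isomorphism $\mbox{Ver}\to\mbox{Ver}^*$ induced by the nondegenerate $\omega\vert_{\mbox{Ver}}$, by a unique $V_X\in\mbox{Ver}$; then $X-V_X\in\mbox{Ver}^\omega$, so $X\in\mbox{Ver}+\mbox{Ver}^\omega$. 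Consequently $\dim\mbox{Ver}^\omega=\dim W-\dim\mbox{Ver}=\dim B=\dim\mbox{Hor}_\nabla$, and I would stress that this count holds even though $\omega$ need not be symplectic on the whole of $M$.

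The two implications then follow formally. For the forward direction, $\omega$-compatibility means $\mbox{Hor}_\nabla=\mbox{Ver}^\omega$; by the very definition of the symplectic orthogonal, $\omega(H,V)=0$ for all $H\in\mbox{Ver}^\omega$, $V\in\mbox{Ver}$, so the mixed blocks vanish and, together with $\iota_F^*\omega=\omega_F$, this is precisely the $\nabla$-compatibility of $\omega$. Conversely, $\nabla$-compatibility says the mixed blocks vanish, i.e. $\omega(V,H)=0$ for all $V\in\mbox{Ver}$, $H\in\mbox{Hor}_\nabla$, which is exactly the inclusion $\mbox{Hor}_\nabla\subseteq\mbox{Ver}^\omega$; by the dimension count of the previous step this inclusion must be an equality, giving $\mbox{Hor}_\nabla=\mbox{Ver}^\omega$.

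I expect the only genuine obstacle to be this reverse implication, where the inclusion $\mbox{Hor}_\nabla\subseteq\mbox{Ver}^\omega$ has to be upgraded to an equality: the forward direction and the block reformulation are purely definitional, and all the content sits in the dimension count. The subtlety worth flagging is that this count does \emph{not} require $\omega$ to be a symplectic form on $M$, but only its nondegeneracy along the vertical distribution, which fibre-compatibility provides for free. Finally, since everything is pointwise and smooth in $m$, the fibrewise linear-algebraic decomposition assembles into the claimed equality of distributions.
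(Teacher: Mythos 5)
Your proof is correct and follows essentially the same route as the paper's: the forward implication is definitional, and the reverse implication uses the nondegeneracy of $\omega\vert_{\mathrm{Ver}}$ supplied by fibre-compatibility to pass from the inclusion $\mathrm{Hor}_\nabla\subseteq\mathrm{Ver}^\omega$ to an equality. If anything, you are more explicit than the paper, which asserts this last step without spelling out the pointwise splitting $T_mM=\mathrm{Ver}\oplus\mathrm{Ver}^\omega$ and the resulting dimension count $\dim\mathrm{Ver}^\omega=\dim T_mM-\dim\mathrm{Ver}=\dim\mathrm{Hor}_\nabla$; your argument supplies exactly that detail.
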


\begin{proof}
Consider an $\omega$-compatible connection $\nabla$, i.e. such that $\mathrm{Hor}_\nabla=\mathrm{Ver}^\omega$. Then $\omega$ naturally splits as $\omega\vert_F \oplus \omega\vert_{\nabla}$ since $\omega(v,w)=0$ whenever $v$ is vertical and $w$ horizontal.

On the other hand, if $\omega$ is $\nabla$-compatible it splits, and every vertical vector field will annihilate all horizontal ones. Since $\omega$ is fibre compatible, it is nondegenerate on $\mathrm{Ver}$, showing $\omega$-orthogonality of the horizontal and vertical subspaces, and thus $\omega$-compatibility of the connection.
\end{proof}

We have the following fundamental result relating symplectic fibrations and symplectic connections.

\begin{theorem}\cite{Guillemin}\label{SymConn}
Let $\pi: M\to B$ be a symplectic fibration.
\begin{enumerate}
\item If it comes equipped with a fibre-compatible form $\omega$, then this defines an $\omega$-compatible \emph{symplectic} connection $\nabla_{\omega}$ if and only if $\iota_{v_1\wedge v_2}d\omega =0$ for any two vertical vector fields $v_i\in\Gamma(\mathrm{Ver})$. 
\item If the fibres are compact, connected and simply connected and if it comes equipped with a symplectic connection $\nabla$, then there exists a unique fibre compatible, $\nabla$-compatible closed  two-form $\omega_\nabla$ that satisfies the condition
\begin{equation}
    \pi_*\omega_\nabla^{\mathrm{dim}(F)/2+1}=0.
\end{equation}
Moreover, any fibre-compatible, $\nabla$-compatible closed form $\widetilde{\omega}$ is of the form $\widetilde{\omega} = \omega_\nabla + \pi^*\omega_B$ for a given closed form $\omega_B\in\Omega^2_{cl}(B)$.
\end{enumerate}

\end{theorem}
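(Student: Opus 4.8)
For part (1) the plan is to reduce the symplectic-connection property to a pointwise identity via Cartan calculus. First I would check that $\mathrm{Hor}_\nabla:=\mathrm{Ver}^\omega$ really defines a connection: fibre-compatibility makes $\omega$ nondegenerate on $\mathrm{Ver}$, so $\mathrm{Ver}\cap\mathrm{Ver}^\omega=0$ and a dimension count gives $TM=\mathrm{Ver}\oplus\mathrm{Ver}^\omega$. The connection is symplectic exactly when its parallel transport preserves the fibre forms, which infinitesimally means $(L_{X^h}\omega)(v_1,v_2)=0$ for every horizontal lift $X^h$ and all vertical $v_1,v_2$. Writing $L_{X^h}\omega=d\,\iota_{X^h}\omega+\iota_{X^h}d\omega$ and noting that $\iota_{X^h}\omega$ annihilates $\mathrm{Ver}$ (since $\mathrm{Hor}_\nabla=\mathrm{Ver}^\omega$) while $\mathrm{Ver}$ is integrable, the exact term contributes nothing on vertical pairs and one is left with $(L_{X^h}\omega)(v_1,v_2)=d\omega(X^h,v_1,v_2)$. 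Because $\iota_F^*\omega=\omega_F$ is closed, $d\omega$ already vanishes on triples of vertical vectors, so the vanishing of $d\omega(X^h,v_1,v_2)$ over all horizontal $X^h$ is equivalent to $\iota_{v_1\wedge v_2}d\omega=0$ as a one-form on all of $TM$, which is the stated criterion.

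For part (2) I would first dispatch the structural statements, since they are independent of the (harder) existence step. If $\widetilde\omega$ and $\omega_\nabla$ are both fibre-compatible and $\nabla$-compatible, their difference $\delta=\widetilde\omega-\omega_\nabla$ has vanishing vertical block and no mixed components, hence $\iota_v\delta=0$ for all vertical $v$; if moreover both are closed, Cartan's formula gives $L_v\delta=\iota_v\,d\delta+d\,\iota_v\delta=0$, so $\delta$ is horizontal and fibre-invariant, thus basic: $\delta=\pi^*\omega_B$ for a unique $\omega_B\in\Omega^2_{cl}(B)$, using connectedness of the fibres. This proves the final ``moreover'' clause and shows that the closed fibre-compatible, $\nabla$-compatible forms constitute an affine space modelled on $\pi^*\Omega^2_{cl}(B)$. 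Within this affine family the normalisation singles out a unique representative: for any member $\tau$, the projection formula together with a degree count (only the $k=0,1$ terms survive fibre integration) give $\pi_*(\tau+\pi^*\gamma)^{n+1}=\pi_*\tau^{n+1}+(n+1)\,c\,\gamma$, with $n=\mathrm{dim}(F)/2$ and $c=\int_{F_b}\omega_{F_b}^{\,n}>0$ the fibre symplectic volume, a nonzero constant by compactness and the symplectic connection. Hence there is exactly one $\gamma$ with $\pi_*(\tau+\pi^*\gamma)^{n+1}=0$, yielding both existence and uniqueness of the normalised $\omega_\nabla$ --- provided one closed fibre-compatible, $\nabla$-compatible form exists at all.

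Producing that one closed form is the crux, and I would use the minimal coupling construction. I would seek $\omega_\nabla$ block-diagonal with respect to $\nabla$, with vertical block the prescribed $\omega_{F_b}$ and horizontal block a two-form $\Theta$ manufactured from the curvature $R(X,Y)=[X^h,Y^h]-[X,Y]^h$, which is vertical. Here the hypotheses enter decisively: the symplectic-connection condition forces $R(X,Y)$ to be a symplectic vector field on each fibre, and simple-connectivity ($H^1(F_b)=0$) upgrades ``symplectic'' to ``Hamiltonian'', yielding globally defined potentials $H_{X,Y}$ with $\iota_{R(X,Y)}\omega_{F_b}=dH_{X,Y}$ (finite and normalisable by fibre averages thanks to compactness). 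Setting $\Theta(X^h,Y^h)=-H_{X,Y}$ produces a fibre-compatible, automatically $\nabla$-compatible two-form, and its closedness reduces to a Bianchi-type identity relating $dH_{X,Y}$ to $R$ and to the parallel-transport Hamiltonians. I expect this closedness verification to be the main obstacle: it is the only place where the full force of the symplectic connection, rather than a mere splitting, is used, and it is precisely the computation that the cohomological hypotheses on the fibre are designed to make possible.
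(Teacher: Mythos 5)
Your proposal matches the paper's (sketched) proof wherever the paper actually supplies an argument, and goes beyond it where the paper defers to the literature. For part (1) both you and the paper set $\mathrm{Hor}_{\nabla}:=\mathrm{Ver}^{\omega}$ and reduce the symplectic-connection property to $\iota_{v_1\wedge v_2}d\omega=0$; you merely make explicit the Cartan-calculus step $L_{X^h}\omega=d\,\iota_{X^h}\omega+\iota_{X^h}d\omega$, with the exact term killed on vertical pairs by integrability of $\mathrm{Ver}$ and $\omega$-orthogonality, which the paper compresses into one sentence. Your proof of the ``moreover'' clause is the paper's verbatim: $\iota_v\delta=0$ on the difference, closedness gives $L_v\delta=0$, hence $\delta$ is basic (you rightly add that connectedness of the fibres is what makes $\omega_B$ well defined). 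The genuine divergence is existence and uniqueness of the normalised $\omega_{\nabla}$, which the paper does not prove at all but cites from Guillemin--Lerman--Sternberg. Your uniqueness argument is correct and essentially the one in that reference: the degree count showing only $k=0,1$ survive fibre integration gives $\pi_*(\tau+\pi^*\gamma)^{n+1}=\pi_*\tau^{n+1}+(n+1)c\,\gamma$, so exactly one $\gamma$ achieves the normalisation; two small points you should add are that $c=\pi_*\tau^{n}$ is constant because parallel transport along the symplectic connection preserves fibre volume, and that the resulting $\gamma=-\tfrac{1}{(n+1)c}\pi_*\tau^{n+1}$ is automatically closed since fibre integration over closed compact fibres commutes with $d$. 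Your existence sketch is the standard minimal-coupling construction (vertical curvature, $H^1(F_b)=0$ upgrading symplectic to Hamiltonian, fibre-average normalisation of the potentials, horizontal block $\Theta(X^h,Y^h)=-H_{X,Y}$), and you honestly flag that the closedness verification is left open; that verification, which rests on the Bianchi identity for the curvature together with the equivariance of the normalised Hamiltonians, is precisely the step the paper also outsources, so your attempt is incomplete exactly where the paper itself declines to give details, and complete everywhere the paper's own proof is.
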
 

\begin{proof}[Sketch of Proof]We have that
\begin{enumerate}
    \item Define the horizontal subbundle $\mathrm{Hor}\subset TM$ as the symplectic orthogonal of the vertical subbundle $\mathrm{Ver}\subset TM$ with respect to the fibre-compatible form $\omega$, namely $\mathrm{Hor}:= \mathrm{Ver}^\omega$.  Then one shows that a Horizontal lift $\hat w$ of any vector field on $B$ acts by symplectomorphism on $\omega$ modulo $B$ if and only if $\iota_{v_1\wedge v_2} d\omega=0$ for the $v_i$'s vertical vector fields.
    \item For the proof of existence and uniqueness of $\omega_{\nabla}$ we refer to \cite{Guillemin}. Assume that $\widetilde{\omega}$ is a closed fiber-compatible and $\nabla$-compatible 2-form on $M$. Then contraction with any vertical vector annihilates the difference $\widetilde{\omega}-\omega_\nabla$ for they coincide on the fibers: $\iota_v(\widetilde{\omega}-\omega_\nabla)=0$, and since they are both closed we have $L_v (\widetilde{\omega}-\omega_\nabla)=0$. This implies that the difference of two fiber-compatible, $\nabla$-compatible closed forms must be basic, i.e. $\widetilde{\omega}-\omega_\nabla= \pi^* \omega_B$.
\end{enumerate}
\end{proof}

\begin{corollary}
Every symplectic fibre bundle admitting a closed fibre compatible form also admits a symplectic connection. In particular when the fibre bundle is strongly symplectic.
\end{corollary}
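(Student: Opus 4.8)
The plan is to read this off directly from the first part of Theorem \ref{SymConn}, observing that the closedness hypothesis makes its characterising condition hold for free. First I would fix the closed fibre-compatible form $\omega \in \Omega^2(M)$ and exploit fibre-compatibility: by definition $\iota_F^*\omega = \omega_F$ is symplectic on each fibre, so $\omega$ is nondegenerate when restricted to the vertical sub-bundle $\mathrm{Ver}$. A pointwise linear-algebra argument then guarantees that the symplectic orthogonal $\mathrm{Ver}^\omega$ is a genuine complement to $\mathrm{Ver}$ inside $TM$: nondegeneracy of $\omega\vert_{\mathrm{Ver}}$ forces $\mathrm{Ver}\cap\mathrm{Ver}^\omega=0$, and the map $v\mapsto \omega(v,\cdot)\vert_{\mathrm{Ver}}$ is onto $(\mathrm{Ver})^*$, so a dimension count yields $TM \simeq \mathrm{Ver}\oplus\mathrm{Ver}^\omega$. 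Hence the prescription $\mathrm{Hor}_{\nabla_\omega}:=\mathrm{Ver}^\omega$ defines a bona fide connection $\nabla_\omega$, which is $\omega$-compatible in the sense of Definition \ref{omegaCompatible}.

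Next I would invoke the closedness of $\omega$. Since $d\omega=0$ identically, the contraction $\iota_{v_1\wedge v_2}d\omega$ vanishes for every pair of vector fields, and in particular for any two vertical fields $v_i\in\Gamma(\mathrm{Ver})$. This is exactly the criterion appearing in Theorem \ref{SymConn}(1), so the $\omega$-compatible connection $\nabla_\omega$ built above is in fact symplectic, which proves the first assertion. For the \emph{in particular} clause, I would observe that a strongly symplectic fibre bundle admits, by definition, a fibre-compatible form that is itself symplectic; being symplectic it is a fortiori closed, hence qualifies as a closed fibre-compatible form and the preceding argument applies verbatim.

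I do not anticipate any genuine obstacle here, since the statement is essentially a specialisation of an already-established theorem. The only step demanding a moment of care is checking that $\mathrm{Ver}^\omega$ is a legitimate complement—and therefore does define a connection rather than merely a distribution of the wrong rank—but this is secured by fibre-compatibility alone, independently of whether $\omega$ is globally nondegenerate on $M$. Everything beyond that reduces to the trivial vanishing $d\omega=0 \Rightarrow \iota_{v_1\wedge v_2}d\omega=0$ and a bookkeeping invocation of Theorem \ref{SymConn}.
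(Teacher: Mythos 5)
Your proof is correct and takes essentially the same route as the paper: the paper's own proof is precisely the one-line observation that $d\omega=0$ makes the criterion $\iota_{v_1\wedge v_2}d\omega=0$ of Theorem \ref{SymConn}(1) hold automatically, exactly as you argue. The extra details you supply—the pointwise linear-algebra check that $\mathrm{Ver}^\omega$ is a genuine complement to $\mathrm{Ver}$, and the remark that a symplectic fibre-compatible form is in particular closed—are already contained in the cited theorem's proof sketch and in the definition of a strong symplectic fibre bundle, so they simply make explicit what the paper leaves implicit.
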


\begin{proof}
This is a direct consequence of condition $\iota_{v_1\wedge v_2}d\omega = 0$, ensured by the closedness of $\omega$.
\end{proof}

\begin{remark}
Notice that to construct a fibre-compatible form we may use partitions of unity on real smooth manifolds, but it fails in the complex setting.
\end{remark}

Let us recall now some notions from K\"ahler and Hermitian geometry.

\begin{definition}[Hermitian and K\"ahler manifolds]
Let $M$ be a complex manifold. If $M$ admits an Hermitian structure $h$ on its holomorphic tangent bundle, then $(M,h)$ will be called an \emph{Hermitian manifold}. Using $h$ we can define a complex $(1,1)$-form $\omega:=\frac{i}{2}(h- \bar{h})$ and a Riemannian metric of the underlying smooth manifold $M$, as $g= \frac{1}{2}(h + \bar{h})$. So that
$$h= g - i\omega$$
Whenever $d\omega=0$, $h$ will be called a \emph{K\"ahler structure}, the Hermitian manifold $(M,h)$ will be called a \emph{K\"ahler manifold}, $g$ and $\omega$ respectively a \emph{K\"ahler metric and form}.
\end{definition}

\begin{remark}
When $h$ is a K\"ahler structure, the induced (1,1)-form is in fact symplectic: nondegeneracy of $\omega$ is guaranteed by that of $h$. 
\end{remark}

\begin{definition}[K\"ahler fibre bundle]\label{Antonelli}
A fibre bundle $\pi: M\to B$ is called \emph{K\"ahler} if there is a locally $\partial \bar{\partial}$-exact form $\omega_{M}$ such that its restriction $\omega_{M} \vert_{F_b}=\omega_b$ to each fibre $F_b$ is a K\"ahler form. Locally, the K\"ahler form is given by 
\[\omega_{b}= \frac i 2 \sum g_{i \bar j}(z, \xi) d\, \xi^{i} \wedge d\, \bar{\xi}^{j},\]
where $(z, \xi)$ are adapted coordinates for $M$ and
\[g_{i\bar j}=\frac{\partial^2 G}{\partial \xi^i \, \bar{\xi}^j},\]
where $G$ is a (local) K\"ahler potential.
\end{definition}

The previous definition is used in \cite{Antonelli}, for instance. However, we will work with a stronger version of a K\"ahler fibration, in the spirit of Definition \ref{symfib}.

\begin{definition}[Strong K\"ahler fibre bundle]\label{Strong}
A symplectic fibre bundle $F\hookrightarrow M\twoheadrightarrow B$  is said to be \emph{strongly
K\"ahler} if the following conditions are satisfied:
\begin{enumerate}
\item $(M,\mathcal{K})$ is a K\"ahler manifold.
\item Each fibre $(F_b,\mathcal{K}_b), b\in B$ is a K\"ahler manifold.  
\item $\mathcal{K}|_{F_b}=\mathcal{K}_b$.
\item The symplectic structure associated with $\mathcal{K}_b$ coincides with the symplectic structure of the underlying symplectic fibration: $i(\mathcal{K}_b - \overline{\mathcal{K}}_b) \equiv 2\omega_b$
\end{enumerate}
\end{definition}
\begin{remark}
Definition \ref{Strong} implies Definition \ref{Antonelli}, since the local K\"ahler potential of $M$ restricts to the local trivialisation and thus induces a K\"ahler potential on fibres. For the rest of this paper, we will consider the stronger version of K\"ahler fibrations. 
\end{remark}

\begin{definition}  Let $M\to B$ be a strong K\"ahler fibre bundle. A connection $\nabla$ is called \emph{K\"ahler} if for every loop $\gamma: S^1\to B$, its parallel transport along $\nabla$ preserves the K\"ahler structure on the fibers. This means that $\mbox{Hol}_\nabla(\gamma)_b\subset \mbox{K\"ahl}(F_b)$ is a subgroup of the group of K\"ahler morphisms
\[\mbox{K\"ahl}(F_b)\cong \mbox{Symp}(F_b)\cap \mbox{Isom}(F_b).\]
\end{definition}

\begin{theorem}\label{Kahlertheorem}
Any strong K\"ahler fibre bundle admits a K\"ahler connection.
\end{theorem}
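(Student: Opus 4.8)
The plan is to produce a single connection whose horizontal distribution is simultaneously the symplectic and the metric orthogonal complement of the vertical bundle, and then to show that its holonomy lands in $\mbox{Symp}(F_b)\cap\mbox{Isom}(F_b)$. First I would unpack the K\"ahler structure $\mathcal{K}$ on the total space into its compatible triple $(g,\omega,J)$, where $\omega=\frac{i}{2}(\mathcal{K}-\overline{\mathcal{K}})$ is the (closed) K\"ahler form, $g=\frac12(\mathcal{K}+\overline{\mathcal{K}})$ the K\"ahler metric, and $J$ the complex structure, related by $\omega(\,\cdot\,,\,\cdot\,)=g(J\,\cdot\,,\,\cdot\,)$. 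By conditions (3) and (4) of Definition \ref{Strong}, $\omega$ is fibre-compatible and restricts to the fibre form $\omega_b$; likewise $g\vert_{F_b}=g_b$ and $J\vert_{F_b}=J_b$ recover the K\"ahler data of each fibre. In particular the fibres are complex submanifolds, so $\mathrm{Ver}$ is $J$-invariant and $J$ is fibre-compatible in the sense of Lemma \ref{compatibleComplexlemma}.

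Next I would define the connection by declaring $\mathrm{Hor}:=\mathrm{Ver}^{\omega}$. Because $\omega=g(J\,\cdot\,,\,\cdot\,)$ and $J$ preserves $\mathrm{Ver}$, for any vertical $v$ and any $h$ one has $\omega(v,h)=g(Jv,h)$ with $Jv$ again vertical; letting $v$ range over all vertical fields shows $\mathrm{Ver}^{\omega}=\mathrm{Ver}^{\perp_g}$. Thus the same horizontal distribution is at once the $\omega$-orthogonal and the $g$-orthogonal complement of the vertical bundle, and it is $J$-invariant, since the $g$-orthogonal complement of a $J$-invariant subspace is $J$-invariant when $J$ is a $g$-isometry. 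As $\mathcal{K}$ is K\"ahler we have $d\omega=0$, so the hypothesis $\iota_{v_1\wedge v_2}d\omega=0$ of Theorem \ref{SymConn}(1) holds trivially and $\nabla$ is a \emph{symplectic} connection; equivalently this is the content of the Corollary to Theorem \ref{SymConn}. Hence $\mbox{Hol}_\nabla(\gamma)_b\subset\mbox{Symp}(F_b)$ for every loop $\gamma$.

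It then remains to upgrade \emph{symplectic} to \emph{K\"ahler}, i.e. to show $\mbox{Hol}_\nabla(\gamma)_b\subset\mbox{Isom}(F_b)$ as well. The route I would take is to show that parallel transport is fibrewise holomorphic, i.e. preserves $J_b$: once a fibre diffeomorphism $P$ preserves both $\omega_b$ and $J_b$, it automatically preserves the metric, because $g_b(\,\cdot\,,\,\cdot\,)=\omega_b(\,\cdot\,,J_b\,\cdot\,)$, so that $g_b(PX,PY)=\omega_b(PX,P(J_bY))=\omega_b(X,J_bY)=g_b(X,Y)$ using that $P$ commutes with $J_b$ and pulls back $\omega_b$ to itself. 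To see that $J_b$ is preserved I would use that both $\mathrm{Ver}$ and $\mathrm{Hor}$ are $J$-invariant together with the integrability of $J$ (vanishing of the Nijenhuis tensor on $M$): these force the flow of any horizontal lift $\hat w$ to act fibrewise by biholomorphisms, so that $L_{\hat w}J$ has no vertical component obstructing $J_b$-invariance. Combining the two conclusions gives $\mbox{Hol}_\nabla(\gamma)_b\subset\mbox{Symp}(F_b)\cap\mbox{Isom}(F_b)=\mbox{K\"ahl}(F_b)$, which is exactly the assertion that $\nabla$ is a K\"ahler connection.

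The step I expect to be the genuine obstacle is the isometry/holomorphicity claim of the third paragraph. The symplectic part is essentially free from Theorem \ref{SymConn}, and the algebraic implication ``$\omega_b$ and $J_b$ preserved $\Rightarrow g_b$ preserved'' is immediate; but the metric-orthogonal parallel transport of a general Riemannian submersion need \emph{not} be a fibre isometry, so one cannot simply read isometry off the identity $\mathrm{Hor}=\mathrm{Ver}^{\perp_g}$. The work is therefore concentrated in showing that the K\"ahler integrability of $\mathcal{K}$, through the $J$-invariance of the horizontal distribution, really does make parallel transport holomorphic on the fibres. This is where I would spend the most care, reducing it to a Nijenhuis computation for a horizontal lift $\hat w$ and a vertical field $V$ and checking that the vertical component of $[\hat w,JV]-J[\hat w,V]$ vanishes; should this require mild additional structure, the homogeneous geometry of the motivating co-adjoint-orbit example (where the fibres are totally geodesic) provides a clean fallback.
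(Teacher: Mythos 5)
Your first two paragraphs reproduce the paper's own argument: the connection is $\mathrm{Hor}:=\mathrm{Ver}^{\omega}$, closedness of the K\"ahler form feeds Theorem \ref{SymConn}(1), so the holonomy lies in $\mbox{Symp}(F_b)$; the identity $\mathrm{Ver}^{\omega}=\mathrm{Ver}^{g}$ you derive appears in the paper as the proposition following this theorem. The divergence is in your third step, and there your proposal has a genuine gap: the fibrewise holomorphicity of parallel transport is left unproven, and the route you suggest for it---integrability of $J$ via a Nijenhuis computation---cannot deliver it. For a projectable horizontal lift $w^{\#}$ and a vertical field $V$, both $[w^{\#},V]$ and $[w^{\#},JV]$ are vertical, so $(L_{w^{\#}}J)V$ is automatically vertical and nothing forces it to vanish; expanding $N_J(w^{\#},V)=0$ and using the $J$-invariance of $\mathrm{Ver}$ and $\mathrm{Hor}$ yields only the symmetry relation
\begin{equation*}
(L_{w^{\#}}J)\,V=(L_{Jw^{\#}}J)\,(JV),
\end{equation*}
which relates the variation of $J$ in two horizontal directions and is perfectly compatible with $(L_{w^{\#}}J)\vert_{\mathrm{Ver}}\neq 0$. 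Indeed the infinitesimal statement you are aiming for is false in general: Definition \ref{Strong} does not forbid the complex structures of the fibres from varying with $b$ (e.g.\ a non-isotrivial family of elliptic curves with a fibrewise-normalised K\"ahler form on the total space), and when they do vary, nearby fibres are not biholomorphic, so no flow of horizontal lifts can act by fibrewise biholomorphisms. A correct argument must therefore use that a K\"ahler connection only constrains holonomy around \emph{loops}, or invoke extra structure such as the homogeneity present in the motivating co-adjoint orbit example; integrability alone is not enough.

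That said, your diagnosis of where the difficulty sits is sharper than the paper's own treatment of it. The paper reduces the claim to Equation \eqref{isometry} and then asserts it follows from $g(V_1,V_2)=\omega(V_1,J_FV_2)$ together with symplecticity of the connection. But the Leibniz expansion reads
\begin{equation*}
\bigl(L_{w^{\#}}g\bigr)(V_1,V_2)=\bigl(L_{w^{\#}}\omega\bigr)(V_1,JV_2)+\omega\bigl(V_1,(L_{w^{\#}}J)V_2\bigr),
\end{equation*}
and symplecticity kills only the first term; since $\omega$ is nondegenerate on $\mathrm{Ver}$ and $(L_{w^{\#}}J)V_2$ is vertical, the second term vanishes precisely when $(L_{w^{\#}}J)\vert_{\mathrm{Ver}}=0$, i.e.\ when parallel transport is fibrewise holomorphic---the very statement at issue. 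So the paper's proof silently assumes the step you flagged as the obstacle; your proposal has the merit of naming it, but neither your sketch nor the paper's argument actually closes it.
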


\begin{proof}
Let $M\to B$ be a strong K\"ahler fibration, with $(M,\mathcal{K})$ the global K\"ahler structure. Since the fibration is strong, the two-form $\omega$ on $M$ associated with $\mathcal K$, i.e. $2\omega = i(\mathcal{K} - \overline{\mathcal{K}})$, is symplectic. By means of Theorem \ref{SymConn}, point (1), we know that there exists an $\omega$-compatible symplectic connection $\nabla_\omega$, such that $TM=\mathrm{Ver} \oplus \mathrm{Ver}^\omega$.

The restriction to the fibre of the K\"ahler structure $\mathcal{K}\vert_{F}$ is a K\"ahler structure on the fibre, and the complex structure $J$ associated with $\mathcal{K}$ is therefore fibre-compatible. In particular, $J_F$ is an automorphism of $\mathrm{Ver}$ and $J=J_F \oplus J_{\nabla_\omega}$, with $J_{\nabla_\omega}$ an automorphism of $\mathrm{Ver}^\omega$ (cf. Lemma \ref{compatibleComplexlemma}). 

We claim that horizontal lifts of vector fields on $B$ give rise to isometries of the fibres. This condition is equivalent to 
 \begin{equation}\label{isometry}
     \left(L_{w^\#} g\right)(V_1,V_2)=0
 \end{equation}
for all $V_1,V_2\in \mathrm{Ver}$ and $w^{\#}$ the horizontal lift of $w\in\mathfrak{X}(B)$. However, $g(\cdot,\cdot) = \omega(\cdot, J\cdot)$ and $g(V_1,V_2)=\omega(V_1, J_F V_2)$ for $J$ restricts to a complex structure on the fibres. On the other hand, we know that the connection $\nabla_\omega$ is symplectic and the horizontal lift of vector fields in $B$ generates symplectomorphisms of the fibres, and Equation \eqref{isometry} is satisfied.
\end{proof}

\begin{proposition}
 Let $M\to B$ be a strong K\"ahler fibre bundle, then $\mathrm{Ver}^\omega = \mathrm{Ver}^g$ where $\mathcal{K} = g - i \omega$.
\end{proposition}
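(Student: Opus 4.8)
The plan is to reduce the claimed equality of the two orthogonal complements to the defining compatibility of a K\"ahler structure, namely the identity $g(\cdot,\cdot)=\omega(\cdot,J\cdot)$ relating the metric, the symplectic form and the complex structure $J$ associated with $\mathcal K = g - i\omega$ (the same identity already invoked in the proof of Theorem \ref{Kahlertheorem}). The two subspaces in question are, pointwise,
\[
\mathrm{Ver}^\omega = \{v : \omega(v,w)=0\ \forall w\in\mathrm{Ver}\}, \qquad
\mathrm{Ver}^g = \{v : g(v,w)=0\ \forall w\in\mathrm{Ver}\},
\]
so it suffices to show that these two annihilation conditions cut out the same set of vectors above each point.

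First I would record that, by fibre-compatibility of the K\"ahler structure on a strong K\"ahler fibration, the complex structure $J$ restricts to an automorphism $J\vert_F$ of the vertical bundle $\mathrm{Ver}$ with $(J\vert_F)^2=-\mathbb{I}$ (Lemma \ref{compatibleComplexlemma}, and as used in the proof of Theorem \ref{Kahlertheorem}). In particular $J\vert_F$ is invertible on $\mathrm{Ver}$, with inverse $-J\vert_F$, so the linear map $w\mapsto Jw$ is a bijection of $\mathrm{Ver}$ onto itself.

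Next I would substitute the K\"ahler identity into the definition of $\mathrm{Ver}^g$: a vector $v$ lies in $\mathrm{Ver}^g$ if and only if $g(v,w)=\omega(v,Jw)=0$ for every vertical $w$. Since $w\mapsto Jw$ permutes $\mathrm{Ver}$ bijectively, the vectors $Jw$ exhaust $\mathrm{Ver}$ as $w$ ranges over $\mathrm{Ver}$; hence the condition ``$\omega(v,Jw)=0$ for all vertical $w$'' is equivalent to ``$\omega(v,u)=0$ for all vertical $u$'', i.e.\ to $v\in\mathrm{Ver}^\omega$. This gives both inclusions at once and therefore $\mathrm{Ver}^g=\mathrm{Ver}^\omega$.

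The only point requiring care — the ``hard part'', though it is modest — is the justification that $J$ genuinely preserves the vertical bundle, so that $Jw$ is again vertical and the bijectivity argument applies \emph{within} $\mathrm{Ver}$; this is exactly the content of fibre-compatibility established earlier. Once that is in hand, the statement is a one-line manipulation of the nondegenerate pairing, and no further use of the connection or of closedness of $\omega$ is needed.
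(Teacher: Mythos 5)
Your proof is correct, but it follows a different route from the paper's. The paper first establishes the intermediate relation $\mathrm{Ver}^g = J\left(\mathrm{Ver}^\omega\right)$ (which implicitly uses the $J$-invariance $\omega(J\cdot,J\cdot)=\omega$), then applies $J$ to the splitting $TM=\mathrm{Ver}\oplus\mathrm{Ver}^\omega$, using $J_F(\mathrm{Ver})=\mathrm{Ver}$, to obtain $TM=\mathrm{Ver}\oplus\mathrm{Ver}^g$ and from there the equality of the two complements. You instead work directly with the annihilator conditions: substituting $g(v,w)=\omega(v,Jw)$ and using that $J\vert_F$ is a bijection of $\mathrm{Ver}$ (Lemma \ref{compatibleComplexlemma}), you show the two conditions cut out the same vectors, obtaining both inclusions simultaneously. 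Your version is more economical: it needs only the compatibility identity and fibre-preservation of $J$, not the invariance of $\omega$ under $J$ nor the global splitting. It also sidesteps a step the paper leaves implicit: two subbundles complementary to the same $\mathrm{Ver}$ need not coincide, so the paper's final deduction really requires the additional observation that $J$ preserves $\mathrm{Ver}^\omega$ (which follows from $J$-invariance of $\omega$ and of $\mathrm{Ver}$); your direct argument never encounters this issue.
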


\begin{proof}
It is easy to show that, if we denote by $g$ the metric associated with $\mathcal{K}$, i.e. $2g=\mathcal{K} + \overline{\mathcal{K}}$, and $\omega$ the associated symplectic form, $2\omega= i(\mathcal{K} - \overline{\mathcal{K}})$, we have the relation $\mathrm{Ver}^g = J\left(\mathrm{Ver}^\omega\right)$. This implies 
$$TM = \mathrm{Ver}\oplus \mathrm{Ver}^\omega = J_F\left(\mathrm{Ver}\right) \oplus J\left(\mathrm{Ver}^{\omega}\right)=\mathrm{Ver} \oplus \mathrm{Ver}^g$$
and $\mathrm{Ver}^\omega=\mathrm{Ver}^g$.
\end{proof}

\begin{theorem}\label{Theo:symp+comp=kaehl}
A strong symplectic fibre bundle $(F,M,B,\omega)$ is strongly K\"ahler if and only if it admits a fibre-compatible complex structure $J$ that preserves $\omega$.
\end{theorem}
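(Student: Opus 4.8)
The plan is to prove the two implications separately, treating the direction ``strongly K\"ahler $\Rightarrow$ compatible $J$'' as essentially a read-off of the definitions and concentrating the real work on the converse. For the forward direction I would take the global K\"ahler structure $\mathcal{K}=g-i\omega$ on $M$ and let $J$ be its associated complex structure. The standard Hermitian identity $g(\cdot,\cdot)=\omega(\cdot,J\cdot)$ together with $J^2=-\mathbb{I}$ forces $\omega(JX,JY)=\omega(X,Y)$, so $J$ preserves $\omega$. Condition (3) of Definition \ref{Strong} gives $\mathcal{K}\vert_{F_b}=\mathcal{K}_b$, a K\"ahler structure on each fibre by (2), so $J\vert_{F_b}$ is a genuine complex structure on $F_b$; this is precisely fibre-compatibility. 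Condition (4) guarantees that this $\omega$ is the symplectic form of the underlying strong symplectic fibration, so $J$ is the required fibre-compatible, $\omega$-preserving complex structure.

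For the converse I would reconstruct the K\"ahler data from the pair $(\omega,J)$ via the ``two-out-of-three'' principle. Define $g(X,Y):=\omega(X,JY)$ and first check symmetry: using $J$-invariance, $\omega(X,JY)=\omega(JX,J^2Y)=-\omega(JX,Y)=\omega(Y,JX)$, so $g$ is symmetric. Setting $\mathcal{K}:=g-i\omega$, I would verify that it is an Hermitian structure and then invoke the classical characterisation that an almost-Hermitian manifold whose complex structure is integrable and whose fundamental form is closed is K\"ahler: integrability of $J$ holds by hypothesis (it is a complex structure, not merely an almost-complex one), and $d\omega=0$ since $\omega$ is symplectic. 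This yields Definition \ref{Strong}(1).

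It remains to descend to the fibres. Since $J$ is fibre-compatible, $J\vert_{F_b}$ is a complex structure on $F_b$ by Lemma \ref{compatibleComplexlemma}, and fibre-compatibility of $\omega$ gives $\omega\vert_{F_b}=\omega_b$; hence $g\vert_{F_b}=\omega_b(\cdot,J\vert_{F_b}\cdot)$ and $\mathcal{K}\vert_{F_b}=\mathcal{K}_b:=g\vert_{F_b}-i\omega_b$ is a K\"ahler structure on each fibre, establishing (2) and (3). Condition (4) is then immediate: $i(\mathcal{K}_b-\overline{\mathcal{K}}_b)=i(-2i\omega_b)=2\omega_b$.

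The main obstacle is positivity. The relation $\omega(JX,JY)=\omega(X,Y)$ only forces $g=\omega(\cdot,J\cdot)$ to be symmetric and nondegenerate, so without further input $\mathcal{K}=g-i\omega$ is merely a pseudo-Hermitian structure, both on $M$ and on each fibre, rather than an honest K\"ahler one. To obtain a bona fide positive structure one needs $J$ to \emph{tame} $\omega$, i.e. $\omega(X,JX)>0$. I would therefore either read ``preserves $\omega$'' in the strong sense of full $\omega$-compatibility, or verify positivity blockwise along the $J$-invariant splitting $TM=\mathrm{Ver}\oplus\mathrm{Ver}^\omega$ of Lemma \ref{compatibleComplexlemma}, reducing the total-space statement to positivity on the vertical and horizontal pieces separately.
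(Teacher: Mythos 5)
Your proposal follows essentially the same route as the paper: both directions are handled by the two-out-of-three construction, defining $g:=\omega(\cdot,J\cdot)$ and $\mathcal{K}:=g-i\omega$, with fibre-compatibility of $J$ and $\omega$ doing the work of descending to the fibres. The differences are ones of care rather than strategy, and they cut in your favour. Where you spell out symmetry of $g$, integrability, $d\omega=0$, and the verification of conditions (2)--(4) of Definition \ref{Strong}, the paper compresses all of this into the phrase ``via standard methods'' and devotes its only explicit argument to a point your organisation subsumes: it worries that $g\vert_F$ could degenerate if $J\,\mathrm{Ver}\subset\mathrm{Ver}^\omega$, and uses fibre-compatibility to force $J\,\mathrm{Ver}=\mathrm{Ver}$, whereas in your write-up nondegeneracy of $g\vert_F=\omega_b(\cdot,J\vert_F\cdot)$ is automatic once $J$ preserves the vertical bundle and $\omega_b$ is symplectic. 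Most importantly, the positivity obstacle you flag is real and is \emph{not} addressed by the paper's proof either: invariance $\omega(J\cdot,J\cdot)=\omega$ plus $J^2=-\mathbb{I}$ yields a symmetric nondegenerate $g$, but nothing forces $\omega(X,JX)>0$, so without taming one only obtains a pseudo-K\"ahler structure on $M$ and on the fibres. Your two remedies are the right ones: either the hypothesis ``preserves $\omega$'' must be read as full $\omega$-compatibility (invariance plus taming), which is evidently what the paper intends, or positivity must be checked blockwise along the splitting $TM=\mathrm{Ver}\oplus\mathrm{Ver}^\omega$ --- though note the latter only relocates the problem, since taming on each block still has to come from somewhere. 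So your proof is correct under the strengthened reading of the hypothesis, and it makes explicit a gap that the paper's own argument silently elides.
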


\begin{proof}
Proving that every strong K\"ahler fibre bundle is strongly symplectic is trivial, moreover the associated complex structure $J$ is compatible with $\omega=i(\mathcal{K} - \overline{\mathcal{K}})$ by construction and it is $\nabla_\omega$-compatible as argued in Theorem \ref{Kahlertheorem}.

On the other hand, let $(F,M,B,\omega)$ be a strong symplectic fibre bundle, with a complex structure $J$ such that $\omega(J\cdot,J\cdot)=\omega$. This defines a K\"ahler structure $\mathcal{K}$ on $M$ via standard methods, but to ensure that such $\mathcal{K}$ is fibre-compatible, i.e. it restricts to a K\"ahler structure on the fibres $F$, we need to require that $J$ be fibre compatible, i.e. that $J\vert_F$ be a complex structure on the fibre. 

As a matter of fact, if $J \mathrm{Ver}\subset \mathrm{Ver}^\omega$, the restriction to the fibre of the bilinear form $g:=\omega(\cdot, J\cdot)$ might fail to be nondegenerate. Then $J\mathrm{Ver} \cap \mathrm{Ver}^\omega =\{0\}$ and $J\mathrm{Ver}=\mathrm{Ver}$ is a complex structure on $F$.
\end{proof}

\subsection{Fisher structures}
In this section we propose a definition that will abstract from the example of Section \ref{Section:Orbits}. We first look at the local linear model on an even dimensional (real) vector space.

\begin{definition}(Linear Fisher structures)\label{LinFishStr}
Let $V$ be a $2n$-dimensional real vector space. A \emph{linear Fisher structure} is an automorphism $\fr{J}\in \mathrm{Aut}(V)$ such that it is skew-adjoint, i.e. $\fr{J}^t=-\fr{J}$ and whose square can be diagonalised to:  $-D=-\mathrm{diag}\{d_{i}^2\mathbb{I}_2\}$, with $1 \leq i \leq  n$. The space of such structures will be denoted by $\mathcal J_F(n)$. The eigenvalues $\{d_i\}_{i=1}^{n}$ are called the \emph{roots} of the Fisher structure.

\end{definition}

\begin{lemma}
Linear Fisher structures are normal.
\end{lemma}

\begin{proof}
This is elementary, since $\fr{J}^t = -\fr{J}$ and $\fr{J}^2 = -D$, so that $\fr{J}^t \fr{J} = \fr{J}\fr{J}^t = D$.
\end{proof}

\begin{theorem}
Let $\mathfrak J\in\mathcal{J}_F$ be a Fisher structure. Then, up to change of coordinates, the matrix representing $\mathfrak J$ in $\mathrm{GL}(2n,\mathbb{R})$ has the following form:
\[[\mathfrak J]=  \begin{bmatrix}\begin{bmatrix} 0&-d_1  \mathbb I \\d_1\mathbb I&0 \end{bmatrix}& \begin{bmatrix}0&0\\0&0 \end{bmatrix} &\cdots&\cdots & \begin{bmatrix}0&0\\0&0 \end{bmatrix}\\
 \begin{bmatrix}0&0\\0&0 \end{bmatrix}&\begin{bmatrix} 0&-d_2  \mathbb I \\d_2\mathbb I&0 \end{bmatrix}&\begin{bmatrix}0&0\\0&0 \end{bmatrix}&\cdots & \vdots\\
 \vdots&\dots&\dots&\ddots& \vdots\\
 \begin{bmatrix}0&0\\0&0 \end{bmatrix}&\cdots&\cdots&&\begin{bmatrix} 0&-d_k  \mathbb I \\d_k\mathbb I&0 \end{bmatrix}
 \end{bmatrix},\]
where $\begin{bmatrix} 0&-d_r  \mathbb I \\d_r\mathbb I&0 \end{bmatrix}$
is a $m_r\times m_r$-matrix, $m_r$ being the multiplicity of  $d_r^2$ in $D$.
\end{theorem}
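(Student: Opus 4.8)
The statement is the real normal form theorem for the skew-symmetric operator $\mathfrak{J}$, refined so as to record the prescribed spectrum of its square. The plan is to diagonalise $\mathfrak{J}^2$, restrict $\mathfrak{J}$ to each of its eigenspaces, and there recognise it as a scalar multiple of an \emph{orthogonal} complex structure that can be brought to standard form. I would work with the inner product realising the skew-adjointness, so that $\mathfrak{J}^t=-\mathfrak{J}$ is a genuine real skew-symmetric matrix; the change of coordinates produced will then be orthogonal, hence in particular an element of $\mathrm{GL}(2n,\mathbb{R})$.

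First I would use that $\mathfrak{J}^2=-D$ is symmetric and negative definite. Its distinct eigenvalues are $-d_r^2$, with $d_r$ ranging over the distinct roots, and its eigenspaces $V_r=\ker(\mathfrak{J}^2+d_r^2\,\mathbb{I})$ are mutually orthogonal and span $V$, with $\dim V_r=m_r$ the multiplicity of $d_r^2$ in $D$. Since $\mathfrak{J}$ commutes with $\mathfrak{J}^2$, each $V_r$ is $\mathfrak{J}$-invariant, so it suffices to produce the normal form on a single block $V_r$ and then take the orthogonal direct sum.

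On a fixed $V_r$, set $J_r:=d_r^{-1}\,\mathfrak{J}\vert_{V_r}$. Then $J_r^2=-\mathbb{I}$ and $J_r^t=-J_r$, whence $J_r^tJ_r=-J_r^2=\mathbb{I}$, so $J_r$ is an orthogonal complex structure. The key step is the existence of a $J_r$-adapted orthonormal basis, which I would establish by induction on $\dim V_r$. Choosing a unit vector $e$, skew-symmetry gives $\langle J_r e,e\rangle=-\langle J_r e,e\rangle=0$ and orthogonality gives $\|J_r e\|^2=\langle e,J_r^tJ_r e\rangle=\|e\|^2=1$, so $\{e,J_r e\}$ is an orthonormal, $J_r$-invariant pair. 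Because $J_r$ is orthogonal it maps the orthogonal complement of an invariant subspace to itself, so $\{e,J_r e\}^{\perp}\subset V_r$ is again $J_r$-invariant and carries an orthogonal complex structure, letting the induction continue; in particular $m_r=\dim V_r$ is even. Listing $e_1,\dots,e_{m_r/2}$ first and then $J_r e_1,\dots,J_r e_{m_r/2}$ gives a basis in which $J_r=\begin{bmatrix}0&-\mathbb{I}\\ \mathbb{I}&0\end{bmatrix}$, so $\mathfrak{J}\vert_{V_r}=d_r J_r$ acquires the advertised block shape.

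Finally I would assemble the orthonormal bases of the $V_r$ into an orthonormal basis of $V=\bigoplus_r V_r$, ordered by the roots; in this basis $\mathfrak{J}$ is block-diagonal with the prescribed $m_r\times m_r$ blocks, and the matrix whose columns are these vectors is the desired (orthogonal) change of coordinates. I do not anticipate a serious obstacle, as this is a decorated instance of the classical real skew-symmetric normal form; the only point demanding genuine care is the inductive construction of the $J_r$-adapted orthonormal basis, namely verifying that $J_r$-invariance passes to orthogonal complements, which is exactly where the \emph{orthogonality} of $J_r$ — not merely its skew-symmetry — is used.
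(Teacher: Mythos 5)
Your proof is correct, but it takes a genuinely different route from the paper's. The paper complexifies: it observes that the eigenvalues of $\mathfrak{J}$ are $\pm i d_r$, picks a complex eigenbasis, and reads off each real block $\bigl[\begin{smallmatrix} 0 & -d_r \mathbb{I}\\ d_r\mathbb{I} & 0\end{smallmatrix}\bigr]$ as the projection of $\mathfrak{J}$ onto the conjugate pair of eigenspaces $E_{id_r}\oplus E_{-id_r}$; the passage from complex eigenvectors back to a real change of coordinates is left implicit in that sketch. You instead stay entirely over $\mathbb{R}$: you diagonalise the symmetric negative-definite operator $\mathfrak{J}^2$, restrict $\mathfrak{J}$ to each eigenspace $V_r$, rescale to an orthogonal complex structure $J_r$, and construct a $J_r$-adapted orthonormal basis by induction. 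What your approach buys is that it is more elementary (no complexification), it produces an explicitly \emph{orthogonal} change of basis, and it makes visible that each multiplicity $m_r$ is necessarily even --- a fact the paper's argument takes for granted; what the paper's approach buys is brevity, since the block form is immediate once the spectrum $\pm i d_r$ and diagonalisability are granted. One small correction to your closing remark: at the step where you pass $J_r$-invariance to the orthogonal complement, skew-symmetry alone already suffices, since for $v\in W^\perp$ and $w\in W$ one has $\langle J_r v, w\rangle = -\langle v, J_r w\rangle = 0$ because $J_r W\subseteq W$; the orthogonality of $J_r$ is genuinely needed only to normalise $\|J_r e\|=1$.
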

\begin{proof}
It is easy to observe that the eigenvalues of $\mathfrak J$ are of the form $\pm i d_r$. Now, by choosing an eigenbasis of $\mathbb R^{2n}$, we obtain that the projection of $\mathfrak J$ to the direct sum of eigenspaces $E_{id_r} \oplus E_{-id_r}$ is $\begin{bmatrix} 0&-d_r  \mathbb I \\d_r\mathbb I&0 \end{bmatrix}$. The eigenvalues of $\mathfrak J$ are preserved by change of basis, and this concludes the proof.
\end{proof}

We say that a matrix $M$ is a complex matrix representative of $\fr{J}$ if there is a basis $\alpha$ of $\mathbb{C}^n$ for which $\fr{J}(x)=Mx$, for all $x$ in $V$.
We have the following.

\begin{proposition}\label{FishStr}
A complex matrix representative $M$ of $\fr{J}\in\mathcal J_F(n)$ in $\mathrm{GL}(n,\mathbb{C})$ if and only if the set of rows of $M$ is an orthogonal basis of $\mathbb C^n$ and $M$ is skew adjoint.
\end{proposition}

\begin{proof}
One direction follows directly from the fact that when the rows of $M$ are orthogonal, then $MM^*=D$, where $D_{ii}$ is the $\mathbb{C}$ square-norm of the $i$-th row of $M$, and therefore $M^2=-D, D>0$.
On the other hand, if $M^*=-M$ and $M^2=-D$, let $Q= D^{-1/2}$. Therefore
\begin{eqnarray*}
&(QM)(M^*Q)=\mathbb I\\
\iff& (QM)(QM)^*=\mathbb I\\
\iff& QM \in \mathcal U(n, \mathbb C)\\
 \iff& \mbox{Row}(M) \mbox{ is a positive rescaling of each vector in Row} (QM). 
\end{eqnarray*}
\end{proof}

\begin{remark}
Observe that the crucial feature of a Fisher structure is that $\fr{J}^2$ splits $\mathbb{R}^{2n}$ in (at most) $n$ 2-dimensional eigenspaces, and that its eigenvalues are negative. Notice that, since Fisher structures are normal, they are diagonalised by means of unitary matrices $U(n)=\mathrm{GL}(2n,\mathbb{R}) \cap \mathrm{Sp}(2n,\mathbb{R})$. Denoting by $d_i$ the multiplicity of the i-th eigenvalue as above, we observe that the action of $U(n)$ on $[\fr{J}]$ by conjugation has a stabiliser given by $H_{[\fr{J}]}\simeq \Pi_i U(d_k)$. Representing a Fisher structure by the diagonal matrix of eigenvalues $M=\sqrt{-D}\in \mathrm{GL}(n,\mathbb{C})$, we gather that the space $\mathcal{J}_F(n)$ can be seen as a union of $U(n)$-spaces, parametrised by purely imaginary diagonal matrices, which is the Cartan subalgebra of $\mathfrak{su}(n)$. In this sense one can characterise a Fisher structure according to which $U(n)$-space its square belongs to. This relationship between $U(n)$-spaces and Fisher structures will be explored elsewhere. 
\end{remark}

\begin{definition}[Fisher structure] \label{Fish}
Let M be a ($2n$)-dimensional manifold and $\fr{J}\colon TM \to TM$ be a linear bundle automorphism covering the identity. We will say that $\fr{J}$ is a \emph{Fisher structure on M} if $\fr{J}_p \colon T_pM\longrightarrow T_pM$ is a Fisher structure for $T_pM$, with roots $\{d_i\}_{i=1\dots n}$ for all $p\in M$, and it is a smooth function of $p$.
\end{definition}

\begin{remark}
Observe that any complex structure on $M$ is automatically a Fisher structure for the roots $d_i=1, 1\leq i \leq n$.
\end{remark}

This definition allows the following generalisation of a K\"ahler metric
\begin{definition}[Fisher-K\"ahler structure]
A Fisher-K\"ahler structure on a $2n$-dimensional manifold $M$ is a triple $(G,\Omega,\fr{J})$, with $G$ a Riemannian metric, $\Omega$ a symplectic form and $\fr{J}$ a Fisher structure such that 
$$G(\cdot, \cdot)=\Omega(\cdot,\fr{J}\cdot)$$
\end{definition}

\begin{remark}
Whereas the complex structure $J$ in a K\"ahler manifold is an isometry and preserves the symplectic structure, i.e. $G(J\cdot,J\cdot) = G$, in a Fisher K\"ahler structure this is true only up to diagonal rescalings, that is to say 
$$G(\fr{J}\cdot, \fr{J} \cdot)=-\Omega(\fr{J}\cdot, D\cdot) = G(\sqrt{D}\cdot , \sqrt{D}\cdot)$$
\end{remark}

\subsection{Examples of symplectic and K\"ahler fibrations}
\begin{ex}\label{Trivial}\emph{Trivial bundles of complex vector spaces}. Let $$\mathbb C^p\hookrightarrow \mathbb C^q\twoheadrightarrow \mathbb C^{r}$$ be the trivial vector bundle over $\mathbb C^r$ with fibre $\mathbb C^p$. The total, base and fibre spaces are all canonically symplectic with symplectic structure $dz_i\wedge d\bar{z}^i$. It is easy to check that this is a symplectic fibration and that the canonical splitting $\nabla: \alpha\mapsto (\alpha, 0)$, is symplectic, with the compatible form $\omega$ given by $\sum_{i=1}^{q} dz_i\wedge d\bar{z}^i$. This is a natural case in which the compatible connection is flat.

It is easy to observe that the K\"ahler potential 
\[G = \frac{\vert z \vert ^2} 2\]
produces the standard K\"ahler form on $\mathbb C^p, \mathbb C^q$ and $\mathbb C^r$. Thus the trivial connection is in this case a compatible K\"ahler connection and the fibre bundle is naturally K\"ahler, since the canonical complex structure $J(z)=iz$ is trivially fibre compatible. In addition to this, any linear automorphism $J_P$ which is obtained by similarity, i.e. $J_P= PJP$, with $P$ diagonal with respect to the fibre compatible basis (so that $J^2=-P^2$, where $D$ is diagonal) is a Fisher structure compatible with the fibration.
\end{ex}

\begin{ex}\label{Proj} \emph{Projective spaces}. Let $$\mathbb{CP}^p\hookrightarrow \mathbb{CP}^q\twoheadrightarrow \mathbb{CP}^{r}.$$ If $(z_i, \xi_i)$ are adapted coordinates subordinated to an open $U_i$, the local K\"ahler potential 
\[G_i= ln (\sum_{a=1}^{k+1} \vert \xi_i^a \vert^2 )\]
produces the Fubini-Study K\"ahler form on $\mathbb{CP}^p, \mathbb{CP}^q$ and $\mathbb{CP} ^r$. If $(z_i, \xi_i)$ are adapted coordinates subordinated to an open $U_i$, the local K\"ahler potential 
\[G_i= ln (\sum_{a=1}^{k+1} \vert \xi_i^a \vert^2 )\]
is fibre compatible.

\end{ex}

\begin{ex}\emph{Symplectic vector bundles}.
A natural extension of Example \ref{Trivial}  arises when we consider a vector bundle $\pi: E \to M$ such that there is a form $\Omega \in \Gamma(\bigwedge ^2 E^*)$ which restricts to a symplectic form on $E_x$, for all $x \in M.$ In other words, these represent symplectic fibrations with linear fibres. This linear version includes the tangent bundle of a symplectic manifold $TM\to M$ and the bundle $E\oplus E^*\to M$, for a given vector bundle $E\to M$. The symplectic orthogonal complement of the fibres gives an $\Omega$-compatible connection, following definition \ref{omegaCompatible}.
\end{ex}

\begin{ex} \emph{Cotangent lifts}. Given a surjective submersion 
\begin{equation}\label{cot}
\pi: X\to B
\end{equation}
with fibre $F_b$ over $b\in B$
there is a natural fibration, defined as 
\begin{equation}\label{lift}
\tilde{\pi}: M=\bigcup_{b\in B}T^*F_b \to B.
\end{equation}
The fibres are symplectic by using the canonical symplectic form for cotangent bundles and a splitting of \ref{cot} gives rise to a splitting of \ref{lift}. This indeed equips \ref{lift} with a symplectic fibration structure (see e.g. Theorem 2.1.1 in \cite{Guillemin}).
\end{ex}

\begin{remark}The examples above are particular instances of structures which are compatible with the fibre structure and an associated horizontal distribution. In particular, the Fisher structure in Example \ref{Trivial} is fibre compatible by construction. This example will be further generalised for the case of a Fisher structure on the fibre bundle of coadjoint orbits.
\end{remark}

\section{Fibre Bundle of Coadjoint Orbits}\label{Section:Fibration}
In this section we will analyse examples of fibre bundles, where the total and base spaces are co-adjoint orbits of the action of a Lie group on its dual Lie algebra.

We will approach the (symplectic) geometry of co-adjoint orbits \emph{via} the principal bundle perspective \cite{Guillemin}. Unless stated otherwise, we will let $G$ be a compact, semisimple Lie group, and $\fr{g}$ its Lie algebra.

$G$ naturally acts on the dual $\fr{g}^*$, \emph{via} the co-adjoint action, and every orbit of this action is the base space of a principal bundle, labeled by an element of the Cartan subalgebra $\rho_0\in\mathfrak t \subset \mathfrak g$:
\[H^{(\rho_{0})}\to G \to \orb{\rho_{0}},\]
where $H^{(\rho_{0})}$ is the stabiliser subgroup of $\rho_{0}$. Observe that the tangent map $ad\colon \fr{g}\longrightarrow T_{\rho_0}\orb{\rho_0}$ equivariantly identifies tangent vectors with vectors in the Lie algebra. In fact, there is an $H^{(\rho_0)}$-invariant splitting 
\begin{equation}\label{splitting}
\fr{g}=\stab{\rho_0}\oplus\nor{\rho_0}
\end{equation}
and $T_{\rho_0}\orb{\rho_0}\simeq\nor{\rho_0}$.

Given an orbit $\orb{\rho_0}$ and a point $\rho \in \orb{\rho_0}$, the bilinear form
\[\beta(K,H)_\rho:= Tr[\rho, [K,H]]\]
where $K,H\in \fr{g}$, is skew symmetric, nondegenerate and equivariant with respect to the coadjoint action. Because of the identification of tangent vectors with Lie algebra elements, this bilinear form yields a  symplectic structure on $\orb{\rho_0}$: 
$$\Omega^{(\rho_0)}_{KKS}(V,W)=\Omega_{KKS}^{(\rho_0)}\vert_{\rho}(ad_{K}\rho, ad_H\rho)=\beta(K,H)_\rho,$$ 
with $V=ad_K\rho,H=ad_H\rho$, and will be called Kirillov--Konstant--Souriau (KKS) form. 

\begin{remark}
Each orbit is promoted to the symplectic manifold $(\orb{\rho_0},\Omega_{KKS}^{(\rho_0)})$. It is a known result \cite{Borel,Reeder} that the second cohomology group of a homogeneous space as such is isomorphic to the Cartan subalgebra of the group, i.e. $H^2(G/H)\simeq \fr{t}$. The choice of an element in the Cartan subalgebra fixed an isomorphism $G/H^{(\rho_0)}\simeq \orb{\rho_0}$ and a particular class in $H^2(G/H)$.
\end{remark}

\subsection{Fibrations of orbits}
If we now choose two elements of the Cartan subalgebra $\eta_0,\rho_0\in\fr{t}$, with the condition that $\stab{\rho_0}\supset\stab{\eta_0}$, we can construct the associated bundle
\begin{equation}
    \mathcal{B}^{(\eta_0,\rho_0)}:=(G\times (Ad_{H^{(\rho_0)}}\eta_0))/H^{(\rho_0)},
\end{equation} 
One can show that 
\[\mathcal{B}^{(\eta_0,\rho_0)}\cong \orb{\eta_0}\]
is a fibre bundle over $\orb{\rho_0}$ with fibre $H^{(\rho_0)}/H^{(\eta_0)}$, i.e. we have the following sequence of homogeneous spaces:
\begin{equation}\label{fibreorbit}
    \qsp{H^{(\rho_0)}}{H^{(\eta_0)}} \longrightarrow \orb{\eta_0} \longrightarrow \orb{\rho_0}.
\end{equation}

In particular, we have the splittings $\fr{g}=\stab{\eta_0}\oplus\nor{\eta_0}$ and $\fr{g}=\stab{\rho_0}\oplus\nor{\rho_0}$. The tangent spaces (we will consider them at the reference points) are isomorphic to the normal complements $\nor{\eta_0}$ and $\nor{\rho_0}$ respectively. 

Above each orbit we can construct a \emph{tautological bundle} as follows:
\begin{definition}(Tautological bundle and tautological tangent bundle)\label{tautbundle}
Let $\eta_0$ be a Cartan element and $\orb{\eta_0}$ be its associated orbit. We define the \emph{tautological bundle} associated with the orbit $\orb{\eta_0}$ to be the vector bundle $\nablat{\eta_0}\longrightarrow \orb{\eta_0}$ with fibre above a point $\orb{\eta_0}\ni\eta=\mathrm{Ad}_g \eta_0$ given by the assignment $\nablat{\eta_0}_\eta=\mathrm{Ad}_g\nor{\eta_0}$ where $\nor{\eta_0}$ is the normal complement in $\fr{g}$ induced by $\eta_0$. By tensoring this bundle with the tangent bundle of the orbit $\orb{\eta_0}$ we define the \emph{tautological tangent bundle} $T\orb{\eta_0}\otimes\nablat{\eta_0}\longrightarrow \orb{\eta_0}$.
\end{definition}

The tautological bundle is by construction an equivariant vector bundle. Whenever an orbit is the total space of a fibration as in Equation \eqref{fibreorbit}, its tautological bundle splits.

\begin{lemma}\label{tautologicalLemma}
Let $\pi\colon\orb{\eta_0}\longrightarrow\orb{\rho_0}$ be a fibration of orbits. The tautological bundle associated to the orbit $\orb{\eta_0}$ splits as:
\begin{equation}\label{tautologicalsplitting}
\nablat{\eta_0} \simeq \fr{V}^{(\eta_0)} \oplus \nablat{\rho_0}
\end{equation}
where $\fr{V}^{(\eta_0)}\longrightarrow \orb{\eta_0}$ is a vector subbundle whose typical fibre is isomorphic to $\qsp{\stab{\rho_0}}{\stab{\eta_0}}$.
\end{lemma}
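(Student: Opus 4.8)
The plan is to reduce the statement to a linear-algebra decomposition at the level of $\fr{g}$ and then propagate it over the orbit by equivariance. Since $G$ is compact and semisimple, I would fix once and for all an $\mathrm{Ad}$-invariant inner product on $\fr{g}$ (for instance minus the Killing form), and regard each normal complement as the orthogonal complement of the corresponding stabiliser subalgebra. The hypothesis defining the fibration, $\stab{\rho_0}\supset\stab{\eta_0}$, gives a nesting of stabilisers, and taking orthogonal complements with respect to the invariant inner product reverses it to $\nor{\rho_0}\subset\nor{\eta_0}$. Writing $\fr{m}$ for the orthogonal complement of $\stab{\eta_0}$ inside $\stab{\rho_0}$, which is manifestly isomorphic to $\qsp{\stab{\rho_0}}{\stab{\eta_0}}$, the two splittings $\fr{g}=\stab{\eta_0}\oplus\nor{\eta_0}$ and $\fr{g}=\stab{\rho_0}\oplus\nor{\rho_0}$ combine into the pointwise identity
\[
\nor{\eta_0}=\fr{m}\oplus\nor{\rho_0},
\]
which will be the fibre of the desired decomposition at the reference point.

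Next I would promote this pointwise statement to a splitting of bundles by checking $H^{(\eta_0)}$-equivariance of each summand. Because the inner product is $\mathrm{Ad}$-invariant and both $\stab{\eta_0}$ and $\stab{\rho_0}$ are $H^{(\eta_0)}$-invariant (the latter since $H^{(\eta_0)}\subset H^{(\rho_0)}$ stabilises $\stab{\rho_0}$), the complement $\fr{m}$ is $H^{(\eta_0)}$-invariant as well. I would then define $\fr{V}^{(\eta_0)}$ to be the equivariant sub-bundle whose fibre over $\eta=\mathrm{Ad}_g\eta_0$ is $\mathrm{Ad}_g\fr{m}$; invariance of $\fr{m}$ guarantees independence of the representative $g$ (well-defined modulo $H^{(\eta_0)}$), and the typical fibre is precisely $\fr{m}\cong\qsp{\stab{\rho_0}}{\stab{\eta_0}}$, as required.

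Finally I would identify the remaining summand with the pullback of $\nablat{\rho_0}$ along the projection $\pi(\mathrm{Ad}_g\eta_0)=\mathrm{Ad}_g\rho_0$. By Definition \ref{tautbundle}, the fibre of $\nablat{\rho_0}$ over $\pi(\eta)$ is $\mathrm{Ad}_g\nor{\rho_0}$, so $\mathrm{Ad}_g\nor{\rho_0}$ is exactly the fibre of $\pi^*\nablat{\rho_0}$ at $\eta$. Applying $\mathrm{Ad}_g$ to the Lie-algebra splitting and using that $\mathrm{Ad}_g$ is linear then yields
\[
\nablat{\eta_0}_\eta=\mathrm{Ad}_g\nor{\eta_0}=\mathrm{Ad}_g\fr{m}\oplus\mathrm{Ad}_g\nor{\rho_0}=\fr{V}^{(\eta_0)}_\eta\oplus\big(\pi^*\nablat{\rho_0}\big)_\eta,
\]
which is the claimed isomorphism \eqref{tautologicalsplitting}, with $\nablat{\rho_0}$ understood as its pullback to $\orb{\eta_0}$.

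The main obstacle I expect is not the linear algebra, which is immediate once an invariant inner product is fixed, but the bookkeeping of equivariance: verifying that each proposed summand is invariant under the correct stabiliser subgroup so that it descends to a genuine, globally defined sub-bundle, and recognising the $\nor{\rho_0}$ piece as a pullback along $\pi$ rather than as a bundle intrinsic to $\orb{\eta_0}$.
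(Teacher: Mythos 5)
Your proof is correct and follows essentially the same route as the paper: restrict the $\rho_0$-induced splitting $\fr{g}=\stab{\rho_0}\oplus\nor{\rho_0}$ to $\nor{\eta_0}$ to obtain the pointwise decomposition $\nor{\eta_0}\simeq\qsp{\stab{\rho_0}}{\stab{\eta_0}}\oplus\nor{\rho_0}$, then propagate it over the orbit by $\mathrm{Ad}_g$-equivariance, defining $\fr{V}^{(\eta_0)}_\eta=\mathrm{Ad}_g\fr{m}$. Your extra bookkeeping --- realising the quotient concretely as an orthogonal complement $\fr{m}$ via an $\mathrm{Ad}$-invariant inner product, checking $H^{(\eta_0)}$-invariance for well-definedness, and reading the second summand as the pullback $\pi^*\nablat{\rho_0}$ --- only makes explicit what the paper's terser proof leaves implicit.
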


\begin{proof}
We know that the (equivariant) splitting of the Lie algebra induced by $\rho_0$, i.e. $\fr{g}=\stab{\rho_0}\oplus \nor{\rho_0}$, induces a splitting in the typical fibre of the tautological bundle $\fr{n}^{(\eta_0)}$, also equivariant:
$$\nablat{\eta_0}_{\eta}=\mathrm{Ad}_g\nor{\eta_0}=\mathrm{Ad}_g\left(\qsp{\stab{\rho_0}}{\stab{\eta_0}}\oplus \nor{\rho_0}\right)$$
with $\eta=\mathrm{Ad}_g\eta_0$. As a matter of fact, defining $\fr{V}_{\eta}^{(\eta_0)} := \mathrm{Ad}_g\qsp{\stab{\rho_0}}{\stab{\eta_0}}$ we get Equation \eqref{tautologicalsplitting}, as the splitting is equivariant, identifying $\nablat{\rho_0}_{\eta} \simeq Ad_{g}\nor{\rho_0}$.
\end{proof}

\begin{proposition}\label{Symplectic-Bott} Let $\pi\colon \orb{\eta_0} \longrightarrow \orb{\rho_0}$ be a fibration of orbits. This comes equipped with an Ehresmann connection 
\begin{equation}
    T_\eta\orb{\eta_0} =  \mathcal V_\eta \oplus \mathcal N_\eta    
\end{equation}
with $\eta\in\orb{\eta_0}$ which restricts to the Bott connection on (the annihilator of) the vertical distribution, under the identification induced by the symplectic structure: $\mathcal N_\eta\simeq\mathcal V^{\Omega_{\eta_0}}_\eta\simeq\mathcal V_\eta^0$, where $\mathcal{V}_\eta^0$ is the annihilator w.r.t. the Killing form.
\end{proposition}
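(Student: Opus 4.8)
The plan is to build the Ehresmann connection directly from the equivariant Lie-algebraic data and then verify the three identifications in turn. First I would take the vertical distribution $\mathcal V$ to be the tangent to the fibres, which at the reference point is $ad(\qsp{\stab{\rho_0}}{\stab{\eta_0}})\eta_0$, and declare the horizontal space to be $\mathcal N_{\eta_0}:=ad(\nor{\rho_0})\eta_0$, using the refinement $\nor{\eta_0}=\qsp{\stab{\rho_0}}{\stab{\eta_0}}\oplus\nor{\rho_0}$ supplied by Lemma \ref{tautologicalLemma}. Since that splitting is $\mathrm{Ad}_{H^{(\rho_0)}}$-invariant, transporting it by the $G$-action yields a globally well-defined, $G$-equivariant distribution $\mathcal N$; the only point to check is independence of the representative $g$ with $\eta=\mathrm{Ad}_g\eta_0$, which is exactly the $H^{(\rho_0)}$-invariance of the splitting.

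Next I would show $\mathcal N=\mathcal V^{\Omega}$ for $\Omega=\Omega^{(\eta_0)}_{KKS}$. Writing $\Omega(ad_K\eta_0,ad_H\eta_0)=\beta(K,H)_{\eta_0}=\langle\eta_0,[K,H]\rangle$ and passing to the root-space decomposition of $\fr{g}$ (complexified), the pairing $\langle\eta_0,[K,H]\rangle$ is nonzero only when the roots of $K$ and $H$ sum to zero; but $K\in\qsp{\stab{\rho_0}}{\stab{\eta_0}}$ carries a root $\beta$ with $\beta(\rho_0)=0$ while $H\in\nor{\rho_0}$ carries a root $\gamma$ with $\gamma(\rho_0)\neq0$, so $\gamma=-\beta$ is impossible and the pairing vanishes, giving $\mathcal N\subseteq\mathcal V^{\Omega}$. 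The same computation applied to two vertical generators shows $\Omega|_{\mathcal V}$ is nondegenerate (the fibres are the coadjoint orbits of $H^{(\rho_0)}$ through $\eta_0$, hence symplectic submanifolds), so $\dim\mathcal V^{\Omega}=\dim T\orb{\eta_0}-\dim\mathcal V=\dim\nor{\rho_0}=\dim\mathcal N$ and the inclusion is an equality. In particular the connection is $\omega$-compatible in the sense of Definition \ref{omegaCompatible} and symplectic by Theorem \ref{SymConn}(1), since $d\Omega=0$.

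For the identification $\mathcal V^{\Omega}\simeq\mathcal V^0$ I would invoke the musical isomorphism $\flat\colon X\mapsto\iota_X\Omega$, which tautologically carries the symplectic orthogonal $\mathcal V^{\Omega}$ onto the annihilator $\mathcal V^0$ of $\mathcal V$. The invariance identity $\beta(K,H)_{\eta_0}=\langle[\eta_0,K],H\rangle=-\langle ad_K\eta_0,H\rangle$ then shows that, under the Killing identification $\fr{g}\simeq\fr{g}^*$, the conormal space $\mathcal V^0$ is realised as the Killing-orthogonal complement of $\mathcal V$ inside $\nor{\eta_0}$, which is precisely $\nor{\rho_0}\simeq\mathcal N$; thus all three spaces coincide.

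Finally, since $\mathcal V$ is integrable (tangent to the fibres), the conormal bundle carries the canonical Bott partial connection $\nabla^{B}_X\alpha=L_X\alpha$ for $X\in\Gamma(\mathcal V)$, and the Ehresmann splitting induces $\nabla_XY=\mathrm{pr}_{\mathcal N}[X,Y]$, which for any complement is automatically the Bott connection on $T\orb{\eta_0}/\mathcal V\simeq\mathcal N$. To see that $\flat$ intertwines this with the Bott connection on $\mathcal V^0$, I would test on the fundamental vector fields $\xi_K$, $K\in\stab{\rho_0}$, which span $\mathcal V$ and satisfy $L_{\xi_K}\Omega=0$ by $G$-invariance of the KKS form; then $L_{\xi_K}(\iota_Y\Omega)=\iota_{[\xi_K,Y]}\Omega=\iota_{\mathrm{pr}_{\mathcal N}[\xi_K,Y]}\Omega$, the last equality because invariance of $\mathcal N$ forces $[\xi_K,Y]$ to remain horizontal. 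Both connections are $C^\infty(\orb{\eta_0})$-linear in the lower slot, so agreement on a spanning set extends to all of $\mathcal V$. I expect the main obstacle to be exactly this last compatibility: one must ensure the Lie-derivative term $\iota_Y L_X\Omega$ genuinely drops out, which is why restricting to fundamental rather than arbitrary vertical fields and exploiting the homogeneity of the whole construction is essential. By contrast, the root-space vanishing of the second step is the computational heart but becomes routine once the decomposition is in place.
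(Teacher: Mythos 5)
Your construction of the connection is identical to the paper's: both read the equivariant refinement $\nor{\eta_0}=\qsp{\stab{\rho_0}}{\stab{\eta_0}}\oplus\nor{\rho_0}$ of Lemma \ref{tautologicalLemma} as an Ehresmann splitting transported by the $G$-action, so your proof is correct and shares the paper's skeleton; the differences lie in how you verify the two identifications. For $\mathcal N\simeq\mathcal V^{\Omega}$ the paper first proves the bracket inclusion \eqref{vectlieder}, $[\nor{\rho_0},\qsp{\stab{\rho_0}}{\stab{\eta_0}}]\subset\nor{\rho_0}$, and then deduces $\Omega(\mathrm{ad}_V\eta,\mathrm{ad}_N\eta)=0$ from the Killing-form identification $\nor{\eta_0}\simeq(\stab{\eta_0})^0$; you instead compute the pairing $\langle\eta_0,[K,H]\rangle$ directly in the root-space decomposition — the same Lie theory in more explicit form. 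You also supply two steps the paper leaves implicit: the dimension count (the fibres being $H^{(\rho_0)}$-coadjoint orbits, hence symplectic) that upgrades the inclusion $\mathcal N\subseteq\mathcal V^{\Omega}$ to an equality — the paper only establishes that the fibres are symplectic later, in Theorem \ref{theosymp} — and the explicit intertwining, via $\flat=\iota_{\bullet}\Omega$, of the induced partial connection $\mathrm{pr}_{\mathcal N}[X,\cdot]$ with the Bott connection $L_X$ on the conormal bundle, which pins down what ``restricts to the Bott connection'' actually means. Finally, for the key invariance $[\mathcal V,\mathcal N]\subset\mathcal N$ you replace the paper's algebraic relation \eqref{vectlieder} with $G$-equivariance of $\mathcal N$: flows of fundamental vector fields preserve an equivariant distribution, so $[\xi_K,Y]$ stays horizontal for $K\in\stab{\rho_0}$. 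This is a cleaner, purely conceptual argument, and you handle its one subtlety correctly — the fields $\xi_K$ are vertical and span $\mathcal V$ only along the fibre through $\eta_0$, but tensoriality in the lower slot together with homogeneity propagates the identity to all fibres. In short: the paper's verification is shorter because it treats the Bott statement as immediate from the bracket inclusion, while yours is more detailed, avoids re-entering the root computation at the Bott stage, and closes two small gaps in the paper's argument.
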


\begin{proof}
By equivariance of the construction of the fibration of orbits (and of the Lie algebra splitting) we can read eq. $\fr{g}=\stab{\rho_0}\oplus\nor{\rho_0}$, together with the condition $\nor{\rho_0}\subset\nor{\eta_0}$ as the definition of an Ehresmann connection. As a matter of fact, acting with the group $G$ on the tangent space is equivalent to acting on the Lie algebra with the adjoint action, and this makes it possible to promote equation \eqref{tautologicalsplitting} to 
$$
T_\eta\orb{\eta_0} =  \mathcal V_\eta \oplus \mathcal N_\eta
$$
where $\mathcal N_\eta\simeq\mathrm{Ad}_g\nor{\rho_0}$ with $\eta=\mathrm{Ad}_g\eta_0$ and $\mathcal V_\eta\simeq\mathrm{Ad}_g\stab{\rho_0}/\stab{\eta_0}$. 

Notice that $\mathcal V_\eta$ is involutive and observe that 
\begin{equation}\label{vectlieder}
[\nor{\rho_0}, \qsp{\stab{\rho_0}}{\stab{\eta_0}}] \subset \nor{\rho_0}.
\end{equation}
Then, using the (equivariant) morphism $\mathrm{ad}_\bullet\eta\colon\nor\rho\subset\nor{\eta}\longrightarrow T_\eta\orb{\eta_0}$ we can interpret equation \eqref{vectlieder} as 
\begin{equation}
\mathcal L_{\mathcal V_\eta}\mathcal N_\eta \subset \mathcal N_\eta
\end{equation}
with $\mathcal L$ the Lie derivative, for all $\eta\in\orb{\eta_0}$. As a matter of fact, for $V\in \qsp{\stab{\rho_0}}{\stab{\eta_0}}$ and $N\in\nor{\rho_0}$, one gets the vectors $\mathrm{ad}_V\eta\in\mathcal V_\eta$ and $\mathrm{ad}_N\eta\in\mathcal{N}_\eta$ (respectively vertical and horizontal) and
\begin{equation}
\mathcal L_{\mathrm{ad}_V\eta}(\mathrm{ad}_N\eta) = [\mathrm{ad}_V\eta, \mathrm{ad}_N\eta]_{\mathfrak{X}(\orb{\eta_0})}=\mathrm{ad}_{[V,N]_{\fr{g}}}\eta
\end{equation}
but $[V,N]_{\fr{g}}\in\nor{\rho_0}$ in virtue of \eqref{vectlieder}, and $\mathrm{ad}_{[V,N]_{\fr{g}}}\eta\in \mathcal N_\eta$.

On the other hand, since for $G$ compact $\mathfrak{n}^{(\eta_0)}\simeq (\mathfrak{h}^{(\eta_0)})^0$ - with respect to the Killing form - we have that $\Omega^{(\eta_0)}_{KKS}(\mathrm{ad}_V\eta,\mathrm{ad}_N\eta)=0$ so that the symplectic orthogonal of $V_\eta$ is identified with the Horizontal subspace: 
$$\mathcal N_\eta\simeq\mathcal V^{\Omega^{(\eta_0)}_{KKS}}_\eta\simeq\mathcal V_\eta^0.$$
This, in particular, allows us to observe that the Ehresmann connection defined by the choice of Horizontal subbundle restricts to the Bott connection on the annihilator of the vertical distribution.
\end{proof}

\begin{corollary}
The tautological tangent bundle $T\orb{\eta_0}\otimes \nablat{\eta_0}$ splits as
\begin{equation}\label{tauttangentsplitting}
T\orb{\eta_0}\otimes \nablat{\eta_0}\simeq \left( \mathcal{V}\otimes \fr{V}^{(\eta_0)}\right) \oplus \left(T\orb{\rho_0}\otimes \nablat{\rho_0}\right).
\end{equation}
\end{corollary}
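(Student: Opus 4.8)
The plan is to derive the splitting \eqref{tauttangentsplitting} by tensoring the two fibration-adapted decompositions already in hand. Proposition \ref{Symplectic-Bott} gives $T\orb{\eta_0}\simeq\mathcal V\oplus\mathcal N$ with $\mathcal V_\eta\simeq\mathrm{Ad}_g\!\left(\qsp{\stab{\rho_0}}{\stab{\eta_0}}\right)$ and $\mathcal N_\eta\simeq\mathrm{Ad}_g\nor{\rho_0}\simeq(\pi^*T\orb{\rho_0})_\eta$, while Lemma \ref{tautologicalLemma} gives $\nablat{\eta_0}\simeq\fr{V}^{(\eta_0)}\oplus\nablat{\rho_0}$ with $\fr{V}^{(\eta_0)}_\eta=\mathrm{Ad}_g\!\left(\qsp{\stab{\rho_0}}{\stab{\eta_0}}\right)$ and $\nablat{\rho_0}_\eta\simeq\mathrm{Ad}_g\nor{\rho_0}$. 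The first step I would record is that, under the canonical equivariant identification $T\orb{\eta_0}\simeq\nablat{\eta_0}$ induced by $\mathrm{ad}_\bullet\eta$, these two splittings correspond term by term: $\mathcal V\simeq\fr{V}^{(\eta_0)}$ and $\mathcal N\simeq\nablat{\rho_0}\simeq\pi^*T\orb{\rho_0}$, the last being the pullback of the base tautological bundle along the horizontal isomorphism of Proposition \ref{Symplectic-Bott}.

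Next I would tensor the two splittings. Each factor carries the vertical/horizontal grading of the fibration, so $T\orb{\eta_0}\otimes\nablat{\eta_0}$ inherits a grading whose pure-type components are the vertical–vertical block $\mathcal V\otimes\fr{V}^{(\eta_0)}$ and the horizontal–horizontal block $\mathcal N\otimes\nablat{\rho_0}$. By the term-by-term correspondence above the latter is $\pi^*\!\left(T\orb{\rho_0}\otimes\nablat{\rho_0}\right)$, i.e. exactly the pulled-back tautological tangent bundle of the base. This produces the two summands appearing on the right-hand side of \eqref{tauttangentsplitting}, and equivariance of the resulting isomorphism is immediate from that of both constituent splittings, while its horizontal/vertical adaptation is built directly from the Ehresmann connection of Proposition \ref{Symplectic-Bott}.

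The hard part will be to show that these two pure-type blocks \emph{exhaust} the tautological tangent bundle, i.e. that the horizontal–vertical coupling contributes no independent summand. The route I would take exploits the fibrewise involution $\sigma$ determined by $\sigma\vert_{\mathcal V}=+\mathbb I$, $\sigma\vert_{\mathcal N}=-\mathbb I$ (well defined precisely because of the term-by-term correspondence): the induced $\sigma\otimes\sigma$ has the two pure-type blocks as its $(+1)$-eigenbundle, which is exactly the right-hand side of \eqref{tauttangentsplitting}, and the canonical factor-swap coming from $T\orb{\eta_0}\simeq\nablat{\eta_0}$ interchanges the two mixed blocks, so that on the diagonal no further equivariant refinement survives. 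The single point requiring care is whether the mixed $(-1)$-eigenbundle must be carried along or is removed by the tautological symmetrisation $T\orb{\eta_0}\simeq\nablat{\eta_0}$; I expect this to be the crux, and I would resolve it by reading \eqref{tauttangentsplitting} as the $\sigma\otimes\sigma$-graded, fibration-adapted decomposition, whose displayed summands are precisely the vertical and horizontal tautological tangent directions along $\pi\colon\orb{\eta_0}\to\orb{\rho_0}$.
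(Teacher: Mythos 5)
Your first two paragraphs reproduce what is in fact the paper's entire argument (the paper disposes of this corollary with ``The proof is immediate''): tensor the vertical/horizontal splitting $T\orb{\eta_0}\simeq\mathcal V\oplus\mathcal N$ of Proposition \ref{Symplectic-Bott} with the splitting $\nablat{\eta_0}\simeq\fr{V}^{(\eta_0)}\oplus\nablat{\rho_0}$ of Lemma \ref{tautologicalLemma}, match the two decompositions term by term under the equivariant identifications, and recognise $\mathcal N\otimes\nablat{\rho_0}$ as the (pullback of the) tautological tangent bundle of the base. Up to that point you and the authors agree completely.

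The problem is your third paragraph, and your instinct that exhaustion is ``the crux'' was exactly right --- but your proposed resolution does not, and cannot, close the gap. Writing $v=\mathrm{rk}\,\mathcal V$ and $n=\mathrm{rk}\,\mathcal N$, the bundle $T\orb{\eta_0}\otimes\nablat{\eta_0}$ has rank $(v+n)^2$ while the right-hand side of \eqref{tauttangentsplitting} has rank $v^2+n^2$, so whenever the fibration is nontrivial ($v,n>0$) the mixed blocks $\mathcal V\otimes\nablat{\rho_0}$ and $\mathcal N\otimes\fr{V}^{(\eta_0)}$ are genuinely nonzero and no argument can make them disappear. Your $\sigma\otimes\sigma$ construction correctly exhibits the right-hand side as the $(+1)$-eigenbundle, but the factor-swap observation only shows that the swap \emph{interchanges} the two mixed blocks; their direct sum is swap-invariant and remains a nonzero equivariant summand, so ``no further equivariant refinement survives'' establishes nothing about its absence. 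What the corollary can honestly assert --- and what the paper actually uses, cf.\ Definition \ref{nablaCompatible} together with Lemma \ref{Phicompatible} and Proposition \ref{DLcompatible} --- is that the right-hand side of \eqref{tauttangentsplitting} is a canonical direct \emph{summand}, the diagonal part of the connection-induced bigrading, in which all the $\nabla$-compatible tensors of interest ($\Phi$, $\mathbb D$, $\mathbb L$, and eventually the Fisher tensor) take their values. Your closing move of ``reading'' \eqref{tauttangentsplitting} as the $\sigma\otimes\sigma$-graded decomposition concedes precisely this, but presented as a proof of the displayed isomorphism it is a non sequitur: identifying the $(+1)$-eigenbundle does not show the $(-1)$-eigenbundle is trivial. (To be fair, the imprecision you uncovered is present in the paper's own ``$\simeq$''; the honest fix is to restate the corollary as a splitting-off of a summand, not to look for a cleverer exhaustion argument.)
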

\begin{proof}
The proof is immediate.
\end{proof}

\begin{remark}
The tautological tangent bundle splits into the tautological tangent bundles associated to the different orbits in the fibration. As a matter of fact, the fibre of the fibre bundle for co-adjoint orbits  is also an orbit $\mathcal{O}^{(\rho_0,\eta_0)}\simeq \qsp{H^{(\rho_0)}}{H^{(\eta_0)}}$ of the stabiliser subgroup $H^{(\rho_0)}$ (cf. Equation \eqref{fibreorbit}), and $\mathcal{V}=TF=T\mathcal{O}^{(\rho_0,\trho_0)}$.
\end{remark}

Consider the equivariant map
$$\iota^{(\rho_0)}\colon \orb{\rho_0} \longrightarrow \fr{g}^*$$
Again, we have the equivariant splitting
\begin{equation}\label{splittingun}
\fr{g}=\stab{\rho_0}\oplus \nor{\rho_0}\end{equation} 
and $\nor{\rho_0}\simeq T_{\rho_0}\orb{\rho_0}$ coincides with the (pointwise) image of $\iota_*^{(\rho_0)}\colon T\orb{\rho_0} \longrightarrow \nablat{\rho_0}$, which can be regarded as a vector valued one-form $\iota_*^{(\rho_0)}\in T^*\orb{\rho_0}\otimes\mathfrak{n}^{(\rho_0)}\simeq \Omega^1(\orb{\rho_0},\nablat{\rho_0})$ and identified with the (parametrised) one-form $d\rho$, with $\rho\in\orb{\rho_0}$. 

The pre-image of 
$$\mathrm{ad}_\bullet \rho_0\colon \fr{u}(n) \longrightarrow T_{\rho_0} \orb{\rho_0},$$
is single-valued in $\nor{\rho_0}$ because of the splitting in \eqref{splittingun}. Then we have

\begin{definition}\label{DandPhimaps}
We define a bundle morphism covering the identity $\Phi^{(\rho_0)}\colon T\orb{\rho_0}\longrightarrow \nablat{\rho_0}$ via
$$\Phi_\rho^{(\rho_0)}(v)=\mathrm{Ad}_g \phi_{\rho_0}\mathrm{Ad}_{g^{-1}}(v)  $$
$\forall v \in T_{\rho}\orb{\rho_0}$ by denoting the inverse of $\mathrm{ad}_\bullet \rho_0$ as $\phi_{\rho_0}\colon T_{\rho_0} \orb{\rho_0} \stackrel{\sim}{\longrightarrow} \nor{\rho_0}$, that is 
$$\phi_{\rho_0}=\pi_{\rho_0}\circ \mathrm{ad}_\bullet\rho_0^{-1},$$
where $\pi_{\rho_0}\colon \fr{g}\longrightarrow \nor{\rho_0}$ and $\rho=\mathrm{Ad}_g \rho_0$. We can regard $\Phi^{(\rho_0)}$ as a vector valued differential form $\Phi^{(\rho_0)}\in  \Omega^1(\orb{\rho_0},\nablat{\rho_0})$. 

We further define $\fr{D}^{(\rho_0)}\colon \nablat{\rho_0} \longrightarrow\nablat{\rho_0}$ to be the bundle morphism that coincides on each fibre with the adjoint action seen as an endomorphism: $\mathrm{ad}_\bullet \rho_0\colon \nor{\rho_0} \longrightarrow \nor{\rho_0}$, i.e. if $\rho=\mathrm{Ad}_g \rho_0$
\begin{equation}\label{Dfrak}
    \fr{D}^{(\rho_0)}_{\rho} = \mathrm{ad}_{\bullet} \rho.
\end{equation}
\end{definition}

\begin{lemma}
We have that
\begin{equation}
    \iota_*^{(\rho_0)} \equiv \fr{D}^{(\rho_0)} \circ \Phi^{(\rho_0)}
\end{equation}
\end{lemma}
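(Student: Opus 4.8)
The plan is to verify the identity fibrewise, using nothing more than the equivariance built into all three maps together with one structural fact about the stabilizer. Fix a point $\rho=\mathrm{Ad}_g\rho_0\in\orb{\rho_0}$ and a tangent vector $v\in T_\rho\orb{\rho_0}$. Since the orbit is swept out by the coadjoint action, I would first write $v=\mathrm{ad}_Y\rho=[Y,\rho]$ for some $Y\in\fr{g}$. Under the inclusion $\iota^{(\rho_0)}\colon\orb{\rho_0}\hookrightarrow\fr{g}^*$, and the (Killing) identification $\fr{g}^*\simeq\fr{g}$, this tangent vector is carried to the element $v$ itself, lying in $\nablat{\rho_0}_\rho=\mathrm{Ad}_g\nor{\rho_0}=T_\rho\orb{\rho_0}$, so that $\iota_*^{(\rho_0)}(v)=v$ is the left-hand side.

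For the right-hand side I would compute $\Phi^{(\rho_0)}$ and then $\fr{D}^{(\rho_0)}$. Equivariance of the bracket gives $\mathrm{Ad}_{g^{-1}}v=[\mathrm{Ad}_{g^{-1}}Y,\rho_0]=\mathrm{ad}_{\mathrm{Ad}_{g^{-1}}Y}\rho_0$, and since $\phi_{\rho_0}=\pi_{\rho_0}\circ\mathrm{ad}_\bullet\rho_0^{-1}$ inverts $\mathrm{ad}_\bullet\rho_0$ and then projects onto $\nor{\rho_0}$, the definition of $\Phi_\rho^{(\rho_0)}$ yields
\[
\Phi_\rho^{(\rho_0)}(v)=\mathrm{Ad}_g\,\phi_{\rho_0}\!\left(\mathrm{ad}_{\mathrm{Ad}_{g^{-1}}Y}\rho_0\right)=\mathrm{Ad}_g\,\pi_{\rho_0}\!\left(\mathrm{Ad}_{g^{-1}}Y\right).
\]
Applying $\fr{D}_\rho^{(\rho_0)}=\mathrm{ad}_\bullet\rho$ and pulling the adjoint action through the bracket then gives
\[
\fr{D}_\rho^{(\rho_0)}\circ\Phi_\rho^{(\rho_0)}(v)=\mathrm{Ad}_g\,\mathrm{ad}_{\pi_{\rho_0}(\mathrm{Ad}_{g^{-1}}Y)}\rho_0=\mathrm{Ad}_g\,[\pi_{\rho_0}(\mathrm{Ad}_{g^{-1}}Y),\rho_0].
\]

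The crucial step — and really the only place where anything happens — is to observe that the projection $\pi_{\rho_0}$ may be dropped inside the bracket. Decomposing $\mathrm{Ad}_{g^{-1}}Y=h+\pi_{\rho_0}(\mathrm{Ad}_{g^{-1}}Y)$ with $h\in\stab{\rho_0}$ according to $\fr{g}=\stab{\rho_0}\oplus\nor{\rho_0}$, the defining property of the stabilizer is $[h,\rho_0]=0$, whence $[\pi_{\rho_0}(\mathrm{Ad}_{g^{-1}}Y),\rho_0]=[\mathrm{Ad}_{g^{-1}}Y,\rho_0]$. In words, $\fr{D}^{(\rho_0)}$ annihilates precisely the stabilizer component that $\Phi^{(\rho_0)}$ had discarded, so the composition reconstitutes the full adjoint action. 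Substituting and using equivariance once more,
\[
\fr{D}_\rho^{(\rho_0)}\circ\Phi_\rho^{(\rho_0)}(v)=\mathrm{Ad}_g\,[\mathrm{Ad}_{g^{-1}}Y,\rho_0]=[Y,\mathrm{Ad}_g\rho_0]=[Y,\rho]=v,
\]
which coincides with $\iota_*^{(\rho_0)}(v)$. As $\rho$ and $v$ were arbitrary, this proves the identity of bundle morphisms. I do not expect a genuine obstacle here: the computation is short, and the one point requiring care is the well-definedness of $\phi_{\rho_0}$, i.e. that $\pi_{\rho_0}\circ\mathrm{ad}_\bullet\rho_0^{-1}$ is independent of the chosen preimage $Y$ — which again follows from $[\stab{\rho_0},\rho_0]=0$ — so that the whole argument is a matter of carefully tracking the equivariant identifications.
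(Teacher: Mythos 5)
Your proof is correct and is exactly the computation the paper has in mind: the paper dismisses the lemma with ``the proof is immediate,'' and your fibrewise unwinding of the definitions of $\Phi^{(\rho_0)}$ and $\fr{D}^{(\rho_0)}$ --- with the key observation that the stabilizer component discarded by $\pi_{\rho_0}$ is exactly what $\mathrm{ad}_\bullet\rho_0$ annihilates --- is the straightforward verification behind that claim. No gaps; your closing remark on the well-definedness of $\phi_{\rho_0}$ matches the paper's own justification via the splitting $\fr{g}=\stab{\rho_0}\oplus\nor{\rho_0}$.
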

\begin{proof}
The proof is immediate.
\end{proof}

\begin{lemma}\label{Phicompatible}
Let $\pi\colon\orb{\eta_0}\longrightarrow\orb{\rho_0}$ be a fibration of co-adjoint orbits. Then, the map $\Phi^{(\eta_0)}\colon T\orb{\eta_0} \longrightarrow \nablat{\eta_0}$ is compatible with the corresponding splittings of $T\orb{\eta_0}$ and $\nablat{\eta_0}$.
\end{lemma}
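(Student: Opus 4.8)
The plan is to reduce everything to the reference fibre at $\eta_0$ and then invoke the $G$-equivariance built into all the objects involved. I read ``compatible with the splittings'' as the requirement that $\Phi^{(\eta_0)}$ be block-diagonal with respect to the tangent splitting $T\orb{\eta_0}=\mathcal V\oplus\mathcal N$ of Proposition \ref{Symplectic-Bott} and the tautological splitting $\nablat{\eta_0}\simeq\fr{V}^{(\eta_0)}\oplus\nablat{\rho_0}$ of Lemma \ref{tautologicalLemma}; that is, $\Phi^{(\eta_0)}(\mathcal V)\subseteq\fr{V}^{(\eta_0)}$ and $\Phi^{(\eta_0)}(\mathcal N)\subseteq\nablat{\rho_0}$.

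First I would evaluate at the reference point. There $g=e$, so by Definition \ref{DandPhimaps} the map $\Phi^{(\eta_0)}_{\eta_0}=\phi_{\eta_0}=\pi_{\eta_0}\circ\mathrm{ad}_\bullet\eta_0^{-1}$ is exactly the inverse of the isomorphism $\mathrm{ad}_\bullet\eta_0\colon\nor{\eta_0}\stackrel{\sim}{\to}T_{\eta_0}\orb{\eta_0}$. The key point is that both splittings descend from the single Killing-orthogonal decomposition $\nor{\eta_0}=(\stab{\rho_0}\cap\nor{\eta_0})\oplus\nor{\rho_0}$, with $\stab{\rho_0}\cap\nor{\eta_0}\simeq\qsp{\stab{\rho_0}}{\stab{\eta_0}}$: on the tangent side one has $\mathcal V_{\eta_0}=\mathrm{ad}_\bullet\eta_0(\stab{\rho_0}\cap\nor{\eta_0})$ and $\mathcal N_{\eta_0}=\mathrm{ad}_\bullet\eta_0(\nor{\rho_0})$, while on the tautological side $\fr{V}^{(\eta_0)}_{\eta_0}=\stab{\rho_0}\cap\nor{\eta_0}$ and $\nablat{\rho_0}_{\eta_0}=\nor{\rho_0}$. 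Applying $\phi_{\eta_0}=(\mathrm{ad}_\bullet\eta_0)^{-1}$ sends a vertical vector $\mathrm{ad}_V\eta_0$ with $V\in\stab{\rho_0}\cap\nor{\eta_0}$ back to $V\in\fr{V}^{(\eta_0)}_{\eta_0}$, and a horizontal vector $\mathrm{ad}_N\eta_0$ with $N\in\nor{\rho_0}$ back to $N\in\nablat{\rho_0}_{\eta_0}$, so $\phi_{\eta_0}$ is manifestly block-diagonal.

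Next I would propagate this along the orbit. For $\eta=\mathrm{Ad}_g\eta_0$, each of the four summands $\mathcal V_\eta,\mathcal N_\eta,\fr{V}^{(\eta_0)}_\eta,\nablat{\rho_0}_\eta$ is the image under $\mathrm{Ad}_g$ of its counterpart at $\eta_0$, and $\Phi^{(\eta_0)}_\eta=\mathrm{Ad}_g\,\phi_{\eta_0}\,\mathrm{Ad}_{g^{-1}}$ is the $\mathrm{Ad}_g$-conjugate of $\phi_{\eta_0}$; a map that is block-diagonal at $\eta_0$ therefore remains block-diagonal at every $\eta$, which is the claim. As a bonus, the same reference-point computation shows the horizontal block is the pullback $\pi^*\Phi^{(\rho_0)}$, since $\mathrm{ad}_N\eta_0\mapsto N$ matches $\Phi^{(\rho_0)}_{\rho_0}\colon\mathrm{ad}_N\rho_0\mapsto N$ under $d\pi$.

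I expect the only delicate point to be the identification $\stab{\rho_0}\cap\nor{\eta_0}\simeq\qsp{\stab{\rho_0}}{\stab{\eta_0}}$ and the check that the copy of this space used to define $\mathcal V$ agrees with the one used to define $\fr{V}^{(\eta_0)}$. Since in both cases it is the Killing-orthogonal complement of $\stab{\eta_0}$ inside $\stab{\rho_0}$, they coincide and no genuine obstruction arises; the remainder is the bookkeeping of the $\mathrm{Ad}_g$-equivariance already exploited throughout Section \ref{Section:Fibration}.
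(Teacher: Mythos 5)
Your proposal is correct and follows essentially the same route as the paper's own proof: both exploit the fact that $\Phi^{(\eta_0)}_\eta=\mathrm{Ad}_g\,\phi_{\eta_0}\,\mathrm{Ad}_{g^{-1}}$ is the (conjugated) inverse of the adjoint-action map at the reference point, together with the $\stab{\rho_0}$-equivariant splitting $\nor{\eta_0}=\qsp{\stab{\rho_0}}{\stab{\eta_0}}\oplus\nor{\rho_0}$, to conclude that $\Phi^{(\eta_0)}$ is block-diagonal. The paper states this in two terse sentences; your version simply makes explicit the reference-point computation and the $\mathrm{Ad}_g$-propagation that the paper leaves implicit.
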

\begin{proof}
The section $\Phi^{(\eta_0)}\in\Omega^1(\orb{\eta_0},\fr{n}^{\eta_0})$ is $\nabla$-compatible by construction, in fact on each fibre it assigns to each tangent vector its preimage along the adjoint action:
$$\Phi^{(\eta_0)}_{\eta}\equiv \mathrm{Ad}_g\phi_{\eta_0}\mathrm{Ad}_{g^{-1}}\colon T_{\eta}\orb{\eta_0}\longmapsto \nor{\eta}$$
Since the inclusion $\nor{\rho_0}\subset\nor{\eta_0}$ yields the $\stab{\rho_0}$-equivariant splitting $\nor{\eta_0}=\nor{\rho_0} \oplus \stab{\rho_0}/\stab{\eta_0}$ we can conclude that the map $\Phi^{(\eta_0)}$ splits accordingly.
\end{proof}

\subsection{Sympletic and K\"ahler characterisation of orbits}

Now that we have introduced all of the relevant constructions for our purposes we can start characterising the fibration of co-adoint orbits. Let us recall that $G$ is a compact Lie group throughout.

\begin{theorem}\label{theosymp}
The fibre bundle of coadjoint orbits $\orb{\eta_0}\to \orb{\rho_0}$ is strongly symplectic.
\end{theorem}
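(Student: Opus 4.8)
The plan is to exhibit the Kirillov--Kostant--Souriau form $\Omega^{(\eta_0)}_{KKS}$ on the total space $\orb{\eta_0}$ as the required fibre-compatible symplectic form, after checking that the underlying bundle is a symplectic fibre bundle in the sense of Definition \ref{symfib}. Concretely, I would verify three things: that each fibre is a symplectic manifold, that the transition maps act as symplectomorphisms, and that $\iota_F^*\Omega^{(\eta_0)}_{KKS}$ equals the intrinsic symplectic form on the fibre. Since $\Omega^{(\eta_0)}_{KKS}$ is symplectic by the general theory of coadjoint orbits, establishing these three points yields strong symplecticity directly from the definition of a strong symplectic fibre bundle.

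For the fibres: by Equation \eqref{fibreorbit} the typical fibre is $\qsp{H^{(\rho_0)}}{H^{(\eta_0)}}$, which is the orbit of $\eta_0$ under the compact stabiliser $H^{(\rho_0)}$, with $H^{(\eta_0)}$ its isotropy subgroup. It is therefore itself a coadjoint orbit of $H^{(\rho_0)}$ and carries a KKS form built from the same trace pairing $\beta(K,H)_{\eta_0}=\mathrm{Tr}[\eta_0,[K,H]]$, now with $K,H$ restricted to $\stab{\rho_0}$. This makes each fibre symplectic, and by equivariance of the construction the structure group $H^{(\rho_0)}$ acts on the fibre through the coadjoint action, which preserves $\beta$; hence the transition maps are symplectomorphisms and the bundle is a symplectic fibre bundle.

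The heart of the argument is fibre-compatibility. Using the Ehresmann splitting of Proposition \ref{Symplectic-Bott}, the vertical distribution is identified at $\eta=\mathrm{Ad}_g\eta_0$ with $\mathcal V_\eta\simeq\mathrm{Ad}_g\qsp{\stab{\rho_0}}{\stab{\eta_0}}$, so a pair of vertical vectors has the form $\mathrm{ad}_V\eta,\mathrm{ad}_W\eta$ with $V,W\in\qsp{\stab{\rho_0}}{\stab{\eta_0}}$. Restricting $\Omega^{(\eta_0)}_{KKS}$ to these and invoking the defining formula gives $\Omega^{(\eta_0)}_{KKS}(\mathrm{ad}_V\eta,\mathrm{ad}_W\eta)=\mathrm{Tr}[\eta,[V,W]]$, which is exactly the fibre KKS form evaluated at the corresponding point. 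Thus $\iota_F^*\Omega^{(\eta_0)}_{KKS}$ coincides with the fibre symplectic form, first at the reference point $\eta_0$ and then everywhere by the $G$-equivariance of both sides. This establishes fibre-compatibility and completes the proof.

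The main obstacle I anticipate is not any single computation but the bookkeeping needed to confirm that the two a priori distinct symplectic structures on the fibre --- the restriction of the ambient KKS form and the intrinsic KKS form of $\qsp{H^{(\rho_0)}}{H^{(\eta_0)}}$ viewed as an $H^{(\rho_0)}$-orbit --- genuinely agree. This reduces to the observation that the trace pairing $\mathrm{Tr}[\eta,[\cdot,\cdot]]$ is insensitive to whether its arguments are read in $\fr{g}$ or in the subalgebra $\stab{\rho_0}$. As a bonus, Proposition \ref{Symplectic-Bott} already identifies the horizontal distribution with the symplectic orthogonal $\mathcal V_\eta^{\Omega^{(\eta_0)}_{KKS}}$, so the $\omega$-compatible symplectic connection furnished by the Corollary following Theorem \ref{SymConn} is obtained for free once fibre-compatibility is in place.
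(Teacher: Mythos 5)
Your proposal is correct and follows essentially the same route as the paper: both identify the fibre with the coadjoint $H^{(\rho_0)}$-orbit of $\eta_0$, show that the restriction of $\Omega^{(\eta_0)}_{KKS}$ to vertical vectors coincides with the intrinsic fibre KKS form (your trace computation is exactly the paper's displayed identity $\Omega_{KKS}^{(\eta_0)}(\mathrm{ad}_X \eta_0,\mathrm{ad}_Y \eta_0)= \Omega_{KKS}^{(\iota^*\eta_0)}(\mathrm{ad}_X\iota_*\eta_0, \mathrm{ad}_Y\iota_*\eta_0)$ for $X,Y\in\stab{\rho_0}$), and then invoke Proposition \ref{Symplectic-Bott} together with Theorem \ref{SymConn} to obtain the compatible symplectic connection. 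Your extra check that the transition maps are symplectomorphisms is a point the paper leaves implicit, but it does not change the substance of the argument.
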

\begin{proof}
The proof is based on Theorem 2.3.3 in \cite{Guillemin}, and it goes by showing that the vertical subspace $\mathcal{V}_\eta\simeq T_\eta \left(\mathrm{Ad}_{H^{(\rho_0)}}\eta_0\right)$ is a symplectic subspace of $T_\eta\orb{\eta_0}$. The point $\rho_{0}$ of $\mathfrak g$  with stabiliser subgroup $H^{(\rho_0)}$ induces the $H^{(\rho_0)}$-equivariant inclusion 
\[\iota_F\colon \stab{\rho_0} \to \fr{g}\]
and the $H^{(\rho_0)}$-equivariant isomorphism
$$\iota^*\colon \mathrm{Ad}_{H^{(\rho_0)}}\eta_0 \stackrel{\sim}{\longrightarrow} \mathrm{Ad}_{H^{(\rho_0)}} \iota^*(\eta_0)$$
Observe that the right-hand side of this morphism is a co-adjoint orbit of the stabiliser subgroup $H^{(\rho_0)}$, and it is therefore a symplectic manifold. Then, for every $X,Y\in\stab{\rho_0}$ we get that 
\begin{equation}
\Omega_{KKS}^{(\eta_0)}(\mathrm{ad}_X \eta_0,\mathrm{ad}_Y \eta_0)= \Omega_{KKS}^{(\iota^*\eta_0)}(\mathrm{ad}_X\iota_*\eta_0, \mathrm{ad}_Y\iota_*\eta_0),\end{equation}
thus $T_\eta \left(\mathrm{Ad}_{H^{(\rho_0)}}\eta_0\right)$ is a symplectic subspace of $T_\eta \orb{\eta_0}$, meaning that the fibres are symplectic. Proposition \ref{Symplectic-Bott} shows that the horizontal and vertical distributions are symplectically orthogonal, thus proving that there is an $\Omega_{KKS}$-compatible connection, which is symplectic by means of Theorem \ref{SymConn} and $\orb{\eta_0} \longrightarrow \orb{\rho_0}$ is a strong symplectic fibre bundle.
\end{proof}

In what follows we will show how it is possible to endow the symplectic fibration of co-adjoint orbits of a compatible complex (and K\"ahler) structure. It is well known \cite{BFR, Kronheimer, Neeb} that each co-adjoint orbit is independently a K\"ahler manifold. Here we present the complex structure on orbits via an alternative construction which is adapted to the symplectic fibration we have outlined so far.
\begin{theorem} \label{TH:compatiblecomplexstructure}

Every symplectic fibre bundle of coadjoint orbits $\pi\colon \orb{\eta_0} \longrightarrow \orb{\rho_0}$ admits a $\nabla$-compatible complex structure $J^{(\eta_0)}\colon T\orb{\eta_0}\longrightarrow T\orb{\eta_0}$. 
\end{theorem}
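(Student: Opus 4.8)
The plan is to build $J^{(\eta_0)}$ at the reference point $\eta_0$ directly out of the adjoint action, and then spread it over the orbit by equivariance. Recall $T_{\eta_0}\orb{\eta_0}\simeq\nor{\eta_0}$, and that, relative to the splitting $\fr{g}=\stab{\eta_0}\oplus\nor{\eta_0}$, the operator $\mathrm{ad}_{\eta_0}$ restricts to an invertible, Killing-skew endomorphism of $\nor{\eta_0}$. Indeed, complexifying $\fr g$ and decomposing into root spaces, $\nor{\eta_0}^{\CC}$ is spanned exactly by the $\fr{g}_\alpha$ with $\alpha(\eta_0)\neq 0$, on which $\mathrm{ad}_{\eta_0}$ acts as the scalar $i\alpha(\eta_0)$; hence $-\mathrm{ad}_{\eta_0}^2$ is positive-definite on $\nor{\eta_0}$. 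I would then set
$$J_{\eta_0} := \mathrm{ad}_{\eta_0}\,\bigl(\sqrt{-\mathrm{ad}_{\eta_0}^2}\bigr)^{-1},$$
the polar phase of $\mathrm{ad}_{\eta_0}$, which acts as $+i$ on the root spaces with $\alpha(\eta_0)>0$ and as $-i$ on those with $\alpha(\eta_0)<0$.

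First I would check $J_{\eta_0}^2=-\mathbb{I}$, which is immediate from the eigenvalue description, and that $J_{\eta_0}$ is $H^{(\eta_0)}$-invariant: for $h\in H^{(\eta_0)}$ one has $\mathrm{Ad}_h\,\mathrm{ad}_{\eta_0}\,\mathrm{Ad}_{h^{-1}}=\mathrm{ad}_{\mathrm{Ad}_h\eta_0}=\mathrm{ad}_{\eta_0}$, and $\mathrm{Ad}_h$ preserves $\nor{\eta_0}$, so $\mathrm{Ad}_h$ commutes with every function of $\mathrm{ad}_{\eta_0}$, in particular with $J_{\eta_0}$. This invariance is precisely what lets me define a global bundle endomorphism by $J^{(\eta_0)}_\eta:=\mathrm{Ad}_g\,J_{\eta_0}\,\mathrm{Ad}_{g^{-1}}$ for $\eta=\mathrm{Ad}_g\eta_0$, independently of the representative $g$, with smoothness clear from the equivariant construction. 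Since coadjoint orbits of compact groups are known to be K\"ahler \cite{Kronheimer,Neeb} and this $J^{(\eta_0)}$ is the canonical invariant complex structure, its integrability comes for free, so $J^{(\eta_0)}$ is a genuine complex structure on $\orb{\eta_0}$.

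It then remains to prove $\nabla$-compatibility in the sense of Definition \ref{nablaCompatible}, i.e. that $J^{(\eta_0)}$ is block-diagonal with respect to the Ehresmann splitting $T_\eta\orb{\eta_0}=\mathcal V_\eta\oplus\mathcal N_\eta$ of Proposition \ref{Symplectic-Bott}. By equivariance it suffices to check this at $\eta_0$, where the splitting reads $\nor{\eta_0}=\nor{\rho_0}\oplus\qsp{\stab{\rho_0}}{\stab{\eta_0}}$. Both summands are sums of root spaces: the horizontal one collects the $\fr{g}_\alpha$ with $\alpha(\rho_0)\neq 0$, the vertical one those with $\alpha(\rho_0)=0$ but $\alpha(\eta_0)\neq 0$. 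Each defining condition is symmetric under $\alpha\mapsto-\alpha$, so each summand is stabilised by every root-space-diagonal operator, in particular by $\mathrm{ad}_{\eta_0}$ and hence by $J_{\eta_0}$. Therefore $J_{\eta_0}=J_{\eta_0}\vert_{\mathcal V}\oplus J_{\eta_0}\vert_{\mathcal N}$, which is exactly $\nabla$-compatibility; since the base is even-dimensional, Lemma \ref{compatibleComplexlemma} also yields fibre-compatibility of $J^{(\eta_0)}$.

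I expect the only genuinely delicate point to be the simultaneous restriction of a single invariant complex structure to both the vertical and the horizontal distribution: a priori the positive-root convention used to complex-structure $\nor{\eta_0}$ must be consistent with those used on the fibre and on the base. The phase-of-$\mathrm{ad}_{\eta_0}$ description is what dissolves this difficulty, since $J_{\eta_0}$ is a function of $\mathrm{ad}_{\eta_0}$ and thus automatically stabilises every $\mathrm{ad}_{\eta_0}$-invariant subspace; the real content then reduces to the two structural facts used above, namely invertibility of $\mathrm{ad}_{\eta_0}$ on $\nor{\eta_0}$ and the $\mathrm{ad}_{\eta_0}$-stability of the two summands, both consequences of the nesting $\stab{\eta_0}\subset\stab{\rho_0}$ (equivalently $\nor{\rho_0}\subset\nor{\eta_0}$).
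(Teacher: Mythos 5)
Your proof is correct, and it ultimately produces the same invariant structure as the paper (rotation by ninety degrees in each root two-plane of $\nor{\eta_0}$, i.e.\ multiplication by $\pm i$ on the complexified root spaces $\fr{g}_{\pm\alpha}$), but the route is genuinely different in how that structure is built and in its generality. The paper chooses a basis of root vectors and defines $\widetilde J\colon V_{\pm\alpha}\mapsto \pm V_{\mp\alpha}$ by hand for a \emph{generic} $\eta_0$ (so that $\stab{\eta_0}=\fr{t}$), restricts it to $\bigoplus_{\alpha(\rho_0)\neq 0}\fr{g}_\alpha\simeq\nor{\rho_0}$ and to its complement to exhibit the block splitting, and only remarks at the end that the case of two non-generic points follows by extending the argument. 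Your definition $J_{\eta_0}=\mathrm{ad}_{\eta_0}\bigl(\sqrt{-\mathrm{ad}_{\eta_0}^2}\bigr)^{-1}$ removes both the choice of basis and the genericity assumption: it makes sense verbatim for an arbitrary $\eta_0\in\fr{t}$, since only invertibility and Killing-skewness of $\mathrm{ad}_{\eta_0}$ on $\nor{\eta_0}$ are used; equivariance is automatic because $\mathrm{Ad}_h$ commutes with $\mathrm{ad}_{\eta_0}$ and hence (finite-dimensional functional calculus: $\sqrt{-\mathrm{ad}_{\eta_0}^2}$ is a polynomial in $\mathrm{ad}_{\eta_0}^2$) with $J_{\eta_0}$; and $\nabla$-compatibility reduces to the observation that a function of $\mathrm{ad}_{\eta_0}$ preserves every $\mathrm{ad}_{\eta_0}$-invariant subspace, with $\mathcal V_{\eta_0}\simeq\qsp{\stab{\rho_0}}{\stab{\eta_0}}$ and $\mathcal N_{\eta_0}\simeq\nor{\rho_0}$ invariant because they are sums of root spaces. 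So you use the same root-space decomposition as the paper but replace the basis-by-basis verification with a structural argument that covers non-regular $\eta_0$ uniformly --- a genuine improvement over the paper's closing remark. One point to tighten: rather than citing \cite{Kronheimer,Neeb} and asserting that your $J^{(\eta_0)}$ is ``the canonical'' complex structure, it is cleaner to note directly that its $+i$-eigenspace $\bigoplus_{\alpha(\eta_0)>0}\fr{g}_\alpha$ is closed under the bracket (since $(\alpha+\beta)(\eta_0)>0$), which is the standard integrability criterion for invariant almost complex structures on homogeneous spaces; that said, the paper never verifies integrability of its $\widetilde J$ at all, and the statement's actual content --- $\nabla$-compatibility --- needs only the almost complex structure, so your proof is at least as complete as the paper's.
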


\begin{proof}
Theorem \ref{theosymp} shows that the choice of two points in the Cartan subalgebra $\eta_0$ and $\rho_0$, such that $\mathfrak{h}_{{\eta_0}}\subset\mathfrak{h}_{\rho_0}$, induces a symplectic fibration with connection. The symplectic connection is once again given by the splitting $\mathfrak{g}=\mathfrak{h}_{{\rho_0}}\oplus\mathfrak{n}_{{\rho_0}}$. Observe that $\mathfrak{n}_{\rho_0}$ coincides with the elements in $\mathfrak{g}$ that do not vanish on $\rho_0$ (under the adjoint action).

Assume that $\eta_0$ is a generic point in the Cartan subalgebra. Since the group is compact, the stabiliser of such generic point (a maximal torus) is contained in the stabiliser of any other non-generic point $\rho_0$, meaning that $\orb{\eta_0}$ is always the total space of a fibration over $\orb{\rho_0}$. 

The stabiliser subalgebra of a generic point coincides with the (real) Cartan subalgebra itself, meaning that the associated splitting for the complexified Lie algebra $\mathfrak g^{\mathbb C}$ reads:
\begin{equation}
\mathfrak{g}^{\mathbb C}=\mathfrak{h}^{\mathbb C}_{\eta_0}\oplus\mathfrak{n}_{\eta_0}^{\mathbb C}\simeq \mathfrak{h}_{\eta_0}^{\mathbb C}\oplus \bigoplus_{\alpha}\mathfrak{g}_\alpha^{\mathbb C}
\end{equation}
where $\mathfrak{n}^{\mathbb C}_{\eta_0}\simeq\bigoplus_{\alpha}\mathfrak{g}^{\mathbb C}_\alpha$, and $\mathfrak{g}_\alpha^{\mathbb{C}}$ denotes the root space associated with root $\alpha$. Let $\{V_\alpha\}$ be a basis of $\mathfrak{g}_\alpha^{\mathbb C}$, and define
\begin{equation}
    X_\alpha = i (V_{\alpha} + V_{-\alpha}); \qquad Y_\alpha = V_{\alpha} - V_{-\alpha},
\end{equation}
with $\alpha\in\Delta$ the space of simple positive roots, so that $\{X_\alpha,Y_{\alpha}\}_{\alpha\in\Delta}$ is a (real) basis of $\mathfrak{n}_{\eta_0}$. Consider the map 
\begin{equation}
\widetilde{J}\colon \bigoplus_{\alpha}\mathfrak{g}_\alpha^{\mathbb{C}} \longrightarrow \bigoplus_{\alpha}\mathfrak{g}_\alpha^{\mathbb{C}};\ \ \ V_{\pm\alpha} \longmapsto \pm i V_{\pm\alpha}.
\end{equation}
On the basis $\{X_{\alpha},Y_{\alpha}\}$ we have $\widetilde{J}(X_{\alpha})=-Y_{\alpha}$ and $\widetilde{J}(Y_{\alpha})=X_{\alpha}$, 
which then defines an equivariant complex structure on $\orb{\eta_0}$, since $\mathfrak{n}_{\eta_0}\simeq T_\rho\orb{\eta_0}$, and the splitting is equivariant. We will denote it equivalently by $\widetilde{J}\equiv J^{(\eta_0)}$ for some generic point $\eta_0$ in the Cartan subalgebra, a notation that will become useful clear in a moment.
Given a root $\alpha$,  $J^{(\eta_0)}$ 
leaves $\mathfrak g_{\alpha}^{\mathbb C}\oplus \mathfrak g_{-\alpha}^{\mathbb C}$ invariant by construction, as it acts by multiplication by $\pm i$ on $g_{\pm\alpha}^\mathbb{C}$.

Now, since $\mathfrak{n}_{\rho_0}\subset\mathfrak{n}_{\eta_0}$ we can induce a complex structure on $\orb{\rho_0}$, compatible with the fibration $\orb{\eta_0}\longrightarrow\mathcal{O}_{{\rho}_0}$. There will be root spaces $\mathfrak{g}^{\mathbb{C}}_\beta$ such that $[\mathfrak{g}_\beta^{\mathbb{C}},\rho_0]=0$, however, if $[\mathfrak{g}_\beta^{\mathbb{C}},\rho_0]\equiv \beta(\rho_0)=0$, then also $\mathfrak{g}_{-\beta}^{\mathbb{C}}$ will have the same property. The map $J^{(\eta_0)}$ then restricts to 
$$J^{(\rho_0)}\colon\bigoplus_{\alpha, \alpha(\rho_0)\not=0}\mathfrak{g}^{\mathbb C}_\alpha\longrightarrow \bigoplus_{\alpha, \alpha(\rho_0)\not=0}\mathfrak{g}^{\mathbb C}_\alpha,$$
but $\bigoplus_{\alpha, \alpha(\rho_0)\not=0}\mathfrak{g}_\alpha^{\mathbb{C}}\simeq\mathfrak{n}_{\rho_0}^{\mathbb{C}}\simeq \mathcal{N}_{\rho_0}^{\mathbb{C}}$, and $J^{(\rho_0)}$ then defines an equivariant complex structure on $\orb{\rho_0}$ via the identification of $T\orb{\rho_0}$ with the horizontal distribution $\mathcal{N}$ induced by the symplectic connection. On the other hand, the map $\widetilde{J}$ also restricts to $\bigoplus_{\alpha, \alpha(\rho_0)=0}\mathfrak{g}_\alpha\simeq\qsp{\mathfrak{h}_{\rho_0}}{\mathfrak{h}_{\eta_0}}\simeq \mathcal{V}_{\eta_0}$ and therefore we conclude that 
$$J^{(\eta_0)} = J^{(\eta_0)}\vert_{\mathcal{N}}\oplus J^{(\eta_0)}\vert_{\mathcal{V}} = J^{(\rho_0)}\oplus J^{(\eta_0)}\vert_{\mathcal{V}},$$ 
which means that $J^{(\eta_0)}$ is $\nabla$-compatible.

Extending the argument, we can easily gather that an analogous compatibility holds also when two non-generic points $\rho_0,\widetilde{\rho}_0\in\fr{t}^+$ are chosen to generate the fibration, as long as $\stab{\rho_0}\subset\stab{\widetilde{\rho}_0}$ or vice-versa. 
\end{proof}

\begin{theorem}\label{TH:kaehlerfibration}
The fibre bundle of coadjoint orbits $\orb{\eta_0}\to \orb{\rho_0}$ is strongly K\"ahler.
\end{theorem}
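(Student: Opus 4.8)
The plan is to deduce the statement by feeding the complex structure constructed in Theorem \ref{TH:compatiblecomplexstructure} into the characterisation of strong K\"ahler fibrations provided by Theorem \ref{Theo:symp+comp=kaehl}. By Theorem \ref{theosymp} we already know that $\orb{\eta_0}\to\orb{\rho_0}$ is strongly symplectic, with fibre-compatible symplectic form $\omega=\Omega_{KKS}^{(\eta_0)}$ and an $\Omega_{KKS}^{(\eta_0)}$-compatible symplectic connection $\nabla$. Theorem \ref{Theo:symp+comp=kaehl} then reduces the entire claim to producing a single object: a fibre-compatible complex structure $J$ on the total space that preserves $\omega$. Theorem \ref{TH:compatiblecomplexstructure} already supplies the candidate $J^{(\eta_0)}$, defined on the root-space model $\nor{\eta_0}\simeq\bigoplus_\alpha\fr{g}_\alpha$ by $V_{\pm\alpha}\mapsto\pm V_{\mp\alpha}$, and shown there to be $\nabla$-compatible. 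Since the base $\orb{\rho_0}$ is itself symplectic, hence even-dimensional, Lemma \ref{compatibleComplexlemma} upgrades this $\nabla$-compatibility to fibre-compatibility at no extra cost. Thus the only genuinely new point to verify is that $J^{(\eta_0)}$ preserves $\Omega_{KKS}^{(\eta_0)}$.

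I would carry out this verification at the reference point $\eta_0$, where $T_{\eta_0}\orb{\eta_0}\simeq\nor{\eta_0}\simeq\bigoplus_\alpha\fr{g}_\alpha$ and $\Omega_{KKS}^{(\eta_0)}(\mathrm{ad}_X\eta_0,\mathrm{ad}_Y\eta_0)=\langle\eta_0,[X,Y]\rangle$ for $X,Y\in\nor{\eta_0}$; equivariance of both $\Omega_{KKS}^{(\eta_0)}$ and $J^{(\eta_0)}$ then propagates the identity to every point of the orbit. The key structural observation is that $[\fr{g}_\alpha,\fr{g}_\beta]\subset\fr{g}_{\alpha+\beta}$, so that $\langle\eta_0,[X,Y]\rangle$ is nonzero only when the bracket lands in the Cartan subalgebra, i.e. only on the pairing of $V_\alpha$ with $V_{-\alpha}$. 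On each such root-plane a direct substitution of $J^{(\eta_0)}V_\alpha=V_{-\alpha}$ and $J^{(\eta_0)}V_{-\alpha}=-V_\alpha$ into $\Omega(J^{(\eta_0)}\cdot,J^{(\eta_0)}\cdot)$ reproduces $\Omega(\cdot,\cdot)$, using only skew-symmetry of the bracket, which is precisely the compatibility $\Omega_{KKS}^{(\eta_0)}(J^{(\eta_0)}\cdot,J^{(\eta_0)}\cdot)=\Omega_{KKS}^{(\eta_0)}$.

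The main obstacle is not the invariance of $\omega$ but the \emph{positivity} of the induced bilinear form $g(\cdot,\cdot)=\Omega_{KKS}^{(\eta_0)}(\cdot,J^{(\eta_0)}\cdot)$, which is what makes $\mathcal K=g-i\omega$ an honest K\"ahler (and not merely pseudo-K\"ahler) structure in the sense of our definition. On the root-plane spanned by $V_\alpha,V_{-\alpha}$ one finds $g$ governed by the value $\alpha(\eta_0)=\langle\eta_0,[V_\alpha,V_{-\alpha}]\rangle$, so definiteness hinges on a coherent choice: $\eta_0$ must lie in the open positive chamber $\fr{t}^+$ and the signs in the definition of $J^{(\eta_0)}$ must be aligned with the chosen system of positive roots, so that $\alpha(\eta_0)>0$ on every root-plane simultaneously. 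I would therefore fix these conventions at the outset and check sign-consistency across all root-planes, including the restriction to the sub-system with $\beta(\rho_0)=0$ that governs the fibre $H^{(\rho_0)}/H^{(\eta_0)}$. Granting this, $J^{(\eta_0)}$ is a fibre-compatible, $\omega$-preserving complex structure with positive-definite $g$, and Theorem \ref{Theo:symp+comp=kaehl} immediately yields that $\orb{\eta_0}\to\orb{\rho_0}$ is strongly K\"ahler; the same computation restricted to the fibre produces its intrinsic K\"ahler structure, matching condition (4) of Definition \ref{Strong}.
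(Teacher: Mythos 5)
Your proposal is correct, and its skeleton is precisely the paper's own proof: Theorem \ref{theosymp} for strong symplecticity, Theorem \ref{TH:compatiblecomplexstructure} for the $\nabla$-compatible $J^{(\eta_0)}$, Lemma \ref{compatibleComplexlemma} to upgrade to fibre-compatibility via even-dimensionality of the base, and finally Theorem \ref{Theo:symp+comp=kaehl}. Where you genuinely add value is in taking the hypotheses of Theorem \ref{Theo:symp+comp=kaehl} seriously: that theorem requires a fibre-compatible complex structure \emph{that preserves} $\omega$, yet the paper's proof invokes it as if fibre-compatibility alone sufficed --- the invariance $\Omega^{(\eta_0)}_{KKS}(J^{(\eta_0)}\cdot,J^{(\eta_0)}\cdot)=\Omega^{(\eta_0)}_{KKS}$ is never verified there, nor in the alternative proof (where it is silently needed even for the \emph{symmetry} of $G^{(\eta_0)}=\Omega^{(\eta_0)}(\cdot,J^{(\eta_0)}\cdot)$). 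Your verification is the right one and it works: by equivariance reduce to the reference point, use $[\fr{g}_\alpha,\fr{g}_\beta]\subset\fr{g}_{\alpha+\beta}$ together with the fact that the trace pairing against $\eta_0\in\fr{t}$ kills every root space, so $\Omega^{(\eta_0)}_{KKS}$ only pairs $V_\alpha$ with $V_{-\alpha}$, and check invariance plane-by-plane. Your second point --- positivity of $g=\Omega^{(\eta_0)}_{KKS}(\cdot,J^{(\eta_0)}\cdot)$, without which one only gets a pseudo-K\"ahler structure --- is likewise a real gap in both of the paper's arguments, since the paper's K\"ahler definitions are phrased in terms of an honest Hermitian $h$; your prescription (fix a positive system, align the signs $V_{\pm\alpha}\mapsto\pm V_{\mp\alpha}$ with it, take $\eta_0$ in the open positive chamber so $\alpha(\eta_0)>0$ on all root-planes, including the sub-system $\beta(\rho_0)=0$ governing the fibre) is exactly what is needed to close it, and it is available here because $G$ is compact. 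In short: same route as the paper, but your version actually discharges the hypotheses the paper leaves implicit.
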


\begin{proof}
The fibration $\orb{\eta_0}\to \orb{\rho_0}$ is strongly symplectic by means of Theorem \ref{theosymp} and therefore admits a symplectic connection $\nabla$. In virtue of Theorem \ref{TH:compatiblecomplexstructure} it also admits a $\nabla$-compatible complex structure, and since the base of the fibration is a symplectic manifold (hence even-dimensional), such complex structure is also fibre-compatible by means of Lemma \ref{compatibleComplexlemma}. The results then follows from Theorem \ref{Theo:symp+comp=kaehl}, as any strongly symplectic fibration with a fibre-compatible complex structure is strongly K\"ahler.
\end{proof}

\begin{proof}[Alternative proof]
Consider a fibration of orbits as above: $\pi\colon\orb{\eta_0} \longrightarrow \orb{\rho_0}$. We have shown how to construct symplectic and complex structures $\Omega^{(\eta_0)},J^{(\eta_0)}$ compatible with the fibration.

Now, on $\orb{\eta_0}$, let $G^{(\eta_0)}$ be the equivariant metric $G^{(\eta_0)}={\Omega^{(\eta_0)}}(\cdot,{J}^{(\eta_0)}\cdot)$, and similarly for any other lower dimensional orbit $\orb{\rho_0}$. 

The splitting of the tangent space $T\orb{\eta_0}$ into Horizontal and Vertical components given by the symplectic connection $T\orb{\eta_0}=\mathcal V\oplus \mathcal N$, $\mathcal{N}_{\eta}\simeq \nor{\eta_0}$ is an orthogonal splitting with respect to $G^{(\eta_0)}$, since $J^{(\eta_0)}\colon \mathcal{N} \longrightarrow \mathcal{N}$.

By construction, then, the K\"ahler structure given by $({G}^{(\eta_0)},{\Omega}^{(\eta_0)},{J}^{(\eta_0)})$ is compatible with the fibration, i.e. it restricts to a K\"ahler structure on the fibres and its restriction to the Horizontal bundle is also related to a K\"ahler structure on the base space via horizontal lift. 
\end{proof}

\section{Coadjoint orbits and quantum information}\label{Section:Orbits}
In this section we will focus on an important particular example of the construction outlined in Section \ref{Section:Fibration}. We will consider the group $G$ to be the n-dimensional unitary group $U(n)$ and we will look at its action on its (dual) Lie algebra $\fr{u}(n)^* \simeq i\fr{u}(n)$, identified with Hermitian matrices. In particular, we will consider the (stratified) space of positive-definite trace-one Hermitian matrices, meaning that not all orbits will be suitable for the construction of this chapter. 

In this context we will be able to rephrase ordinary finite-dimensional quantum mechanics and translate several objects that are relevant for quantum information theory, such as the quantum Fisher information index (here we interpret a multi-parameter generalisation of it as a symmetric tensor on co-adjoint orbits). A thorough account on this can be found in \cite{CES}, while for more on the geometry of the space of mixed states as a stratified space, we refer to \cite{GKM}.

\subsection{Orbit description of mixed states}
In quantum mechanics, the concept of a mixed state, i.e. a superposition of \emph{pure} states, is mathematically formulated as a positive, trace-1 Hermitian matrix $\rho$, which can be seen as a convex combination of rank-1 projectors. Its eigenvalues are the probabilities that a given measurement will yield the value assigned to the respective element of a basis of the underlying Hilbert space (thus they should sum to $1$). Pure states can also be seen as (maximally dengenerate) mixed states, when the probability of finding a state along a given basis element is $1$, that is when $\rho$ is itself a projector.

If we identify the space of $n$-dimensional Hermitian matrices with the dual space to the Lie algebra $\fr{u}(n)$, using the pairing $(X,Y)=i\mathrm{Tr}[XY]$, we can immediately see that such space bears a representation of $U(n)$ that for all practical purposes will be identified with the co-adjoint action.

A mixed state is then an element $\rho\in i\fr{u}(n)$, positive definite and with trace equal to $1$. We denote the space of $n$-dimensional mixed states by $\mathcal{D}(n)\subset \fr{u}(n)^*$.

The very fact that Hermitian matrices are diagonalisable is an embodiment of the fact that each and every co-ajoint orbit intersects the Cartan subalgebra in a finite number of points (exactly the cardinality of the Weyl group, in this case $S_n$). In fact, the Cartan subalgebra of $\fr{u}(n)$ is the algebra of diagonal (anti-)Hermitian matrices $\fr{t}_n$ and the intersection points are the matrices of (permuted) eigenvalues.

Choosing an ordering of the eigenvalues is equivalent to specifying a (positive) Weyl-chamber $\fr{t}^+\subset\fr{t}_n$, and that makes the intersection unique. The diagonal representative $\rho_0=\mathrm{diag}\{\lambda_1\dots \lambda_n\}$ is then an element of a positive Weyl chamber and the reference point for the associated co-adjoint orbit $\orb{\rho_0}$. 

Denote by $\stab{\rho_0}$ the subalgebra of Hermitian matrices that commute with $\rho_0$. The Lie bracket is given by $[A,B]=i(AB - BA)$.

\begin{definition}
Let $\fr{t}^+$ be a positive Weyl chamber in the Cartan subalgebra $\fr{t}_n$ of $i\fr{u}(n)$. A $\Lambda$-mixed state is a positive definite element $\rho_0\in\fr{t}^+$, such that $\mathrm{Tr}\rho_0=1$, $\rho_0=\mathrm{diag}(\Lambda)$, $\Lambda=\{\lambda_i\}_{i=1\dots n}$. The associated co-adjoint orbit $\orb{\rho_0}$ will be called the space of $\Lambda$-mixed states.
\end{definition}

\begin{remark}
Observe that one does not expect the set of positive, trace-$1$ diagonal Hermitian matrices to bear a subalgebra stucture in $i\fr{u}(n)$. Nevertheless, every point in that set - i.e. a mixed state - will belong to the Cartan subalgebra of $i\fr{u}(n)$, and will therefore single-out an $U(n)$ co-adjoint orbit.
\end{remark}

\subsection{The Fisher tensor on a single orbit.}

In this section we will present the construction of the Fisher information tensor on orbits of the unitary group.

Consider the root space decomposition $\fr{u}(n)^{\mathbb C}=\fr{t}^{\mathbb C}_n\bigoplus_{\alpha\in\Delta}\fr{u}_\alpha^{\mathbb C}$, where $\fr{u}^{\mathbb C}_\alpha$ denotes the root space associated with the root $\alpha$. A root vector is an element $X^\alpha\in\fr{u}_\alpha$ satisfying the basic equation $\mathrm{ad}_{X^\alpha} H = \alpha(H)X^\alpha$ for all $H\in\fr{t}_n$. 

The sum of the root spaces is parametrised by off-diagonal Hermitian matrices, with basis $\{v_{ij}\}$ of matrices with a $1$ in the $(i,j)$ entry ($i\not=j$) and zero elsewhere. If we denote by $\mathrm{e}_i$ functionals on diagonal matrices such that  $\mathrm{e}_i(\rho_0)=[\rho_0]_{ii}=\lambda_i$, then a basis of the root system for $\fr{u}(n)^{\mathbb C}$ is given by $\{\alpha_{ij}=\mathrm{e}_i-\mathrm{e}_j\}$.

Denote by $\Delta_{\rho_0}$ the set of roots that do not vanish on $\rho_0$, i.e. $\alpha(\rho_0)\not=0$ and, dually, by $R_{\rho_0}$ the set of root vectors whose adjoint action on $\rho_0$ does not vanish, i.e. $\mathrm{ad}_{X^\alpha}\rho_0=[X^\alpha,\rho_0]\not=0$.  We have that $\alpha\in\Delta_{\rho_0} \iff X^\alpha\in R_{\rho_0}$. The span of all vectors in $R_{\rho_0}$ generates $\nor{\rho_0}$ so that $\alpha_{ij}\not\in\Delta_{\rho_0} \iff v_{ij}\not\in\nor{\rho_0}$. 

Recalling Definition \ref{DandPhimaps} for the map $\fr{D}^{(\rho_0)}$, its action on the basis $\{v_{ij}\}$ reads
$$\fr{D}^{(\rho_0)}_{\rho_0}\equiv \mathrm{ad}_{\bullet}\rho_0:v_{ij}\mapsto \alpha_{ij}(\rho_0) v_{ij} = (\lambda_i-\lambda_j)v_{ij}$$

The reason for this rewriting of the adjoint action will be clear considering the following
\begin{lemma}[\cite{CES,ES2}]
Let $\rho_0$ be a $\Lambda$-mixed state and $A\in\nor{\rho_0}$. Equation
\begin{equation}\label{LinearLequation}
    A=\frac12\{X,\rho_0\},
\end{equation}
where $\{A,B\}=AB + BA$, has a unique solution $X\equiv L_{\rho_0}(A)\in\nor{\rho_0}$ for all $A\in\nor{\rho_0}$. This defines an endomorphism $L_{\rho_0}\colon\nor{\rho_0} \longrightarrow \nor{\rho_0}$ and consequently a bundle morphism $\fr{L}^{(\rho_0)}\colon \nablat{\rho_0}\longrightarrow \nablat{\rho_0}$. In a basis of off-diagonal Hermitian matrices the map $L_{\rho_0}$ reads
$$L_{\rho_0}\colon v_{ij}\mapsto \frac{2}{(\lambda_i+\lambda_j)} v_{ij}$$
\end{lemma}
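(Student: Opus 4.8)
The plan is to reduce the whole statement to the explicit action of the anticommutator on the root basis $\{v_{ij}\}$ and then to invoke positive-definiteness of $\rho_0$. First I would introduce the linear operator $\mathcal{A}_{\rho_0}\colon X\mapsto \tfrac12\{X,\rho_0\}$ on Hermitian matrices, which is well defined since $(AB+BA)^*=BA+AB$ makes the anticommutator of two Hermitian matrices Hermitian. The core computation is its action on the off-diagonal basis: because $\rho_0=\mathrm{diag}(\lambda_1,\dots,\lambda_n)$ is diagonal, left and right multiplication act diagonally as $\rho_0 v_{ij}=\lambda_i v_{ij}$ and $v_{ij}\rho_0=\lambda_j v_{ij}$, whence
\[\mathcal{A}_{\rho_0}(v_{ij})=\tfrac12\{v_{ij},\rho_0\}=\frac{\lambda_i+\lambda_j}{2}\,v_{ij}.\]
This single identity carries all the content: $\mathcal{A}_{\rho_0}$ is diagonal in the root basis, and in particular it preserves $\nor{\rho_0}$, which by the preceding discussion is spanned exactly by the $v_{ij}$ with $\lambda_i\neq\lambda_j$ (the symmetric eigenvalue $\tfrac12(\lambda_i+\lambda_j)$ being insensitive to the ordering of $i,j$, the same computation governs the real Hermitian combinations $v_{ij}+v_{ji}$ and $i(v_{ij}-v_{ji})$).

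Next I would use positivity of the state to settle existence and uniqueness. Since $\rho_0$ is positive definite, $\lambda_i>0$ for every $i$, so $\lambda_i+\lambda_j>0$ for each pair $(i,j)$; hence every eigenvalue of $\mathcal{A}_{\rho_0}\vert_{\nor{\rho_0}}$ is strictly positive. Therefore $\mathcal{A}_{\rho_0}$ restricts to an invertible endomorphism of $\nor{\rho_0}$, yielding for each $A\in\nor{\rho_0}$ a unique $X=L_{\rho_0}(A)\in\nor{\rho_0}$ solving $A=\tfrac12\{X,\rho_0\}$, with the inverse read off immediately as $L_{\rho_0}(v_{ij})=\frac{2}{\lambda_i+\lambda_j}\,v_{ij}$. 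I would emphasise here the contrast with $\fr{D}^{(\rho_0)}=\mathrm{ad}_\bullet\rho_0$, whose eigenvalue on $v_{ij}$ is the \emph{antisymmetric} combination $\lambda_i-\lambda_j$: that operator is invertible on $\nor{\rho_0}$ precisely because $\lambda_i\neq\lambda_j$ there, whereas invertibility of $L_{\rho_0}$ rests instead on the \emph{symmetric} combination $\lambda_i+\lambda_j$ never vanishing, which is exactly where positive-definiteness enters.

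Finally, to promote $L_{\rho_0}$ to the bundle morphism $\fr{L}^{(\rho_0)}\colon\nablat{\rho_0}\to\nablat{\rho_0}$ I would transport it $\mathrm{Ad}$-equivariantly across the fibres, exactly as $\fr{D}^{(\rho_0)}$ is defined in Definition \ref{DandPhimaps}: at $\rho=\mathrm{Ad}_g\rho_0$ set $\fr{L}^{(\rho_0)}_\rho=\mathrm{Ad}_g\circ L_{\rho_0}\circ\mathrm{Ad}_{g^{-1}}$. The only point to verify is independence of the representative $g$, i.e. that $L_{\rho_0}$ commutes with $\mathrm{Ad}_h$ for $h\in H^{(\rho_0)}$; this follows from stabiliser-equivariance of the defining equation, $\mathrm{Ad}_h\{X,\rho_0\}=\{\mathrm{Ad}_h X,\mathrm{Ad}_h\rho_0\}=\{\mathrm{Ad}_h X,\rho_0\}$, so applying $\mathrm{Ad}_h$ to $A=\tfrac12\{X,\rho_0\}$ gives $L_{\rho_0}(\mathrm{Ad}_h A)=\mathrm{Ad}_h L_{\rho_0}(A)$. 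I do not expect a genuine obstacle: the lemma is diagonal in the root basis, and the only conceptual step is to recognise that it is the symmetric eigenvalue $\lambda_i+\lambda_j$ — rendered nonzero by positivity of the state — rather than the antisymmetric one, that controls the inversion.
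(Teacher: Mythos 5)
Your proof is correct and is essentially the intended argument: the paper states this lemma without proof (citing \cite{CES,ES2}), and the remark immediately following it records exactly the diagonal computation you perform, namely that $\tfrac12\{\cdot,\rho_0\}$ acts on the root basis by the symmetric eigenvalue $\tfrac12(\lambda_i+\lambda_j)$ while $\fr{D}^{(\rho_0)}=\mathrm{ad}_\bullet\rho_0$ acts by the antisymmetric one. One refinement is worth making: your invertibility step invokes strict positive-definiteness ($\lambda_i>0$ for every $i$), which suffices for the lemma as literally stated, but the paper also applies this construction to degenerate states such as $\rho_0=\mathrm{diag}\{1,0,\dots,0\}$, which are only positive semi-definite. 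For that case the argument you need — and which you essentially already have, having identified $\nor{\rho_0}$ as the span of the $v_{ij}$ with $\lambda_i\neq\lambda_j$ — is that on $\nor{\rho_0}$ one has $\lambda_i\neq\lambda_j$ together with $\lambda_i,\lambda_j\geq 0$, hence $\lambda_i+\lambda_j>0$; this is precisely the content of the paper's remark that $\lambda_i+\lambda_j=0$ forces $\lambda_i=\lambda_j=0$ and therefore $v_{ij}\notin\nor{\rho_0}$, so the problematic directions never lie in the domain of $L_{\rho_0}$.
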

\begin{remark}
Observe that the condition $\lambda_i + \lambda_j =0$ implies $\lambda_i=\lambda_j=0$ and we have that the domain of $L$ is restricted, i.e. $v_{ij}\not\in\nor{\rho_0}$. Moreover, the pairwise sums of eigenvalues of a diagonal matrix, given by the functionals $\beta_{ij}=e_i + e_j$ can be expressed in terms of the roots of the Lie algebra and the trace operator:
\begin{equation}\label{Map}
    \beta_{ij}=\frac{1}{n}\left(\sum_{(km)\not=(ij)} \alpha_{(km)} + 2\left(n - \left\lfloor\frac{n+2}{2}\right\rfloor \right)\mathbb{T}\right)
\end{equation}
where $\alpha_{ij}(\rho_0)=\lambda_i - \lambda_j$, the sum is on the ordered pairs indexed by $(km)$, and  $\mathbb{T}(\rho_0)=\mathrm{Tr}(\rho_0)$. Interestingly, the functionals $\beta_{ij}$ are analogous to roots under the replacement of matrix commutation with anticommutation:
$$\{v_{ij},H\} = \beta_{ij}(H)v_{ij}.$$
\end{remark}

\begin{remark}\label{Remark:notall}
The construction we outlined here is well defined for all choices of $\rho_0$ only because of the restrictions $\rho_0 > 0$ and $\mathrm{Tr}[\rho_0]=1$. In general, the well-definiteness of the map $L_{\rho_0}$ is not automatically ensured for all points in the Weyl chamber. In fact, it is sufficient to consider the orbit of the point $\xi_0=\mathrm{diag}\{k,-k\}$ in the Cartan subalgebra of $\fr{su}(2)$ to see that the map $L_{\xi_0}\colon \nor{\xi_0}\longrightarrow\nor{\xi_0}$, as the solution of Equation \eqref{LinearLequation}, is not defined.
\end{remark}

The maps $\fr{D}^{(\rho_0)}$ and $\fr{L}^{(\rho_0)}$ are endomorphisms of the tautological bundle $\nablat{\rho_0}$ that cover the identity, however we can consider the following construction.

\begin{definition}($\mathbb{D}$ and $\mathbb{L}$-maps \cite{CES})\label{DLmaps}  We define the maps $$\mathbb{D}^{(\rho_0)},\mathbb{L}^{(\rho_0)}\colon T\orb{\rho_0}\otimes \nablat{\rho_0} \longrightarrow T\orb{\rho_0}\otimes \nablat{\rho_0}$$ 
via $\mathbb{D}=\mathrm{id}\otimes \fr{D}^{(\rho_0)}$ and $\mathbb{L}=\mathrm{id}\otimes \fr{L}^{(\rho_0)}$.
\end{definition}

\begin{remark}
By definition of the $\fr{D}^{(\rho_0)}$ map, we have that $\iota_*\vert_{\rho} (V_\rho) = D_\rho \circ \phi (V_\rho)$, so that, denoting $\iota_*\vert_{\rho}\equiv d\rho$,  we have $d\rho = (\mathbb{D}^* \Phi)_\rho$.
\end{remark}

\begin{propdef}[\cite{CES,ES2}]
Let $\rho_0$ be a $\Lambda$-mixed state, and consider the following equation in $\Omega^1(\orb{\rho_0},\fr{n}^{\rho_0})^{U(n)}$.
\begin{equation}
    d\rho = \frac12 \{d_\ell\rho, \rho\}
\end{equation}
There exists one and only one solution $d_\ell\rho\in\Omega^1(\orb{\rho_0},\fr{n}^{\rho_0})^{U(n)}$, called the \emph{Symmetric Logarithmic differential} of $\rho_0$. Moreover, $d_\ell\rho=\mathbb{L}^*d\rho\equiv(\mathbb{D}\circ\mathbb{L})^*\Phi$.
\end{propdef}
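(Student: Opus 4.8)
The plan is to reduce the bundle-valued equation to the pointwise linear problem already settled in the Lemma establishing the solvability of Equation~\eqref{LinearLequation}. First I would fix a point $\rho=\mathrm{Ad}_g\rho_0\in\orb{\rho_0}$ and a tangent vector $v\in T_\rho\orb{\rho_0}$, and evaluate the proposed identity on $v$: it reads $d\rho(v)=\tfrac12\{(d_\ell\rho)(v),\rho\}$, where both $d\rho(v)$ and the sought $(d_\ell\rho)(v)$ are constrained to lie in $\nor{\rho}=\mathrm{Ad}_g\nor{\rho_0}$. This is exactly Equation~\eqref{LinearLequation} at the point $\rho$, with $A=d\rho(v)$. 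By equivariance of the splitting $\fr{g}=\stab{\rho_0}\oplus\nor{\rho_0}$ and of the anticommutator, the endomorphism $L_{\rho_0}$ transports to $L_\rho=\mathrm{Ad}_g L_{\rho_0}\mathrm{Ad}_{g^{-1}}$ on $\nor{\rho}$, so the unique fibrewise solution is $(d_\ell\rho)(v)=L_\rho\big(d\rho(v)\big)$.

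Next I would assemble these pointwise solutions into a single global object. Since the operators $L_\rho$ are by construction the fibres of the equivariant bundle morphism $\fr{L}^{(\rho_0)}\colon\nablat{\rho_0}\to\nablat{\rho_0}$ supplied by the Lemma, the assignment $v\mapsto L_\rho(d\rho(v))$ is smooth and $U(n)$-invariant, being a composition of two equivariant maps (recall $d\rho=\iota_*^{(\rho_0)}$ is equivariant by construction). This produces $d_\ell\rho:=\mathbb{L}^*d\rho\in\Omega^1(\orb{\rho_0},\nor{\rho_0})^{U(n)}$, where $\mathbb{L}^*$ denotes fibrewise application of $\fr{L}^{(\rho_0)}$. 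Uniqueness is inherited directly from the fibrewise uniqueness of the Lemma: any invariant one-form satisfying the equation must, upon evaluation at each pair $(\rho,v)$, equal $L_\rho(d\rho(v))$, hence coincide with $d_\ell\rho$.

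To reach the final identity $d_\ell\rho=(\mathbb{D}\circ\mathbb{L})^*\Phi$, I would invoke the relation $d\rho=\iota_*^{(\rho_0)}=\fr{D}^{(\rho_0)}\circ\Phi^{(\rho_0)}=\mathbb{D}^*\Phi$ recorded earlier, which gives $d_\ell\rho=\fr{L}^{(\rho_0)}\circ\fr{D}^{(\rho_0)}\circ\Phi^{(\rho_0)}$. The essential observation is that in the off-diagonal basis $\{v_{ij}\}$ both maps act diagonally---$\fr{D}^{(\rho_0)}\colon v_{ij}\mapsto(\lambda_i-\lambda_j)v_{ij}$ and $\fr{L}^{(\rho_0)}\colon v_{ij}\mapsto\tfrac{2}{\lambda_i+\lambda_j}v_{ij}$---so that they commute. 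Commutativity makes the composite independent of the order and identifies $\fr{L}^{(\rho_0)}\circ\fr{D}^{(\rho_0)}\circ\Phi^{(\rho_0)}$ with $(\mathbb{D}\circ\mathbb{L})^*\Phi$, as claimed.

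The one point I expect to require genuine care is the well-definedness of $\fr{L}^{(\rho_0)}$ on the tautological bundle, namely that the factor $\tfrac{2}{\lambda_i+\lambda_j}$ never becomes singular on the relevant directions. This is exactly what the standing hypotheses $\rho_0>0$ and $\mathrm{Tr}\rho_0=1$ secure: $\lambda_i+\lambda_j=0$ would force $\lambda_i=\lambda_j=0$, whence $v_{ij}\notin\nor{\rho_0}$ and the direction falls outside the domain of $L_{\rho_0}$ (cf. Remark~\ref{Remark:notall}). Thus the reduction to the Lemma carries all the content, and outside the positive cone---where this positivity fails---the construction genuinely does not extend.
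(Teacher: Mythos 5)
Your proposal is correct and takes essentially the approach the paper intends: the paper states this Proposition--Definition by citation to \cite{CES,ES2}, its content being exactly the fibrewise reduction to the Lemma on equation \eqref{LinearLequation} (via equivariance, $L_\rho = \mathrm{Ad}_g L_{\rho_0}\mathrm{Ad}_{g^{-1}}$) together with the preceding Remark $d\rho = \mathbb{D}^*\Phi$, which is precisely what you carry out. Your supplementary checks --- commutativity of $\fr{D}^{(\rho_0)}$ and $\fr{L}^{(\rho_0)}$ on the basis $\{v_{ij}\}$, and the positivity argument guaranteeing that $\fr{L}^{(\rho_0)}$ is well defined on $\nablat{\rho_0}$ --- are consistent with the paper's construction and close the only gaps the citation leaves implicit.
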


In this section we will use the construction outlined so far to define the notion of Fisher information tensor on the spaces of $\Lambda$-mixed states.

\begin{definition}[Fisher information tensor, form and metric \cite{CES}]
Let $\rho_0$ be a $\Lambda$-mixed state. We define the \emph{Fisher information tensor} $\fr{F}^{(\rho_0)}$ for the mixed state $\rho_0$ to be the section of $(T^*\orb{\rho_0})^{\otimes 2}$ given by
\begin{equation}
    \fr{F}^{(\rho_0)}\vert_{\rho}=\mathrm{Tr}\left[\rho d_\ell\rho\otimes d_\ell\rho\right]
\end{equation}
Denote by $\fr{W}^{(\rho_0)}$ and $\fr{G}^{(\rho_0)}$ respectively the antisymmetric and symmetric part of the Fisher tensor. We will call them respectively Fisher form and Fisher metric.
\end{definition}

\begin{theorem}[\cite{CES}]
The Fisher form is a symplectic form, in particular
$$\fr{W}^{(\rho_0)}=\left(\mathbb{DL}\right)^* \Omega^{(\rho_0)}_{KKS}.$$
\end{theorem}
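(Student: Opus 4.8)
The plan is to express the Fisher form as the antisymmetric part of the Fisher information tensor and trace through the chain of bundle morphisms $\Phi^{(\rho_0)}$, $\mathbb{D}^{(\rho_0)}$, and $\mathbb{L}^{(\rho_0)}$ to reduce everything to a computation of the Kirillov--Kostant--Souriau form. By the Proposition-Definition above, the symmetric logarithmic differential satisfies $d_\ell\rho = (\mathbb{D}\circ\mathbb{L})^*\Phi$, so the Fisher tensor becomes
\[
\fr{F}^{(\rho_0)}\vert_\rho = \mathrm{Tr}\left[\rho\, (\mathbb{DL})^*\Phi \otimes (\mathbb{DL})^*\Phi\right].
\]
Taking the antisymmetric part $\fr{W}^{(\rho_0)}$ and using that $(\mathbb{DL})^*$ pulls back both tensor factors, I expect to be able to factor the pullback out of the antisymmetrisation, so that $\fr{W}^{(\rho_0)} = (\mathbb{DL})^*\big(\mathrm{Tr}[\rho\,\Phi \wedge \Phi]\big)$, reducing the claim to showing that the antisymmetric part of $\mathrm{Tr}[\rho\,\Phi\otimes\Phi]$ coincides with $\Omega^{(\rho_0)}_{KKS}$.

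\textbf{First I would} work in the basis of off-diagonal root vectors $\{v_{ij}\}$ at the reference point $\rho_0$, where all three maps act diagonally: $\fr{D}^{(\rho_0)}\colon v_{ij}\mapsto(\lambda_i-\lambda_j)v_{ij}$ and $\fr{L}^{(\rho_0)}\colon v_{ij}\mapsto \frac{2}{\lambda_i+\lambda_j}v_{ij}$, so that $\mathbb{DL}$ scales $v_{ij}$ by the factor $\frac{2(\lambda_i-\lambda_j)}{\lambda_i+\lambda_j}$. Since $\Phi^{(\rho_0)}$ assigns to each tangent vector its preimage under $\mathrm{ad}_\bullet\rho_0$ in $\nor{\rho_0}$, the expression $\mathrm{Tr}[\rho\,\Phi\otimes\Phi]$ evaluated on tangent vectors $V=\mathrm{ad}_K\rho_0$, $W=\mathrm{ad}_H\rho_0$ should reproduce, after antisymmetrisation, the bilinear form $\beta(K,H)_{\rho_0}=\mathrm{Tr}[\rho_0,[K,H]]$ up to the diagonal scaling factors introduced by $\mathbb{DL}$. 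Then I would verify that applying $(\mathbb{DL})^*$ exactly compensates those factors so that the end result is the unscaled $\Omega^{(\rho_0)}_{KKS}$, as claimed.

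\textbf{The main obstacle} I anticipate is the bookkeeping of the trace and the scaling factors: one must confirm that the factor $\frac{2(\lambda_i-\lambda_j)}{\lambda_i+\lambda_j}$ from $\mathbb{DL}$, when squared by the two tensor slots and weighted by the trace against $\rho$ (whose diagonal entries are the $\lambda_i$), conspires with the antisymmetrisation to cancel precisely against the corresponding factors in $\Omega^{(\rho_0)}_{KKS}$, rather than producing a rescaled or genuinely different symplectic form. This is delicate because the symmetric part $\fr{G}^{(\rho_0)}$ does \emph{not} simplify in this way — indeed the paper's broader thesis is that the Fisher metric is only a Fisher-K\"ahler (rescaled) structure — so the cancellation must be specific to the antisymmetric sector. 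I would handle this by computing the matrix elements $\fr{W}^{(\rho_0)}(v_{ij},v_{kl})$ explicitly on the root basis, showing they vanish unless $(k,l)=(j,i)$ and that the surviving entries match those of $\Omega^{(\rho_0)}_{KKS}$, invoking equivariance of all the maps to extend the identity from the reference point $\rho_0$ to an arbitrary point $\rho=\mathrm{Ad}_g\rho_0$ on the orbit.

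\textbf{Finally}, having established the identity pointwise at $\rho_0$ and propagated it by equivariance, the equality $\fr{W}^{(\rho_0)}=(\mathbb{DL})^*\Omega^{(\rho_0)}_{KKS}$ of global forms on $\orb{\rho_0}$ follows, and nondegeneracy together with closedness — hence the symplectic property — is inherited from $\Omega^{(\rho_0)}_{KKS}$ since $\mathbb{DL}$ is a fibrewise isomorphism on $\nor{\rho_0}$ (the scaling factors are nonzero precisely because $\rho_0>0$ forbids $\lambda_i+\lambda_j=0$ on the relevant root spaces, as recorded in the Remark following the Lemma on $L_{\rho_0}$).
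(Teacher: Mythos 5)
Your derivation of the identity itself is sound, and it is worth noting that the paper contains no proof of this theorem (it is quoted from \cite{CES}); the closest thing in the paper is the root-basis chart computation inside the proof of the Fisher--K\"ahler theorem of Section 3, which is essentially what you propose. Substituting $d_\ell\rho=(\mathbb{D}\mathbb{L})^*\Phi$ into $\fr{F}^{(\rho_0)}=\mathrm{Tr}[\rho\, d_\ell\rho\otimes d_\ell\rho]$ and antisymmetrising gives
\begin{equation*}
\fr{W}^{(\rho_0)}(V,W)=\tfrac12\,\mathrm{Tr}\bigl[\rho\,[\,\fr{L}\fr{D}\Phi(V),\,\fr{L}\fr{D}\Phi(W)\,]\bigr],
\end{equation*}
which is, up to normalisation conventions, $\Omega^{(\rho_0)}_{KKS}$ evaluated on the $\mathbb{DL}$-rescaled arguments; equivariance propagates the identity from $\rho_0$ to the whole orbit, and nondegeneracy follows because the scalings $c_{ij}=2\alpha_{ij}/\beta_{ij}$ are nonzero on $\nor{\rho_0}$ (positivity of $\rho_0$ forbids $\beta_{ij}=0$ there). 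All of that is correct.

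The genuine gap is your last step, where closedness of $\fr{W}^{(\rho_0)}$ is claimed to be \emph{inherited} from $\Omega^{(\rho_0)}_{KKS}$ because $\mathbb{DL}$ is a fibrewise isomorphism. The operation $(\mathbb{DL})^*$ here is composition with a bundle endomorphism of $T\orb{\rho_0}$ covering the identity, \emph{not} the pullback along a smooth map of manifolds, and such operations do not commute with the de Rham differential: $d\omega=0$ and $\phi$ invertible do not imply $d\bigl(\omega(\phi\cdot,\phi\cdot)\bigr)=0$. In the present situation the failure is concrete. Since $\fr{W}^{(\rho_0)}$ is a $U(n)$-invariant $2$-form on $\orb{\rho_0}\cong U(n)/H^{(\rho_0)}$, it is closed if and only if, evaluated through the identification $X\mapsto \mathrm{ad}_X\rho_0$, it satisfies the Chevalley--Eilenberg cocycle condition
\begin{equation*}
\fr{W}([X,Y],Z)+\fr{W}([Y,Z],X)+\fr{W}([Z,X],Y)=0,\qquad X,Y,Z\in\nor{\rho_0},
\end{equation*}
with brackets projected onto $\nor{\rho_0}$. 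The only nonvanishing pairings on root vectors are $\fr{W}(v_{ij},v_{ji})\propto c_{ij}^2\alpha_{ij}=4\alpha_{ij}^3/\beta_{ij}^2$, so on the triple $(v_{12},v_{23},v_{31})$ the condition reduces to the additivity relation $\alpha_{13}^3/\beta_{13}^2=\alpha_{12}^3/\beta_{12}^2+\alpha_{23}^3/\beta_{23}^2$; for $\Omega_{KKS}$ the analogous relation is $\alpha_{13}=\alpha_{12}+\alpha_{23}$, which holds identically (this is why KKS is closed), but the rescaled relation fails as soon as three eigenvalues are distinct, e.g.\ for $\Lambda=(1/2,1/3,1/6)$ the three terms are $1/12$, $1/150$, $1/54$. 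When at most two eigenvalues are distinct (projective spaces, Grassmannians) the scaling $c_{ij}$ is a single constant, $\fr{W}^{(\rho_0)}$ is a constant multiple of $\Omega^{(\rho_0)}_{KKS}$, and closedness does hold --- exactly the degenerate situation singled out later in the paper. So your argument delivers antisymmetry and nondegeneracy, but the ``symplectic'' (closedness) part cannot be obtained by inheritance, and this computation indicates it is in fact the delicate point of the quoted statement itself for generic orbits, not a formality to be waved through.
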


We conclude the section with the following
\begin{definition}
The Fisher form induces the Poisson bivector $\fr{P}^{(\rho_0)}\in \bigwedge^2(T\orb{\rho_0})$ via the inverse $\fr{P}^{(\rho_0)}=(\fr{W}^{(\rho_0)})^{-1}$. We can then construct the tensor \begin{equation}
    \fr{J}^{(\rho_0)}:= \mathsf{C}_{(1,1)}(\fr{F}^{(\rho_0)},\fr{P}^{(\rho_0)})\in \mathbb{T}^{1,1}\orb{\rho_0}
\end{equation} 
where $\mathsf{C}_{(1,1)}$ is the tensor-contraction operator that outputs a section of $\mathbb{T}^{1,1}\orb{\rho_0}$.
\end{definition}

\begin{theorem}
Let $\rho_0$ be a $\Lambda$-mixed. The Fisher information tensor on $\orb{\rho_0}$ is a Fisher-K\"ahler structure for all $\rho_0\in\fr{t}$ and the $(1,1)$-tensor $\fr{J}^{(\rho_0)}$ is a Fisher structure. It reduces to a complex structure if and only if $\Lambda=\{1,0\dots,0\}$, in which case the Fisher information tensor is a K\"ahler structure.
\end{theorem}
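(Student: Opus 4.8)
The plan is to reduce the entire statement to a fibre-wise, indeed root-space-wise, computation, exploiting that every object involved — the maps $\fr{D}^{(\rho_0)},\fr{L}^{(\rho_0)}$ and $\mathbb{D}^{(\rho_0)},\mathbb{L}^{(\rho_0)}$ of Definition \ref{DLmaps}, the form $\Omega^{(\rho_0)}_{KKS}$, and hence $\fr{F}^{(\rho_0)}$ together with its symmetric part $\fr{G}^{(\rho_0)}$ and antisymmetric part $\fr{W}^{(\rho_0)}$ — is $U(n)$-equivariant and simultaneously diagonal in the root decomposition $\nor{\rho_0}\simeq\bigoplus_{\alpha\in\Delta_{\rho_0}}\fr{u}_\alpha$. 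By equivariance it suffices to work at the reference point $\rho_0$ and there on each two-plane $\Pi_{ij}$ attached to a pair $(i,j)$ with $\lambda_i\neq\lambda_j$; the global tensors are block-diagonal over these planes. On a single plane I would establish three things — that $\fr{J}^{(\rho_0)}$ is a Fisher structure (Definition \ref{Fish}), that $(\fr{G}^{(\rho_0)},\fr{W}^{(\rho_0)},\fr{J}^{(\rho_0)})$ verifies the Fisher–K\"ahler compatibility, and that $\fr{J}^{(\rho_0)}$ squares to $-\mathbb{I}$ exactly in the degenerate case — and then reassemble by equivariance.

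First I would evaluate the symmetric logarithmic differential on $\Pi_{ij}$: from $d\rho=\tfrac12\{d_\ell\rho,\rho\}$ and the eigenvalue action $\fr{L}^{(\rho_0)}\colon v_{ij}\mapsto \tfrac{2}{\lambda_i+\lambda_j}v_{ij}$ one gets $d_\ell\rho(V)=\tfrac{2}{\lambda_i+\lambda_j}V$ for $V\in\Pi_{ij}$, so a short trace computation with $\rho_0$ diagonal yields, up to a common normalisation and in a real frame adapted to $\Pi_{ij}$,
\[
\fr{G}^{(\rho_0)}\big\vert_{\Pi_{ij}}\ \propto\ \frac{(\lambda_i-\lambda_j)^2}{\lambda_i+\lambda_j}\,\mathbb{I}_2,
\qquad
\fr{W}^{(\rho_0)}\big\vert_{\Pi_{ij}}\ \propto\ \frac{(\lambda_i-\lambda_j)^3}{(\lambda_i+\lambda_j)^2}\begin{pmatrix}0&-1\\1&0\end{pmatrix},
\]
the latter consistent with $\fr{W}^{(\rho_0)}=(\mathbb{DL})^*\Omega^{(\rho_0)}_{KKS}$ and with $\Omega^{(\rho_0)}_{KKS}\vert_{\Pi_{ij}}\propto(\lambda_i-\lambda_j)$. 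Contracting $\fr{F}^{(\rho_0)}$ with $\fr{P}^{(\rho_0)}=(\fr{W}^{(\rho_0)})^{-1}$ as in the definition of $\fr{J}^{(\rho_0)}$ then produces the frame-independent block
\[
\fr{J}^{(\rho_0)}\big\vert_{\Pi_{ij}}=\frac{\lambda_i+\lambda_j}{\lambda_i-\lambda_j}\begin{pmatrix}0&1\\-1&0\end{pmatrix},
\]
which is skew-adjoint as a matrix and squares to $-\big(\tfrac{\lambda_i+\lambda_j}{\lambda_i-\lambda_j}\big)^2\mathbb{I}_2$. Collecting the blocks shows $\fr{J}^{(\rho_0)}$ is a Fisher structure in the sense of Definitions \ref{LinFishStr} and \ref{Fish}, with roots $d_{ij}=\tfrac{\lambda_i+\lambda_j}{|\lambda_i-\lambda_j|}$; positivity of $\fr{G}^{(\rho_0)}$ and nondegeneracy of $\fr{W}^{(\rho_0)}$ on the tangent directions — where $\lambda_i\neq\lambda_j$, hence $\lambda_i+\lambda_j>0$ — make $(\fr{G}^{(\rho_0)},\fr{W}^{(\rho_0)},\fr{J}^{(\rho_0)})$ a genuine Fisher–K\"ahler triple, the relation $\fr{G}^{(\rho_0)}(\cdot,\cdot)=\fr{W}^{(\rho_0)}(\cdot,\fr{J}^{(\rho_0)}\cdot)$ being built into the contraction.

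For the last clause I would analyse when all roots equal $1$. Since $d_{ij}=1$ is equivalent to $\lambda_i+\lambda_j=|\lambda_i-\lambda_j|$, i.e.\ $\min(\lambda_i,\lambda_j)=0$, the Fisher structure degenerates to an honest complex structure precisely when every pair of \emph{distinct} eigenvalues contains a vanishing member. This forbids two distinct nonzero eigenvalues, so the spectrum carries a single nonzero value; together with $\mathrm{Tr}\,\rho_0=1$ this isolates the rank-one (projector) orbit $\Lambda=\{1,0,\dots,0\}$, i.e.\ $\orb{\rho_0}\simeq\mathbb{CP}^{n-1}$. There $\fr{J}^{(\rho_0)}{}^2=-\mathbb{I}$, the metric $\fr{G}^{(\rho_0)}$ reduces to a multiple of the Fubini–Study metric, and $\fr{F}^{(\rho_0)}$ is a K\"ahler structure, closing the equivalence.

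The main obstacle I anticipate is bookkeeping rather than conceptual. One must track the factors of $i$ and the Hermitian-versus-root-vector bases carefully, so that $\mathsf{C}_{(1,1)}(\fr{F}^{(\rho_0)},\fr{P}^{(\rho_0)})$ returns a genuinely real $(1,1)$-tensor: the imaginary off-diagonal part of $\mathrm{Tr}[\rho\,d_\ell\rho\otimes d_\ell\rho]$ is exactly what reproduces $\fr{W}^{(\rho_0)}$, while its real diagonal part $\fr{G}^{(\rho_0)}$ feeds the rotation block. One must also treat the boundary of the positive cone with care, since the complex-structure case genuinely requires $\rho_0$ to be only positive \emph{semi}definite — a strictly positive spectrum can never satisfy $\min(\lambda_i,\lambda_j)=0$ on any tangent direction. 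Smoothness and equivariance of the block assembly away from $\rho_0$ are then routine.
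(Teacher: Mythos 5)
Your proposal follows essentially the same route as the paper's own proof: by equivariance you reduce to the reference point $\rho_0$, decompose $\nor{\rho_0}$ into the two-planes spanned by the off-diagonal directions $v_{ij}$ with $\lambda_i\neq\lambda_j$, and compute blockwise. Your block formulas $\fr{G}\vert_{\Pi_{ij}}\propto \alpha_{ij}^2/\beta_{ij}$, $\fr{W}\vert_{\Pi_{ij}}\propto \alpha_{ij}^3/\beta_{ij}^2$, $\fr{J}\vert_{\Pi_{ij}}\propto\beta_{ij}/\alpha_{ij}$ and $\fr{J}^2=-(\beta_{ij}/\alpha_{ij})^2\,\mathbb{I}$ (with $\alpha_{ij}=\lambda_i-\lambda_j$, $\beta_{ij}=\lambda_i+\lambda_j$) coincide exactly with the paper's computation, which is carried out in complex exponential coordinates $z_I$ rather than real two-planes; the difference is purely notational. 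The Fisher and Fisher--K\"ahler parts of the statement are therefore established just as in the paper.

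The gap is in the last clause. From ``every pair of distinct eigenvalues contains a vanishing member'' you correctly deduce that all nonzero eigenvalues are equal, but the step ``together with $\mathrm{Tr}\,\rho_0=1$ this isolates the rank-one orbit'' is a non sequitur: the common nonzero value may occur with multiplicity $k>1$, giving $\Lambda=\{1/k,\dots,1/k,0,\dots,0\}$. On such an orbit (a Grassmannian, e.g. $\mathrm{Gr}(2,4)$ for $n=4$ and $\Lambda=\{\tfrac12,\tfrac12,0,0\}$) every tangent pair $(i,j)$ has $\lambda_i\neq\lambda_j$ with $\min(\lambda_i,\lambda_j)=0$, hence $\alpha_{ij}=\beta_{ij}=1/k$, so your own formula gives $\fr{J}^2=-\mathbb{I}$ and the Fisher tensor is K\"ahler there as well. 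What your computation actually proves is that $\fr{J}^{(\rho_0)}$ is a complex structure iff $\Lambda$ is a normalized projector spectrum $\{1/k,\dots,1/k,0,\dots,0\}$, not iff $\Lambda=\{1,0,\dots,0\}$; pairs with $\lambda_i=\lambda_j$ are absent from the tangent space and impose no constraint, which is precisely why multiplicity escapes the criterion. To be fair, the paper's proof has the identical hole --- it asserts without argument that $\beta_I/\alpha_I=1$ for all $I$ ``is satisfied only by $\rho_0=\{1,0,\dots,0\}$'' --- so your attempt faithfully reproduces the published argument, gap included; but as a standalone proof the ``only if'' direction does not follow as written. Your closing remark that the degenerate case forces $\rho_0$ to be merely positive \emph{semi}definite, in tension with the paper's definition of $\Lambda$-mixed states as positive definite, is a correct and worthwhile observation.
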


\begin{proof}
Pick $\rho_0=\mathrm{diag}\{\lambda_1,\dots,\lambda_n\}$. Let $V_A$ be an open in $\orb{\rho_0}$ and pick the local chart $\varphi_A$ given by exponentiation of the Lie algebra on $V_A$. If $\{\sigma_I\}$ is a basis of off-diagonal Hermitian matrices in $\nor{\rho_0}$ (the index $I$ denoting an ordered pair of indices $I=(i,j)$ with $1\leq i<j\leq n$) we can write 
$$U(z^{(A)})=\exp\{i\sum_{I}z^{(A)}_I \sigma_I\}$$
with $z^{(A)}_I$ complex numbers in the chart $(V_A,\varphi_A)$ Then we have that
$$
d\rho_0 = \left(\begin{array}{cccc}
    0 & \dots & \alpha_I(\rho_0) [dz^{(A)}_I]^* & \dots \\
    \vdots & \ddots & \vdots & \vdots \\
    \alpha_I(\rho_0) dz^{(A)}_I & \dots & 0 & \vdots \\
    \vdots & \dots & \dots & 0
\end{array}\right)
$$
If follows from a simple computation that (we drop the superscript $(A)$)
\begin{equation}
\fr{F}_{\rho_0}=\sum_I \left(2\frac{\alpha_I}{\beta_I} \right)^2\left(\beta_I dz_I \odot dz_I^* - i\alpha_I dz_I \wedge dz_I^* \right)
\end{equation}

It is easy to read the symmetric and anti symmetric parts of the Fisher tensor from the previous formula. Consequently, we can construct the bivector $\fr{P}^{(\rho_0)}$ as
\begin{equation}
    \fr{P}^{(\rho_0)}= i\sum_I \left(2\frac{\alpha_I}{\beta_I} \right)^{-2}\alpha_I^{-1} \frac{\partial}{\partial z_I}\wedge \frac{\partial}{\partial z_I^*}
\end{equation}
so that 
\begin{equation}\label{generalcomplex}
    \fr{J}^{(\rho_0)}=i\frac{\beta_I}{\alpha_I} \left(dz_I\otimes \frac{\partial}{\partial z_I} + dz_I^*\otimes \frac{\partial}{\partial z_I^*}\right)
\end{equation}

Looking at Equation \eqref{generalcomplex} it is easy to see that it squares to 
$$(\fr{J}^{(\rho_0)})^2 = - {\Delta}^{(\rho_0)}$$
where the tensor $\Delta^{(\rho_0)}$ acts on a chart by a multiple of the identity $\Delta^{(\rho_0)}_I= \left(\frac{\beta_I}{\alpha^I}\right)^2 \mathbb{I}$. Thus, we can conclude that $\fr{J}^{(\rho_0)}$ is a complex structure if and only if $\frac{\beta_I}{\alpha_I}=1$ for all indices $I$, a condition that is satisfied only by $\rho_0=\{1,0,\dots,0\}$.

More generally, it follows from the construction that $\fr{G}^{(\rho_0)}(\cdot,\cdot)=\fr{W}^{(\rho_0)}(\cdot,\fr{J}^{(\rho_0)}\cdot)$ and the Fisher information tensor yields the Fisher-K\"ahler structure $(\fr{G}^{(\rho_0)},\fr{W}^{(\rho_0)},\fr{J}^{(\rho_0)})$ for all $\Lambda$-mixed states $\rho_0$. In particular, it yields a K\"ahler structure for a degenerate $\rho_0$.

\end{proof}

\subsection{The Fisher tensor on the fibration of orbits}
In this section we apply the general theory of fibration of co-adjoint orbits to the case of mixed states, specifying to the group $U(n)$ and adapting the discussion of Section \ref{Section:Fibration}. This will allow us to promote the constructions of the previous section to the fibre bundle of co-adjoint orbits corresponding to mixed states.

Anytime we are given two $\Lambda$-mixed states,say $\rho_0$ and $\trho_0$ with the property that $\stab{\trho_0}\subset\stab{\rho_0}$ we know we can construct the fibre bundle
$$\pi\colon\orb{\trho_0}\longrightarrow \orb{\rho_0}$$
and due to Theorem \ref{TH:kaehlerfibration} we know that it is endowed with a compatible K\"ahler structure and a symplectic connection, coming from the equivariant splitting $i\fr{u}(n)=\stab{\rho_0}\oplus \nor{\rho_0}$. 

\begin{remark}
Observe that in a fibration of $\Lambda$-mixed states, all the spaces involved are $\Lambda$-mixed spaces of states, for different states. For instance, when we are considering $\trho_0=\mathrm{diag}\{\lambda_1,\dots,\lambda_n\}$ for $\lambda_i\not=\lambda_j$ and $\rho_0=\mathrm{diag}\{1,0,\dots,0\}$, the fibres of $\orb{\trho_0}\to\orb{\rho_0}$ are given by the orbits of $\trho_0$ under the action of the subgroup $H^{(\rho_0)}$.
\end{remark}

Recalling the bundle morphisms of Definition \ref{DLmaps} we have the following.
\begin{proposition}\label{DLcompatible}
The maps $\mathbb{D}^{\trho_0}$ and $\mathbb{L}^{\trho_0}$ are $\nabla$-compatible, i.e. 
$$\mathbb{D}^{\trho_0} = \mathbb{D}^{\trho_0}_{\fr{V}} \oplus \mathbb{D}^{\rho_0}_{\fr{N}}$$
where $\mathbb{D}^{\trho_0}_{\fr{V}}\colon \mathcal{V}\otimes \fr{V}^{\trho_0}\longrightarrow \mathcal{V}\otimes \fr{V}^{\trho_0}$, while $\mathbb{D}^{\rho_0}_{\fr{N}}\colon \mathcal{N}\otimes \fr{N}^{\rho_0}\longrightarrow \mathcal{N}\otimes \fr{N}^{\rho_0}$, and similarly for $\mathbb{L}^{\trho}$.
\end{proposition}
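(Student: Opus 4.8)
The plan is to reduce the statement to a root-space computation at the reference point and then propagate it by equivariance. Since $\mathbb{D}^{\trho_0}=\mathrm{id}\otimes\fr D^{(\trho_0)}$ and $\mathbb{L}^{\trho_0}=\mathrm{id}\otimes\fr L^{(\trho_0)}$ act as the identity on the tangent factor $T\orb{\trho_0}=\mathcal V\oplus\mathcal N$, and the tautological tangent bundle decomposes as in \eqref{tauttangentsplitting} into the two blocks $\mathcal V\otimes\fr V^{(\trho_0)}$ and $\mathcal N\otimes\nablat{\rho_0}$ (with $\fr N^{\rho_0}\simeq\nablat{\rho_0}$), the \emph{block-diagonality} of $\mathbb{D}^{\trho_0}$ and $\mathbb{L}^{\trho_0}$ with respect to this decomposition is equivalent to the fibre endomorphisms $\fr D^{(\trho_0)},\fr L^{(\trho_0)}$ preserving each summand of the splitting $\nablat{\trho_0}\simeq\fr V^{(\trho_0)}\oplus\nablat{\rho_0}$ of Lemma~\ref{tautologicalLemma}. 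This is exactly $\nabla$-compatibility in the sense of Definition~\ref{nablaCompatible}, so the first step I would take is to record this reduction.

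The key observation is then that, at the reference point $\trho_0$, both maps are diagonal in the root-space decomposition $\nor{\trho_0}\simeq\bigoplus_{\alpha}\fr u_\alpha$. Indeed $\fr D^{(\trho_0)}=\mathrm{ad}_\bullet\trho_0$ acts on each root space by the scalar $\alpha(\trho_0)$, while $\fr L^{(\trho_0)}$ sends the root vector $v_{ij}$ to $\tfrac{2}{\beta_{ij}(\trho_0)}v_{ij}$, i.e. it is diagonal in the same basis. On the other hand the two summands of the splitting are themselves sums of whole root spaces: using $\stab{\trho_0}\subset\stab{\rho_0}$, equivalently $\nor{\rho_0}\subset\nor{\trho_0}$, one has $\nablat{\rho_0}\leftrightarrow\nor{\rho_0}=\bigoplus_{\alpha(\rho_0)\neq0}\fr u_\alpha$ and $\fr V^{(\trho_0)}\leftrightarrow\qsp{\stab{\rho_0}}{\stab{\trho_0}}=\bigoplus_{\alpha(\rho_0)=0,\ \alpha(\trho_0)\neq0}\fr u_\alpha$. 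Any endomorphism diagonal in the root-space decomposition automatically preserves every sum of root spaces, so $\fr D^{(\trho_0)}$ and $\fr L^{(\trho_0)}$ preserve both $\fr V^{(\trho_0)}$ and $\nablat{\rho_0}$; note that the inclusion $\nor{\rho_0}\subset\nor{\trho_0}$ is precisely what makes these maps well defined (and invertible, by positivity of $\trho_0$) on the horizontal summand.

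Both the maps and the splitting are built equivariantly — $\fr D^{(\trho_0)}_{\rho}=\mathrm{ad}_\bullet\rho$ and $\fr L^{(\trho_0)}$ are equivariant bundle morphisms covering the identity (Definition~\ref{DandPhimaps}), and the decomposition of $\nablat{\trho_0}$ is $\mathrm{Ad}_g$-equivariant (Lemma~\ref{tautologicalLemma}) — so the block structure verified at the reference point holds fibrewise over all of $\orb{\trho_0}$. Tensoring with the identity on the tangent factor, which trivially preserves $\mathcal V$ and $\mathcal N$ separately, then yields $\mathbb{D}^{\trho_0}=\mathbb{D}^{\trho_0}_{\fr V}\oplus\mathbb{D}^{\rho_0}_{\fr N}$ and its analogue for $\mathbb{L}^{\trho_0}$, the two summands being the restrictions to $\mathcal V\otimes\fr V^{(\trho_0)}$ and $\mathcal N\otimes\nablat{\rho_0}$.

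I do not expect a serious obstacle, as the argument is ultimately that a diagonal endomorphism respects a splitting into coordinate subspaces; the two points requiring care are that the connection-induced splitting is genuinely assembled from \emph{entire} root spaces (which is what forces block-diagonality) and the inclusion $\nor{\rho_0}\subset\nor{\trho_0}$ guaranteeing the maps are defined on the horizontal part. The one statement I would take care not to over-claim is that the horizontal block $\mathbb{D}^{\rho_0}_{\fr N}$ literally coincides with the base-orbit map $\mathbb{D}^{\rho_0}$: its eigenvalue on $\fr u_\alpha$ is $\alpha(\trho_0)$ rather than $\alpha(\rho_0)$, so it is the restriction of $\mathbb{D}^{\trho_0}$ to the horizontal data — of the same form as, but in general numerically distinct from, the base map. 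Accordingly I would phrase the identification of the horizontal block as \emph{naming} the restriction rather than as an equality of maps.
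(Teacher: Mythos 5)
Your proof is correct and follows essentially the same route as the paper's: equivariance reduces everything to the reference point $\trho_0$, where $\fr{D}^{(\trho_0)}$ and $\fr{L}^{(\trho_0)}$ act diagonally on the basis $\{v_{ij}\}$, and the splitting of Lemma \ref{tautologicalLemma} is assembled from whole root spaces according to whether $\mathrm{ad}_{v_{ij}}\rho_0$ vanishes, so both summands are preserved and tensoring with the identity concludes. Your closing caveat --- that the horizontal block is the restriction of $\mathbb{D}^{\trho_0}$, with eigenvalues $\alpha(\trho_0)$, rather than literally the base-orbit map $\mathbb{D}^{\rho_0}$ --- is a sharper reading of the statement's notation than the paper itself provides (its proof indeed writes $\fr{D}^{(\trho_0)}\vert_{\nablat{\rho_0}}$, not $\fr{D}^{(\rho_0)}$), and is worth keeping.
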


\begin{proof}
The $\nablat{\trho_0}$-bundle automorphisms $\fr{D}^{(\trho_0)}$ and $\fr{L}^{(\trho_0)}$ are compatible with the splitting of Equation \eqref{tautologicalsplitting}. In fact, the maps are equivariant and we can specialise to the reference point $\trho_0$, i.e. we can look at 
$$\fr{D}^{(\trho_0)}\vert_{\trho_0}, \fr{L}^{(\trho_0)}\vert_{\trho_0}\colon \nor{\trho_0}\longrightarrow\nor{\trho_0}$$ 
with, e.g. $\fr{D}^{(\trho_0)}\vert_{\trho_0}\equiv \mathrm{ad}_\bullet \trho_0$. However, we know that such maps just act by rescaling the basis $\{v_{ij}\}$ of $\nor{\trho_0}$, which is compatible with the splitting $\nor{\trho_0}=\qsp{\stab{\rho_0}}{\stab{\trho_0}}\oplus \nor{\rho_0}$. In fact, we have that $v_{ij}\in\nor{\trho_0}\iff \mathrm{ad}_{v_{ij}}\trho_0\not=0$, then
\begin{equation}
    \begin{cases}
    \mathrm{ad}_{v_{ij}}\rho_0=0 \iff v_{ij}\in \qsp{\stab{\rho_0}}{\stab{\trho_0}}\\
    \mathrm{ad}_{v_{ij}}\rho_0\not=0 \iff v_{ij}\in \nor{\rho_0}
    \end{cases}
\end{equation}
Looking at Lemma \ref{tautologicalLemma} this means that $\fr{D}^{(\trho_0)}=\fr{D}^{(\trho_0)}\vert_{\fr{V}} \oplus \fr{D}^{(\trho_0)}\vert_{\nablat{\rho_0}}$ and consequently, since $\mathbb{D}^{(\trho_0)}=\mathrm{id}\otimes \fr{D}^{(\trho_0)}$, and similarly for $\fr{L}^{(\trho_0)}$ and $\mathbb{L}^{(\trho_0)}$, it proves the claim.
\end{proof}

We can understand the behaviour of the Fisher tensor with respect to a fibration of orbits with the following
\begin{proposition}
Let $\pi\colon \orb{\trho_0}\longrightarrow \orb{\rho_0}$ be the fibration associated to the $\Lambda$-mixed states $\rho_0,\trho_0\in\fr{t}_n$. The Fisher information tensor of $\trho_0$ is $\nabla$-compatible, and it splits as
\begin{equation}
    \fr{F}^{(\trho_0)}= \fr{F}^{(\trho_0)}\vert_{\mathcal{V}} \oplus \fr{F}^{(\rho_0)}.
\end{equation}
\end{proposition}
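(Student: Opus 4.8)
The plan is to reduce the statement to the $\nabla$-compatibility of the three building blocks out of which the Fisher tensor is assembled, and then to verify separately that the trace contraction does not reintroduce mixed terms. Recall that $\fr{F}^{(\trho_0)}\vert_{\trho}=\mathrm{Tr}[\trho\, d_\ell\trho \otimes d_\ell\trho]$, with $d_\ell\trho = (\mathbb{D}^{(\trho_0)}\circ\mathbb{L}^{(\trho_0)})^*\Phi^{(\trho_0)}$. First I would record that each constituent respects the splitting $T\orb{\trho_0}=\mathcal{V}\oplus\mathcal{N}$ of Proposition \ref{Symplectic-Bott} together with the induced splitting $\nablat{\trho_0}\simeq\fr{V}^{(\trho_0)}\oplus\nablat{\rho_0}$ of Lemma \ref{tautologicalLemma}: the map $\Phi^{(\trho_0)}$ by Lemma \ref{Phicompatible}, and the maps $\mathbb{D}^{(\trho_0)},\mathbb{L}^{(\trho_0)}$ by Proposition \ref{DLcompatible}. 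Since a composition of maps that are each block-diagonal for the same splitting is again block-diagonal, the vector-valued one-form $d_\ell\trho$ is $\nabla$-compatible: it splits as $d_\ell\trho = d_\ell\trho\vert_{\mathcal{V}}\oplus d_\ell\trho\vert_{\mathcal{N}}$, with $d_\ell\trho\vert_{\mathcal{V}}$ taking values in $\fr{V}^{(\trho_0)}\simeq\qsp{\stab{\rho_0}}{\stab{\trho_0}}$ and $d_\ell\trho\vert_{\mathcal{N}}$ taking values in $\nor{\rho_0}$.

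The key step, and the one I expect to be the main obstacle, is that $\nabla$-compatibility of $d_\ell\trho$ does not by itself guarantee that of $\fr{F}^{(\trho_0)}$: the trace contraction with $\trho$ pairs the two factors and could in principle couple the vertical and horizontal values. I would therefore show directly that the cross terms vanish, i.e. that $\mathrm{Tr}[\trho\, A\, B]=0$ whenever $A\in\fr{V}^{(\trho_0)}$ and $B\in\nor{\rho_0}$. By equivariance it suffices to check this at the reference point $\trho_0$, where the pairing is computed on the off-diagonal basis $\{v_{ij}\}$: a short computation gives $\mathrm{Tr}[\trho_0\, v_{ij}\, v_{kl}]\propto\delta_{jk}\delta_{il}$, so that $v_{ij}$ pairs nontrivially only with the opposite root vector $v_{ji}$. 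Since $\alpha_{ij}(\rho_0)=0$ if and only if $\alpha_{ji}(\rho_0)=0$, the vertical and horizontal root spaces are mutually orthogonal for this pairing: a vertical $v_{ij}$ can only pair with $v_{ji}$, which is again vertical. Hence $\fr{F}^{(\trho_0)}(V,W)=0$ for $V$ vertical and $W$ horizontal, and the tensor decomposes as a direct sum with no mixed block.

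It then remains to identify the two summands. The vertical block is by construction $\fr{F}^{(\trho_0)}\vert_{\mathcal{V}}$, which I would identify with the Fisher tensor of the fibre: the fibre is the coadjoint orbit $\qsp{H^{(\rho_0)}}{H^{(\trho_0)}}\simeq\mathrm{Ad}_{H^{(\rho_0)}}\trho_0$ of the stabiliser $H^{(\rho_0)}$, and since both $d_\ell\trho\vert_{\mathcal{V}}$ and the contracting element $\trho$ restrict to the corresponding data on that sub-orbit, the whole Fisher construction restricts verbatim. For the horizontal block I would invoke the equivariant identification $\mathcal{N}\simeq\nor{\rho_0}\simeq T\orb{\rho_0}$ of Proposition \ref{Symplectic-Bott}, under which the horizontal restrictions of $\Phi$, $\mathbb{D}$ and $\mathbb{L}$ are exactly the maps entering the base construction, so that the horizontal block is the Fisher tensor $\fr{F}^{(\rho_0)}$ of the base orbit. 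This last identification is the delicate point, since the endomorphisms $\fr{D}$ and $\fr{L}$ carry the eigenvalue data of $\trho_0$ rather than of $\rho_0$; one must check that, after contraction with $\trho$ and restriction to the horizontal root spaces, the resulting coefficients reproduce those of $\fr{F}^{(\rho_0)}$ up to the diagonal rescaling intrinsic to the Fisher--K\"ahler picture. I would settle it in the explicit chart of the single-orbit computation, where both tensors are diagonal sums over the index set $I=(i,j)$ and the comparison reduces to a root-by-root identity.
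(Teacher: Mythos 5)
Your first two steps follow the paper's own route: the paper proves this proposition by writing $\fr{F}^{(\trho_0)} = (\mathbb{DL})^*\,\mathrm{Tr}\left[\rho\,(\Phi\otimes\Phi)\vert_{\trho}\right]$ and then citing exactly the two ingredients you invoke, Proposition \ref{DLcompatible} and Lemma \ref{Phicompatible}. Your middle step is in fact an improvement on the paper's argument: block-diagonality of $\Phi^{(\trho_0)}$, $\mathbb{D}^{(\trho_0)}$, $\mathbb{L}^{(\trho_0)}$ only yields the splitting of the $\nablat{\trho_0}$-valued one-form $d_\ell\trho$, and to conclude that the scalar-valued tensor $\fr{F}^{(\trho_0)}$ has no mixed block one also needs that the pairing $(X,Y)\mapsto\mathrm{Tr}[\trho XY]$ does not couple $\fr{V}^{(\trho_0)}$ to $\nablat{\rho_0}$. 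The paper leaves this implicit; your computation $\mathrm{Tr}[\trho_0 v_{ij}v_{kl}]=\lambda_i\,\delta_{jk}\delta_{il}$, together with the observation that $\alpha_{ij}(\rho_0)=0$ if and only if $\alpha_{ji}(\rho_0)=0$, closes that gap correctly.

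The final step of your plan, however, would fail as you state it: the root-by-root identity you intend to check in the chart is false. At the reference point the horizontal block of $\fr{F}^{(\trho_0)}$ is
\begin{equation*}
\fr{F}^{(\trho_0)}\big\vert_{\mathcal{N}}=\sum_{I:\,\alpha_I(\rho_0)\not=0} \left(2\tfrac{\alpha_I(\trho_0)}{\beta_I(\trho_0)}\right)^2\left(\beta_I(\trho_0)\, dz_I \odot dz_I^* - i\,\alpha_I(\trho_0)\, dz_I \wedge dz_I^*\right),
\end{equation*}
whereas the pullback of $\fr{F}^{(\rho_0)}$ carries the same expression with $\alpha_I(\rho_0),\beta_I(\rho_0)$; these eigenvalue data are independent (subject only to the stabiliser inclusion). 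Concretely, for $\trho_0=\mathrm{diag}\{1/2,1/3,1/6\}$ and $\rho_0=\mathrm{diag}\{1/2,1/4,1/4\}$, along the horizontal root $I=(1,2)$ the symmetric coefficients are $2/15$ versus $1/3$, and the ratios $\beta_I/\alpha_I$ of symmetric to antisymmetric coefficients are $5$ versus $3$, so no single rescaling repairs the discrepancy; the hedge ``up to the diagonal rescaling intrinsic to the Fisher--K\"ahler picture'' cannot be promoted to the stated equality of tensors. The way out is that the displayed equation of the Proposition should be read as precisely the $\nabla$-compatibility statement your argument establishes: the horizontal summand is a tensor of Fisher type living on $\mathcal{N}\simeq\pi^*T\orb{\rho_0}$, written $\fr{F}^{(\rho_0)}$ by abuse of notation, but carrying the eigenvalue data of $\trho_0$ rather than that of the base orbit. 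This splitting is also all that the paper's own (two-line) proof establishes; with that reading, your argument is complete, and indeed more careful than the original.
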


\begin{proof}
Notice, first, that $\fr{F}^{(\trho_0)} = (\mathbb{DL})^* \mathrm{Tr}\left[\rho \left(\Phi\otimes\Phi\right)\vert_{\trho}\right]$.  Then, we can use Proposition \ref{DLcompatible} and Lemma \ref{Phicompatible} to conclude the argument.
\end{proof}

\begin{corollary}
The Fisher structure $\fr{J}^{(\trho_0)}\colon T\orb{\trho_0} \to T\orb{\trho_0}$ is $\nabla$-compatible w.r.t. the fibration $\orb{\trho_0}\to \orb{\rho_0}$ and $\fr{J}^{(\trho_0)}\vert_F$ is a Fisher structure on the fibres.
\end{corollary}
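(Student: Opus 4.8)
The plan is to deduce both assertions from the $\nabla$-compatibility of the Fisher information tensor established in the Proposition just above, together with the coordinate-diagonal form of $\fr{J}^{(\trho_0)}$ obtained in the single-orbit theorem. First I would record that the preceding Proposition yields the block decomposition $\fr{F}^{(\trho_0)} = \fr{F}^{(\trho_0)}\vert_{\mathcal{V}} \oplus \fr{F}^{(\rho_0)}$ relative to the splitting $T\orb{\trho_0} = \mathcal{V} \oplus \mathcal{N}$ furnished by the symplectic connection of Proposition \ref{Symplectic-Bott}. Passing to antisymmetric parts, the Fisher form inherits the same structure, $\fr{W}^{(\trho_0)} = \fr{W}^{(\trho_0)}\vert_{\mathcal{V}} \oplus \fr{W}^{(\rho_0)}$, and each summand is nondegenerate because $\mathcal{V}$ and $\mathcal{N}$ are symplectically orthogonal ($\mathcal{N} \simeq \mathcal{V}^{\Omega}$ by Proposition \ref{Symplectic-Bott}). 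Since a block-diagonal invertible two-form inverts blockwise, the associated Poisson bivector inherits the decomposition $\fr{P}^{(\trho_0)} = \fr{P}^{(\trho_0)}\vert_{\mathcal{V}} \oplus \fr{P}^{(\rho_0)}$.

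I would then use that $\mathsf{C}_{(1,1)}$ acts in a chart by matrix multiplication: the contraction of two tensors that are block-diagonal for the same splitting is again block-diagonal. Hence $\fr{J}^{(\trho_0)} = \mathsf{C}_{(1,1)}(\fr{F}^{(\trho_0)}, \fr{P}^{(\trho_0)})$ splits as $\fr{J}^{(\trho_0)}\vert_{\mathcal{V}} \oplus \fr{J}^{(\trho_0)}\vert_{\mathcal{N}}$, which is precisely $\nabla$-compatibility in the sense of Definition \ref{nablaCompatible}, establishing the first assertion.

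For the second assertion, since $\mathcal{V} = TF$ the fibre restriction coincides with the vertical block $\fr{J}^{(\trho_0)}\vert_{\mathcal{V}}$. By Equation \eqref{generalcomplex} the structure $\fr{J}^{(\trho_0)}$ is diagonal in the root index $I = (i,j)$, acting by $i\beta_I(\trho_0)/\alpha_I(\trho_0)$ on each pair $(dz_I, dz_I^*)$; the vertical indices are those $I$ with $\alpha_I(\rho_0) = 0$ but $\alpha_I(\trho_0) \neq 0$ (cf. Proposition \ref{DLcompatible}). On $\mathcal{V}$ the square is therefore the negative diagonal $-(\beta_I/\alpha_I)^2 \mathbb{I}$, with roots $d_I = |\beta_I(\trho_0)/\alpha_I(\trho_0)|$ nonzero because $\alpha_I(\trho_0) \neq 0$ and $\beta_I(\trho_0) = \lambda_i + \lambda_j > 0$ by positivity of $\trho_0$; skew-adjointness is manifest from the same block-diagonal form and survives restriction to the vertical blocks, while smooth dependence on the base point is inherited. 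This exhibits $\fr{J}^{(\trho_0)}\vert_F$ as a Fisher structure on the fibres in the sense of Definitions \ref{LinFishStr} and \ref{Fish}.

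The one step deserving care is the blockwise inversion of $\fr{W}^{(\trho_0)}$: it relies on $\fr{W}$ being nondegenerate on each of $\mathcal{V}$ and $\mathcal{N}$ separately, which is exactly the symplectic orthogonality of the Bott connection from Proposition \ref{Symplectic-Bott}. Everything else follows mechanically from the diagonal coordinate representation of the Fisher structure.
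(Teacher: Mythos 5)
Your proposal is correct and follows the route the paper intends: the corollary is stated without proof precisely because it is meant to follow from the splitting $\fr{F}^{(\trho_0)}=\fr{F}^{(\trho_0)}\vert_{\mathcal V}\oplus\fr{F}^{(\rho_0)}$ by passing to antisymmetric parts, inverting blockwise, and contracting block-diagonal tensors, exactly as you do, with the fibre restriction identified as a Fisher structure via the diagonal coordinate form \eqref{generalcomplex}. Your care about blockwise nondegeneracy of $\fr{W}^{(\trho_0)}$ is well placed (one can also note simply that a block-diagonal nondegenerate form has nondegenerate blocks), and the identification of vertical indices with $\alpha_I(\rho_0)=0$, $\alpha_I(\trho_0)\neq 0$ matches the proof of Proposition \ref{DLcompatible}.
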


\section*{Outlook on further research}
The geometric structures described in this paper extend in a natural way the ones arising from geometric quantisation. In particular, the geometry behind the  KKS-symplectic form can be used to reformulate the quantisation procedure of symplectic homogeneous manifolds.

Well-known results in this area can be summarised in the \emph{orbit method} \cite{Kirillov}: a tehcnique to produce (and classify) all irreducible representations of a compact Lie group $G$, combining the theory of coadjoint orbits with geometric quantisation. It follows from a result of Kostant \cite{Kostant} that the KKS symplectic form on a co-adjoint orbit is integral if and only if the orbit passes through the integer lattice in the Weyl chamber in $\fr{t}^*$, the dual of the Cartan subalgebra of $\fr{g}$. This result singles out only certain integral orbits for which geometric quantisation is viable.

The construction we presented in Section \ref{Section:Orbits} can be adapted to general orbits of a compact Lie group but, as mentioned in Remark \ref{Remark:notall}, not all orbits will enjoy a Fisher--K\"ahler structure coming from the symmetric logarithmic derivative. It is reasonable to assume that some will: for the Fisher tensor to exist on an orbit $\orb{\eta_0}$ of $U(n)$ (with $\eta_0\in t^+$) one requires that Equation \eqref{LinearLequation} have solutions on $\nor{\eta_0}\simeq T_{\eta_0}\orb{\eta_0}$.

One interesting direction for further investigation is whether the Fisher form $\fr{W}^{\eta_0}$ on a co-adjoint orbit $\orb{\eta_0}$ gives rise to a non trivial symplectic form for which prequantisation is again viable, especially if the natural $KKS$ symplectic form is not.

More generally, analysing the topological aspects of manifolds that admit a global K\"ahler-Fisher structure might give interesting insights on the geometry of K\"ahler manifolds and homogeneous spaces.
\subsection*{Conflict of Interest Statement} On behalf of all authors, the corresponding author states that there is no conflict of interest.
\subsection*{Data Availability Statement} Data sharing not applicable to this article as no datasets were generated or analysed during the current study.

\end{document}